\documentclass[letterpaper,UKenglish,cleveref, autoref, thm-restate,runningheads]{llncs}
\usepackage[T1]{fontenc}
\usepackage{graphicx}
\usepackage{amssymb}
\usepackage{amsmath}
\usepackage{thmtools}
\usepackage{hyperref}
\usepackage{xcolor}
\usepackage{enumerate}
\usepackage{listings}
\usepackage{microtype}
\usepackage{stackengine}
\usepackage{comment}
\usepackage{array}
\usepackage{todonotes}
\usepackage[table,xcdraw]{xcolor}
\usepackage{subcaption}
\usepackage{orcidlink}

\usepackage{apptools}
\newcommand{\restateref}[1]{\IfAppendix{\hyperref[#1]{$\star$}}{\hyperref[#1*]{$\star$}}}

\newcommand{\redmst}{T_R}
\newcommand{\bluemst}{T_B}
\newcommand{\conv}{{\rm conv}}

\bibliographystyle{plainurl}% the mandatory bibstyle

\title{How many times can\\ two minimum spanning trees cross?}

%\author{Anonymous}

\author{Todor {Anti\'c}\inst{1}\orcidlink{0009-0008-6521-7987} \and
Morteza Saghafian\inst{2}\orcidlink{0000-0002-4201-5775} \and
Maria Saumell\inst{3}\orcidlink{0000-0002-4704-2609}\and
Felix Schr\"oder\inst{1}\orcidlink{0000-0001-8563-3517}\and
Josef Tkadlec\inst{4}\orcidlink{0000-0002-1097-9684}\and
Pavel Valtr\inst{1}\orcidlink{0000-0001-9267-3605}}

\institute{Department of Applied Mathematics, Faculty of Mathematics and Physics, Charles University, Czech Republic\\
\email{\{todor,schroder,valtr\}@kam.mff.cuni.cz}\and
Institute of Science and Technology Austria, Klosterneuburg, Austria\\
\email{morteza.saghafian@ist.ac.at}\and
Department of Theoretical Computer Science, Faculty of Information Technology, Czech Technical University in Prague, Czech Republic\\
\email{maria.saumell@fit.cvut.cz}\and
Computer Science Institute, Faculty of Mathematics and Physics, Charles University, Czech Republic\\
\email{josef.tkadlec@iuuk.mff.cuni.cz}}

\authorrunning{Anti\'c, Saghafian, Saumell, Schr\"oder, Tkadlec, Valtr}

\graphicspath{{Figures/}}
\DeclareMathOperator{\cross}{cr-MST}
\DeclareMathOperator{\crossAB}{cr}

\DeclareMathOperator{\CH}{CH}

\begin{document}

\maketitle

% A: Partitioning point sets to maximize the number of crossings between minimum spanning trees. Votes: Maria, Morteza, Pavel, Pepa
% B: Partitioning point sets into parts with many crossings between minimum spanning trees.
% C: How many times can two minimum spanning trees cross? Votes: Maria, Todor, Morteza, Pepa, Pavel
% --> winner is C!

% C2: How many times can two minium spanning trees cross after partitioning a planar point set?
% D: On crossings..

% A: bicolored MST crossing number vote: Todor, Maria-Morteza, Pepa
% --> winner is A!
% B: chromatic MST crossing number vote: Todor-Morteza
% C: MST chromatic crossing number vote:
% D: MST mixedness vote: Pepa-Morteza

\begin{abstract}
    Let $P$ be a generic set of $n$ points in the plane, and let $P=R\cup B$ be a coloring of $P$ in two colors. We are interested in the number of crossings between the minimum spanning trees (MSTs) of $R$ and $B$, denoted by $\crossAB(R,B)$. We define the \emph{bicolored MST crossing number} of $P$, denoted by $\cross(P)$, as $\cross(P) = \max_{P= R\cup B}(\crossAB(R,B))$. We prove a linear upper bound for $\cross(P)$ when $P$ is generic. If $P$ is dense or in convex position, we provide linear lower bounds. 
    %Lastly, if $P$ is chosen uniformly at random from the unit square, we prove that the expected number of crossings is linear.
    Lastly, if $P$ is chosen uniformly at random from the unit square and is colored uniformly at random, we prove that the expected value of $\crossAB(R,B)$ is linear.
\end{abstract}

%[We say that a point set is \textit{?tree-minimal} if each subset has a unique minimum spanning tree.]
\section{Introduction}
When two sets of points are placed in the plane, how can we tell whether they are well separated or thoroughly mixed? This intuitive question arises in a variety of contexts, from clustering and spatial statistics to ecology and, more recently, spatial biology, where understanding how different cell types are spatially arranged within tissue is of central importance, see \cite{Binnewies18,Hei12,PFRT22}.

In response to this growing interest, various measures of mixedness have been introduced. Some are classical, while others have been developed more recently, motivated by applications in spatial biology. These measures aim to capture the extent to which two or more point sets are spatially intermingled. One natural approach is to compare the connectivity structures that arise when each subset is considered independently, for instance, by constructing a spanning tree for each set and examining how those trees interact. One notable robust example that has recently attracted attention is the MST-ratio, defined as the sum of the lengths of the minimum spanning trees of the individual point sets divided by the length of the minimum spanning tree of their union. Intuitively, a higher MST-ratio indicates a greater degree of spatial mixing between the two point sets, see \cite{JMS24,CDES24,DumitrescuPach}.

Among such approaches, a particularly simple measure, suggested in \cite{JMS24}, considers the number of edge crossings between the minimum spanning trees (MSTs) of disjoint point sets. Intuitively, when the sets are well separated, their MSTs tend to avoid each other; conversely, the more the MSTs cross, the more spatially mixed the point sets are. This makes MST crossings a natural geometric proxy for mixedness and motivates us to study this measure more concretely. The crossings of minimum spanning trees have also been studied in~\cite{KanoMU05}, where it is proved that two MSTs cross at most $8n$ times where $n$ is the total number of points.\footnote{However, the proof of one of the key lemmas is missing from the paper. See Section 1.1 for details.} For a broader review on colored point sets, see also~\cite{kano2021discrete}.

We say that a planar point set is in \emph{general position} if no three of its points lie on a line. We say that a planar point set is in \emph{generic position with respect to minimum spanning trees}, or simply \emph{generic}, if it is in general position and any subset of it has a unique minimum spanning tree. For instance, any set in general position with pairwise distinct distances is generic. 

Let $R,B$ be two disjoint point sets in the plane such that $R \cup B$ is generic. We denote by $\crossAB(R,B)$ the number of crossings between the minimum spanning trees of $R$ and $B$. For a set $P$ of points in the plane in generic position, we denote by $\cross(P)$ the maximum value of $\crossAB(R,B)$, taken over all colorings $P=R\cup B$, and we call this parameter the \emph{bicolored MST crossing number} of $P$.

\subsection{Our Results} 

Jabal Ameli, Motiei and Saghafian proposed the study of the bicolored MST crossing number (although under a different name) in  [\cite{JMS24},Question C]. We establish the following upper and lower bounds on this parameter.

%Extending their line of inquiry, we establish the following upper and lower bounds on the bicolored MST crossing number in various settings.

\begin{restatable}{theorem}{genupperbound}
\label{thm-genupperbound}
There exists a constant $c > 0$ such that, for every generic set $P$ of $n$ points in the plane, $\cross(P) < cn$.
\end{restatable}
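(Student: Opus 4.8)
The plan is to charge each crossing to the one of its two edges that is shorter, breaking ties in favour of the red edge. Then a red edge $f$ of $\redmst$ collects a charge for each blue edge $g$ of $\bluemst$ crossing $f$ with $|g|\ge|f|$, and a blue edge $g$ collects a charge for each strictly longer red edge crossing it; in both cases the count is controlled by the following, which I will call the Main Lemma: \emph{there is an absolute constant $C_0$ such that, for every MST edge $e$ of length $\ell$, at most $C_0$ MST edges of the other colour of length at least $\ell$ cross $e$.} Granting it, each of the $n-2$ edges of $\redmst\cup\bluemst$ collects at most $C_0$ charges, so $\crossAB(R,B)\le C_0(n-2)$ for every colouring, whence $\cross(P)\le C_0(n-2)<cn$ with $c=C_0+1$, which proves the theorem.

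To prove the Main Lemma I would combine two facts about a Euclidean MST $T$ of a point set $S$. The first is the \emph{empty-lune property}: if $uv$ is an edge of $T$, then $S$ has no point in the lune $L(u,v)=D(u,|uv|)\cap D(v,|uv|)$, a region containing a neighbourhood of the open segment $uv$ away from its endpoints. The second is a consequence of the cut property, which I would set up as a lemma: if $uv$ and $u'v'$ are two distinct edges of $T$, then $\max\{|uu'|,|vv'|\}\ge\min\{|uv|,|u'v'|\}$. (Proof: if not, delete $uv$ from $T$ to get components $A\ni u$ and $B\ni v$; since $T$ is the unique MST of $S$, the edge $uv$ is the strictly shortest edge of the complete graph on $S$ joining $A$ and $B$. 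From $|uu'|<|uv|$ and $|vv'|<|uv|$ we get $u'\in A$ and $v'\in B$, so $u'v'$ is an edge joining $A$ and $B$ distinct from $uv$, hence $|u'v'|>|uv|$; deleting $u'v'$ instead and arguing symmetrically gives $|uv|>|u'v'|$, a contradiction.) Now fix a blue edge $e=b_1b_2$ with $|b_1b_2|=\ell$, place $b_1b_2$ on the $x$-axis, and let $f_1,\dots,f_k$ be the red edges of length $\ge\ell$ crossing $e$, where $f_i$ has endpoints $p_i$ above and $q_i$ below the axis and meets $b_1b_2$ at $x_i$. Since $f_1,\dots,f_k$ are edges of the tree $\redmst$ they are pairwise non-crossing, so $x_1p_1,\dots,x_kp_k$ form a non-crossing fan rooted on the length-$\ell$ segment $b_1b_2$, and similarly below; also the lemma gives $\max\{|p_ip_j|,|q_iq_j|\}\ge\ell$ for all $i\ne j$. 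I would then bound $k$ by partitioning the $f_i$ into $O(1)$ bundles according to the direction of $f_i$ at $x_i$ and showing each bundle has $O(1)$ edges: within a bundle, two edges with ``comparable'' geometry force one endpoint of one of them into the lune of the other (contradicting the empty-lune property, with the non-crossing fan dictating which endpoint), while a ``splayed'' bundle of long near-parallel edges is ruled out by iterating the lemma along the fan, because in such a bundle consecutive endpoints on the same side become much closer to each other than the lengths of the edges.

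The step I expect to be the real obstacle is exactly this bundle analysis, made uniform over where $x_i$ sits inside $b_1b_2$ and over all ratios of the two arm lengths $|p_ix_i|$ and $|q_ix_i|$, which can be arbitrarily large. A workable way to organise it is probably to fix $O(1)$ ``slots'': a bounded collection of directions, and inside each direction a split between a short-armed regime (both endpoints within $O(\ell)$ of the segment $b_1b_2$, handled by a packing argument that uses $\max\{|p_ip_j|,|q_iq_j|\}\ge\ell$ together with the empty lune) and a long-armed regime (handled purely by the cut-property lemma); one then checks that every slot accommodates only $O(1)$ of the $f_i$. Only a crude value of $C_0$, and hence of $c$, is needed, and I would not try to optimise it.
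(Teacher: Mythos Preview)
Your reduction is exactly the paper's: charge each crossing to its shorter edge and bound, for a fixed MST edge $e$, the number of MST edges of the other color of length $\ge|e|$ that cross it (this is Lemma~\ref{lem-constantlymanycrossings} in the paper). Your cut-property lemma is just the MST cycle property and is in fact the only structural fact about MSTs that the paper uses in that lemma's proof.

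The gap is precisely the one you name: the bundle analysis. Rather than a direction-only partition followed by a short-arm/long-arm split and a lune argument, the paper runs a single pigeonhole on \emph{three} data simultaneously: which halfplane bounded by the line through $e$ contains at least half of $f_i$; the crossing position $x_i$ along $e$ (bucketed into $O(1)$ subintervals of $e$); and the angle $\alpha_i$ that $f_i$ makes with $e$ (bucketed into $1^\circ$ bins). If too many long edges cross $e$, two of them, say $f=uv$ and $g=xy$, land in the same cell: same short side, crossing points within a small fraction of $|e|$, directions within $1^\circ$. Translating $g$ so that the crossing points coincide and applying an elementary trigonometric estimate (Lemma~\ref{lem-crossingssmallangle}: two segments of length $\ge|e|$ through a common point at angle $\le1^\circ$, with their short arms on the same side, have both same-side endpoint distances at least $0.5$ below $\max\{|f|,|g|\}$) gives $|ux|,|vy|<\max\{|f|,|g|\}$, so one of $f,g$ is the longest edge of the $4$-cycle $uvyx$, contradicting your cut-property lemma. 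The extra pigeonhole on crossing position is exactly what absorbs the unbounded arm-length ratios you were worried about, and it makes the empty-lune property unnecessary.
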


We remark that Theorem~\ref{thm-genupperbound} appears in~\cite{KanoMU05},\footnote{Even though it is not stated explicitly, the authors of \cite{KanoMU05} also assume that point sets have a unique minimum spanning tree.} but the authors did not include a proof of a crucial lemma [\cite{KanoMU05}, Lemma 3], and we also did not succeed in clarifying the proof with them. We include our proof of the lemma with a different constant (Lemma~\ref{lem-constantlymanycrossings}) for the sake of completeness.

\begin{restatable}{theorem}{genlowertwocols}
\label{thm-genlowertwocols}
If $P$ is a generic set of $n > 5$ points in the plane, $\cross(P)\ge 1$.
%Let $P$ be a generic set of $n > 5$ points in the plane.
%There is a coloring of $P = R \cup B$ such that $\crossAB(R, B) \ge 1$\maria{Isn't it better t say directly: $\cross(P)\ge 1$?}.
\end{restatable}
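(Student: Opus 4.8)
The plan is to show that for any generic point set $P$ with $n>5$ points, we can always find a two-coloring $P=R\cup B$ whose MSTs cross at least once. The natural starting point is the minimum spanning tree $T$ of the whole set $P$, which is a single tree on $n$ vertices and hence has a plane embedding with a longest edge $e=uv$ that is, in fact, a diametral-type edge of $T$ (the unique longest edge of $P$'s MST, which exists by genericity). Intuitively, I want to pick $R$ and $B$ so that the edge $e$, or some long edge, is "forced" to appear in one of the two monochromatic MSTs and then be crossed by an edge of the other.

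First I would look at a leaf $\ell$ of $T$ together with its unique neighbor $p$, and consider the edge $\ell p$. By the cycle/cut properties of MSTs, $\ell p$ is the shortest edge from $\ell$ to $P\setminus\{\ell\}$. Now I would try the following kind of coloring: put $\ell$ and a small cluster of points "near" one end into $B$, and the rest into $R$, chosen so that the MST of $B$ contains a short edge incident to $\ell$ pointing "inward", while the MST of $R$ is forced to contain a long edge that geometrically must pass on the other side — producing a crossing. A cleaner variant: choose four points $a,b,c,d$ in convex position among $P$ (possible since $n>5$ and $P$ is in general position, e.g. by taking points on the convex hull or applying the Erd\H{o}s–Szekeres-type fact that $5$ points in general position contain a convex quadrilateral), with the diagonals $ac$ and $bd$ crossing. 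Color $\{a,c\}$ with one color and $\{b,d\}$ with the other, and distribute the remaining $n-4$ points between the two color classes. The hope is to arrange the remaining points so that the MST of the red class uses the edge $ac$ (or an edge crossing segment $bd$) and the MST of the blue class uses an edge crossing it.

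The main obstacle is controlling the MSTs after adding the remaining $n-4$ points: adding points can shortcut a long edge, so the crossing pair $ac$, $bd$ need not survive in the respective MSTs. To handle this I would instead argue \emph{locally}: pick the convex quadrilateral $abcd$ to be \emph{small} and \emph{isolated}, i.e. choose it so that the four points are mutually much closer to each other than to any other point of $P$ — this is possible because in any generic set of more than $5$ points one can find such a "tight" convex quadrilateral by taking a closest pair, or a small cluster, and using general position. Then in the MST of $R$ the edge between the two red points of the cluster is forced (it is their mutual nearest neighbor and far shorter than any edge leaving the cluster), and likewise for $B$; since $abcd$ is convex with the two monochromatic diagonals crossing, these two forced edges cross, giving $\crossAB(R,B)\ge 1$. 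The remaining care is purely in making "small and isolated" precise and checking the cut property applies; genericity guarantees strict inequalities so no ties obstruct the argument.
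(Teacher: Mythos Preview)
Your argument hinges on the existence of a ``small and isolated'' convex quadrilateral, i.e.\ four points of $P$ in convex position that are mutually much closer to one another than to every other point of $P$. This existence claim is false in general. Take, for instance, $n>5$ points that are a generic perturbation of $n$ equally spaced points on a circle (or on a line, or on a $\sqrt{n}\times\sqrt{n}$ grid): every point has neighbours at essentially the same minimum distance, so no four points can be simultaneously isolated from the rest. Without isolation, your forcing argument collapses: once other same-coloured points sit at comparable distance to $a$ (or $b$), the diagonal $ac$ (resp.\ $bd$) need not be a nearest-neighbour edge and need not lie in the corresponding MST. Note also that your argument never actually uses the hypothesis $n>5$, which is a warning sign given that the paper exhibits a $5$-point set with $\cross(P)=0$.

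The paper sidesteps the need for any metric ``isolation'' by using a very asymmetric colouring and convex-hull structure instead. If $\partial\CH(P)$ has at least four vertices, colour two non-adjacent hull vertices red and everything else blue; then $T_R$ is a single segment that separates $B$ into two nonempty pieces, so some edge of $T_B$ must cross it. If $\partial\CH(P)$ is a triangle $abc$, one finds an interior point $x$ so that the red path $axb$ (or $avc$, in a subcase) encloses at least one blue point, again forcing a crossing. The key difference is that the paper makes one MST trivial (two or three points) and uses a separation argument, rather than trying to force specific edges in both MSTs simultaneously via local distance control.
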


The constant $5$ in Theorem~\ref{thm-genlowertwocols} is tight, since there exists a set of $5$ points for which no coloring gives a crossing; see Figure~\ref{fig-5points}.
%Theorem~\ref{thm-genlowertwocols} implies that $\cross(P)\ge 1$ for any set of $n>5$ points. The constant $5$ is tight, since there exists a set of $5$ points for which no coloring gives a crossing; see Figure~\ref{fig-5points}.

\begin{figure}[h]
    \centering
    \includegraphics[page = 1]{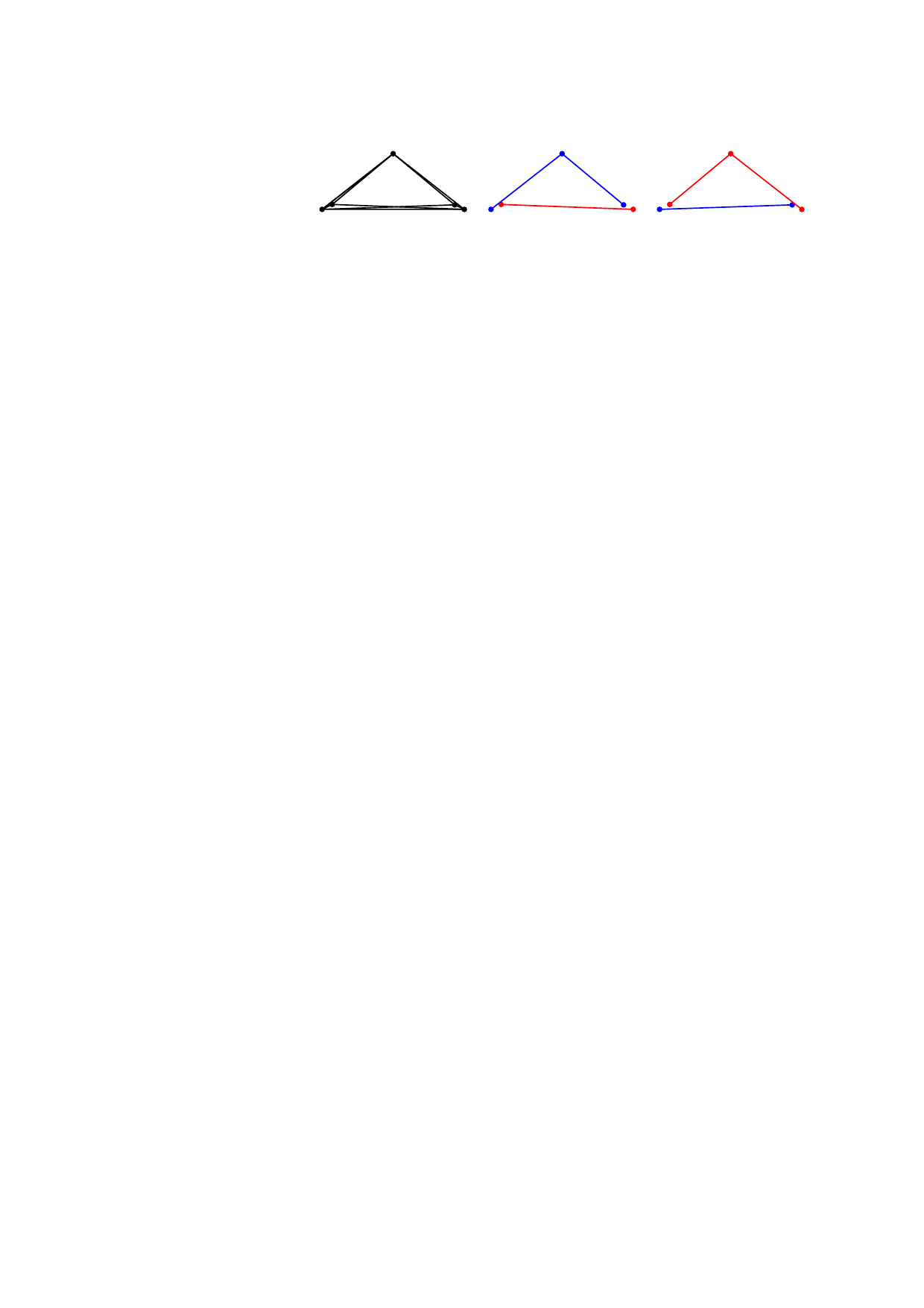}
     \caption{A set of $5$ points with bicolored MST crossing number equal to $0$. The middle and right pictures show the two MSTs for two different colorings. Interestingly, removing the top point increases the bicolored MST crossing number to $1$.}
    \label{fig-5points}
\end{figure}

% Note that $\cross(P)$  removing points from a point set $P$,  to a subset $Q\subseteq P$ might actually increase the MST chromatic crossing number.
% Note that while $\cross(P_5)=0$, removing the point $A$ we obtain a set $P_4=P_5\setminus A$ with $\cross(P_4)=1$.
Although improving the lower bound turned out to be out of reach, we can strengthen it, provided that we are allowed to discard some points of $P$ first. Note that, when a point is removed, the bicolored MST crossing number might increase -- this occurs, for instance, when removing the top point from the point set depicted in Figure~\ref{fig-5points}.
By the Erd\H os-Szekeres theorem~\cite{Erds2009}, any set of $n$ points in general position contains a set $Q$ in convex position of size $\Omega(\log n)$, which satisfies $\cross(Q)=\Omega(\log n)$ (see Theorem~\ref{thm-convex} below). We can prove an asymptotically stronger lower bound.
%Since bridging the difference between our upper and lower bounds has proven to be out of reach, we consider the following relaxation of the problem. We consider a coloring\pepa{Maybe instead phrase this as ``when allowed to discard some points'', that would nicely follow on Figure~\ref{fig-5points}. I will do that} of a point set $P$ into three colors, $P=R\cup B \cup G$ and we count the minimal number of crossing between any two MSTs of $R$ and $B$, we denote this quantity $\crossRBG(R,B,G)$ and the maximum value of $\crossRBG(R,B,G)$ over all such colorings of $P$ by $\crossthree(P)$. Note that the upper bound from Theorem~\ref{thm-genupperbound} still holds. \pepa{Maybe mention here that there is a simple lower bound $\Omega(\log n)$ by Erd\H os-Szekeres?} But we can prove a substantially\pepa{maybe ``asymptotically''?} stronger lower bound in this setting. 

\begin{restatable}{theorem}{genlowerthreecols}
\label{thm-genlowerthreecols}
Let $P$ be a generic set of $n$ points in the plane. There exists a subset $Q\subseteq P$ such that
$\cross(Q) \in \Omega(\log^2 n)$.
%Then $\crossthree(P) \in \Omega(\log^2 n)$.
\end{restatable}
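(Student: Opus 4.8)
The plan is to push past the $\Omega(\log n)$ bound that the Erd\H os--Szekeres theorem already yields (via one convex subset and Theorem~\ref{thm-convex}) by invoking the linear lower bound for points in convex position not once but at $\Omega(\log n)$ many different \emph{scales} inside $P$, and then showing that the crossings produced at different scales accumulate rather than interfere. Two situations are easy and should be disposed of first: if $P$ happens to contain a subset of size $\Omega(\log^2 n)$ that is in convex position (e.g.\ a single convex layer that large), or a subset of size $\Omega(\log^2 n)$ that is dense, we are immediately done by Theorem~\ref{thm-convex} or by the lower bound for dense sets. So we may assume $P$ is, roughly, Erd\H os--Szekeres extremal (all convex subsets have size $O(\log n)$) and spread out at every scale.

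The construction I have in mind is iterative. I would maintain a nested sequence of regions $\Omega_0 \supseteq \Omega_1 \supseteq \cdots$ with $P_i := P\cap\Omega_i$, where at step $i$ we (a) apply Erd\H os--Szekeres to $P_i$ to extract a subset $C_i\subseteq P_i$ in convex position of size $\Omega(\log|P_i|)$, and (b) pass to a sub-region $\Omega_{i+1}\subseteq\Omega_i$ (the interior of $\conv(C_i)$, one of its pockets, or a suitable sub-disk) that still contains a $|P_i|^{1-o(1)}$ fraction of $P_i$ and is ``small and far'': of diameter much smaller than that of $\Omega_i$, and at distance much larger than its own diameter from $C_1\cup\cdots\cup C_i$. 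If this can be sustained for $t=\Omega(\log n)$ steps with every $|C_i|=\Omega(\log n)$, then $Q:=C_1\cup\cdots\cup C_t$ satisfies $|Q|=\sum_i|C_i|=\Omega(\log^2 n)$. For each $i$ we 2-color $C_i$ by Theorem~\ref{thm-convex} so that $\crossAB$ restricted to $C_i$ is $\Omega(|C_i|)$, and we take the union of these colorings as the coloring $Q=R\cup B$.

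It then remains to prove additivity: $\crossAB(R,B)\ge\sum_{i=1}^t \crossAB(R\cap C_i, B\cap C_i)=\sum_i\Omega(|C_i|)=\Omega(\log^2 n)$. Because the $C_i$ sit at geometrically decreasing, well-separated scales, every edge inside some $C_i$ is shorter than every edge joining two different scales, so a Kruskal-type argument shows that $\mathrm{MST}(R)$ restricted to the annular region carrying $C_i$ equals $\mathrm{MST}(R\cap C_i)$ (and likewise for $B$); these regions are pairwise disjoint, and bridge edges between scales can only create additional crossings. The main obstacle is making step (b) work for \emph{every} generic $P$: an adversary can try to make $P$ scale-free, so that no sub-region is simultaneously point-rich, small in diameter, and far from the structure built so far. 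The heart of the proof is therefore a structural dichotomy --- either $P$ contains a dense or large-convex sub-configuration (handled directly as above), or its spreadness lets us always locate the required separated sub-region, possibly after passing to a mutually avoiding / ``stretched'' subset via a same-type-style argument --- together with a careful accounting that keeps the per-step loss in $|P_i|$ sub-polynomial, so that one genuinely gets $\Omega(\log^2 n)$ and not merely $\Omega(\log^2 n/\log\log n)$.
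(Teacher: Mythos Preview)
Your high-level strategy---iterate Erd\H{o}s--Szekeres at $\Omega(\log n)$ well-separated scales, color each convex piece alternately via Theorem~\ref{thm-convex}, and argue additivity of crossings by a Kruskal argument---is exactly the paper's. But you have correctly \emph{identified} rather than \emph{resolved} the main obstacle: you do not say how to realise step~(b), and your proposed dichotomy (dense/large-convex vs.\ spread-out, with a same-type fallback) is both unnecessary and unlikely to give a clean $\log^2 n$ without the $\log\log$ loss you already fear.

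The paper's mechanism removes all case analysis. At step~$i$, take $C_i$ to be a \emph{smallest} disk containing $\sqrt{m_i}$ of the surviving points; by minimality, the concentric disk $C'_i$ of radius $3r_i$ contains only $O(\sqrt{m_i})$ points (cover the annulus by $O(1)$ translates of $C_i$), so discarding $C'_i\setminus C_i$ is cheap. Then split the plane into $100$ equal wedges with apex at the centre of $C_i$ and keep only the richest one, losing a factor $100$. Now \emph{all subsequent} points lie outside $C'_i$ and inside a $3.6^\circ$ wedge from the centre of $C_i$; Lemma~\ref{lem-islandsdonotinteract} then guarantees that each monochromatic MST has a single edge leaving $C_i$, so the MST restricted to $C_i$ is the local MST and crossings add. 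Since $m_{i+1}\ge (m_i-O(\sqrt{m_i}))/100$, the recursion runs for $\Theta(\log n)$ rounds with $\log m_i=\Theta(\log n)$ throughout, yielding $\Omega(\log^2 n)$ directly. Note that the separation is enforced \emph{forward} (island $i$ sees all later islands in a narrow cone far away), not backward as you phrased it; this reversal of viewpoint is precisely what makes a uniform construction work for arbitrary generic~$P$.
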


%\maria{Shouldn't we have first all results on $\cross(P)$ and then all results on $\crossthree(P)$?}
%\textcolor{blue}{Todor: how I view it, we first deal with the problem in general (but we can't do it so well) so we only consider this variation in general position. For other cases we have linear lower bounds.}

%We also show linear lower bounds for $\cross(P)$ if: (i) $P$ is in convex position; (ii) $P$ is a perturbation of a $2\times \frac{n}{2}$ grid; (iii) $P$ is a set of points in a square chosen uniformly at random; (iv) the ratio between the maximum and minimum distances between the points in $P$ is bounded (we call such sets \emph{dense}).  

In the original setting where we are not allowed to discard points, we also improve our lower bound for $\cross(P)$ to linear  in four particular cases described in the theorems below. First, we focus on points sets in convex position.

\begin{restatable}{theorem}{convex}
\label{thm-convex}
Let $P$ be a generic set of $n$ points in the plane in convex position. Then $\cross(P) \ge \left\lfloor \frac{n}{2} \right\rfloor - 1$.
\end{restatable}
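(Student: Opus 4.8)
The plan is to exhibit a single explicit coloring of $P$ that already produces $\lfloor n/2\rfloor - 1$ crossings, namely the (nearly) alternating coloring around the convex hull. Label the points $p_1, p_2, \ldots, p_n$ in their cyclic order along the hull and color them so that the two colors alternate as much as possible: for even $n$ this is the perfect alternation $R = \{p_1, p_3, \ldots\}$ and $B = \{p_2, p_4, \ldots\}$, and for odd $n$ we use the same pattern and simply accept one pair of cyclically consecutive points receiving the same color (say $B$), so that $|B| = \lfloor n/2\rfloor$ and $|R| = \lceil n/2\rceil$. The crucial feature we arrange is that every two cyclically consecutive blue points have at least one red point in the arc between them.

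The core observation is a separation argument that does not use minimality of the trees at all, only connectivity. Fix any edge $e = \{b, b'\}$ of the blue MST $T_B$; since $P$ is in convex position, $e$ is a chord of the hull and splits the polygon into two regions, inducing a partition of the red points according to which of the two arcs they lie on. By the interleaving property above, each of the two arcs determined by $b$ and $b'$ contains at least one red point in its interior. Because $T_R$ is a connected spanning tree of the red points, any path in $T_R$ joining a red point on one side of $e$ to a red point on the other side must contain an edge that crosses the chord $e$; hence at least one edge of $T_R$ crosses $e$. General position guarantees that no red point lies on $e$, so this is a genuine proper crossing.

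Summing over all $|B| - 1$ edges of $T_B$, we obtain at least $|B| - 1$ crossings. These crossing points are pairwise distinct: each lies in the interior of a distinct blue edge, and two distinct edges of the plane tree $T_B$ do not share interior points. Therefore $\crossAB(R,B) \ge |B| - 1 = \lfloor n/2\rfloor - 1$, and since $\cross(P)$ is the maximum over all colorings, the bound follows. For very small $n$ the claimed quantity is $0$ and the statement holds trivially.

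The step I expect to require the most care is verifying the interleaving/separation hypothesis -- that for this particular coloring every blue MST edge really does have red points strictly on both of its arcs -- including the odd-$n$ case, where the single repeated color must be placed so that no gap between cyclically consecutive blue points is emptied of red. Once that combinatorial fact is in place, the rest is essentially the one-line connectivity argument. It is worth emphasizing that the proof never invokes the defining minimality of either MST: it works verbatim for an arbitrary spanning tree $T_R$ of $R$ together with any spanning tree $T_B$ of $B$ on $|B|$ points, which both explains the cleanliness of the bound and suggests it need not be tight in general.
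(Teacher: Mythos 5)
Your proof is correct and takes essentially the same route as the paper: the (near-)alternating coloring along the convex hull, followed by the claim that every edge of $T_B$ is crossed by at least one edge of $T_R$, summed over the $\lfloor n/2\rfloor-1$ blue edges — the paper phrases the separation dually (namely, $T_R$ together with $\partial\CH(P)$ cuts the hull into cells with no two blue points in a common cell, so each blue edge must cross a red edge), whereas you let each blue chord separate the red points and invoke connectivity of $T_R$, but both rest on the same interleaving property of the coloring. One small slip to fix in your write-up: for odd $n$ the pair of consecutive same-colored points must be \emph{red}, not blue as your parenthetical ``(say $B$)'' states, since your stated cardinalities $|B|=\lfloor n/2\rfloor$, $|R|=\lceil n/2\rceil$ and the requirement that every arc between consecutive blue points contain a red point both force the doubled color to be red.
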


Next, we consider perturbations of the $2\times \frac{n}{2}$ grid. For this family of point sets, we establish a tight lower bound for $\cross(P)$.
%The next theorem tells us that we cannot hope to match the above bound for generic convex point sets, by perturbing a $2\times \frac{n}{2}$ grid.

\begin{restatable}{theorem}{grids}\label{thm-grids}
Let $P$ be a perturbation of a $2\times \frac{n}{2}$ grid (where $n\in \mathbb{N}$ is even) that is generic and in convex position. Then, $\cross(P)\geq \lfloor\frac{5}{8}(n-2)\rfloor$.
    Further, there is a perturbation $P_0$ of the grid that is generic and in convex position such that $\cross(P_0) \le \lfloor\frac{5}{8}(n-2)\rfloor$.
%    Let $P$ be a perturbation of a $2\times \frac{n}{2}$ grid (where $n\in \mathbb{N}$ is even) that is generic and in convex position. Then, there is a coloring of $P=R\cup B$ such that $\crossAB(R,B)\ge \lfloor\frac{5}{8}(n-2)\rfloor$.\maria{Isn't it better to say: Then, $\cross(P)\geq \lfloor\frac{5}{8}(n-2)\rfloor$?}
 %   Further, there is a perturbation $P_0$ of the grid \maria{shall we add generic and in convex position?} such that $\cross(P_0) \le \lfloor\frac{5}{8}(n-2)\rfloor$.
\end{restatable}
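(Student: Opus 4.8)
Write $m=n/2$ and label the points so that the two rows of the underlying grid are $t_1,\dots,t_m$ (top, left to right) and $b_1,\dots,b_m$ (bottom). Since the perturbation is small, every pairwise distance differs from the corresponding grid distance by a tiny amount, so for any colour class that is a union of ``column pieces'' the combinatorial structure of its minimum spanning tree is the one forced by the grid metric, with ties broken by genericity. Convex position forces more: the top row must be strictly convex upward and the bottom row strictly convex downward, and the cyclic order on the hull is $t_1,\dots,t_m,b_m,\dots,b_1$. Two elementary facts drive everything. First, in the MST of such a colour class every vertex is joined essentially to its nearest neighbour(s) in the class, so the tree is a caterpillar weaving between the rows by means of \emph{rung} edges $t_ib_i$, short \emph{diagonal} edges $t_ib_{i\pm1}$, and occasional \emph{skip} edges that jump over a column not represented in the class. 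Second, two edges of opposite colour cross iff their four endpoints alternate around the hull; combining this with the convexity of the rows gives the crossing rules the proof relies on: a red diagonal $t_ib_{i+1}$ crosses precisely the opposite diagonal $b_it_{i+1}$ when present; a red rung $t_ib_i$ is crossed by \emph{every} blue edge that passes over column $i$ (a blue skip over column $i$, or a blue top/bottom chord having column $i$ strictly between its endpoints), and by convexity such an edge really is crossed; while a ``flat'' edge joining two consecutive top points, or two consecutive bottom points, crosses nothing.

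\textbf{Lower bound.} I would exhibit a colouring whose pattern on the columns is $4$-periodic: alternating split columns supply a zig-zag backbone in each colour, every backbone edge being a diagonal and hence contributing one crossing, while in each period one column is made monochromatic. A monochromatic column forces a skip edge of the other colour over it; by the rung rule this skip crosses the rung of that column, and by convexity it \emph{also} crosses the diagonal adjacent to the rung, producing one extra crossing per period beyond the plain zig-zag. With the pattern chosen so that no flat edges are created, a period of four columns yields five crossings, for a total of $\frac54(m-1)=\frac58(n-2)$; the two outermost columns of $P$ are treated by hand and absorb the $-2$ and the floor. The point that needs care is that the five crossings per period must be forced for \emph{every} admissible perturbation, not just for the grid itself, which is precisely why the rung and skip crossings are set up so that convexity alone guarantees them.

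\textbf{Upper bound.} For $P_0$ take the minimal convex perturbation of the grid: push each interior point off its row's chord by an infinitesimal amount that grows slightly toward the middle of the row. Then the MST of every subset is combinatorially rigid, namely the grid MST with canonical tie-breaking. Given any colouring $R\cup B$, assign each MST--MST crossing to one of the $m-1$ gaps between consecutive columns (a diagonal or rung crossing to the gap it sits in, a skip crossing to the gap containing the skipped column). Using rigidity together with the crossing rules, one classifies gap by gap which colour patterns on the two bounding columns can occur and how many crossings each contributes; the worst interior gap is charged at most $\frac52$ and the two extreme gaps strictly fewer, so $\crossAB(R,B)\le\frac58(n-2)$. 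Combined with the lower bound this yields $\cross(P_0)=\lfloor\frac58(n-2)\rfloor$.

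\textbf{Main obstacle.} The genuinely hard parts are (a) designing the periodic colouring so that its extra crossings survive every perturbation --- that is, arranging for them to be consequences of convexity rather than of metric accidents --- and (b) the tight charging analysis for $P_0$: one must enumerate the local colour patterns over a pair of adjacent columns, determine the exact local MST in each, and verify that none beats the $\frac52$ bound, while still showing the two end gaps lose enough to produce the floor. The remaining ingredients (the two elementary facts above and the boundary bookkeeping) are routine once this combinatorial picture is in place.
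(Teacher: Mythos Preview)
Both halves of your proposal have real gaps.

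\textbf{Lower bound.} Your pattern --- mostly alternating rainbow columns with one monochromatic column per period of four --- does not give five crossings per period. Making a column monochromatic destroys the two zig-zag crossings at the gaps flanking it and replaces them with a single rung-versus-skip crossing, so you \emph{lose} relative to the plain zig-zag rather than gain. (Your claim that the skip ``also crosses the diagonal adjacent to the rung'' fails: once the column is monochromatic, the adjacent edges of that colour become short horizontal edges, not diagonals, and the skip cannot cross them.) The paper's colouring is structurally the opposite: one rainbow column per period of four and three monochromatic ones, the middle one the opposite colour to the outer two. The step that makes this work --- and that you correctly flag as the main obstacle but do not supply --- is that which colour goes in the middle is chosen \emph{depending on the perturbation}, by comparing $|w_{4i+2}w_{4i+4}|$ with $|v_{4i+2}v_{4i+4}|$. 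This forces the length-two edge of the two-block colour into the same row as the two length-two edges of the other colour, and convexity of that row then yields two extra chord--chord crossings on top of the three rung crossings, for five in all.

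\textbf{Upper bound.} Your arithmetic is off by a factor of two: $m-1=(n-2)/2$ gaps charged at most $5/2$ each gives $5(n-2)/4$, not $5(n-2)/8$. The required average is $5/4$ per gap, and a purely local gap-by-gap analysis cannot reach it, since many individual gaps legitimately carry one full crossing. The paper's $P_0$ is not a ``minimal'' perturbation but a parity-engineered one: even- and odd-indexed columns are shifted in opposite horizontal directions so that $|w_{2j+1}w_{2j+3}|<|v_{2j+1}v_{2j+3}|$ while $|v_{2j}v_{2j+2}|<|w_{2j}w_{2j+2}|$ for all $j$. Two non-local consequences of this parity --- that two consecutive short monochromatic blocks never both produce a non-vertical crossing, and that a short--long(2)--short block pattern loses at least one of its two flanking non-vertical crossings --- are what drive the count down to $5/8$. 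The charging is then hierarchical (blocks grouped into clusters inside sections, each carrying a weight summing to $n-2$), not gap by gap.
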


We note that the condition of being generic is crucial in Theorem~\ref{thm-grids}. In fact, if we allow non-generic point sets and, in case of multiplicity of MSTs, we choose the two MSTs that minimize the number of crossings, then the bound in Theorem~\ref{thm-convex} is tight; see Lemma~\ref{lem-nonUniqueMST}.

%This is tight:\maria{not true?}

%\begin{figure}[htb]
 %   \centering
  %  \includegraphics[width=0.8\linewidth]{Figures/halfNCrossingsConvex.pdf}
   % \caption{Coloring witnessing tightness of Theorem~\ref{thm-convex}.\maria{One of the points has wrong color?}}
    %\label{fig-convex}
%\end{figure}

 Next, we look at random points uniformly distributed in $[0,1]^2$. 
 
\begin{restatable}{theorem}{random}
\label{thm-random}
There exists a constant $c>0$ satisfying the following: Let $P$ be a set of $n$ random points uniformly distributed in $[0,1]^2$. If we color $P$ in two colors $P=R \cup B$ uniformly at random, then $\mathbb{E}[\crossAB(R,B)] \ge cn$.
\end{restatable}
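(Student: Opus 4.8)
The plan is to exhibit a linear number of "witness regions" in the unit square, each of which independently contributes a crossing with constant probability. Partition $[0,1]^2$ into a grid of $\Theta(n)$ congruent axis-parallel square cells, each of side length $\Theta(1/\sqrt n)$; group the cells into $\Theta(n)$ disjoint clusters, each cluster being a small constant-size block of consecutive cells (say a $3\times 3$ or $4\times 4$ arrangement) together with a thin "moat" of empty cells separating it from all other clusters. For a fixed cluster $C$, I condition on the event $E_C$ that: (i) the central cells of $C$ contain a specific constant-size configuration of points, at least two of each color, arranged so that in isolation the red points and the blue points would have to link up through the center and cross; and (ii) the surrounding moat of cells is empty of points. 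Since each cell has expected $\Theta(1)$ points and the configuration we ask for is a fixed event of constant complexity, $\Pr[E_C] \ge \delta$ for some absolute constant $\delta>0$, uniformly over clusters.

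The key geometric step is a locality argument for MST edges: if a point $p\in R$ lies inside a cluster $C$ whose moat is empty, then its neighbors in $\redmst$ (the MST of the full red set $R$) all lie inside $C$ or its immediate surroundings — more precisely, any MST edge incident to $p$ has length $O(1/\sqrt n)$, because the emptiness of the moat guarantees a short connection exists and the MST never uses an edge longer than the longest edge on the unique tree path, which can be rerouted locally. Concretely I would argue: emptiness of a moat of width $w$ cells around $C$ forces that the portion of $R$ inside $C$ forms one or more subtrees of $\redmst$ all of whose edges stay within $C$ plus a bounded neighborhood, provided $w$ is a large enough constant; the same holds for $B$. Hence, conditioned on $E_C$, the restriction of $\redmst$ and $\bluemst$ to the cluster $C$ is determined (up to the harmless possibility of extra points in non-central cells, which only helps) by the points inside $C$, and the fixed configuration we planted produces at least one crossing inside $C$. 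Because the clusters are pairwise separated by moats, these crossing events are "local" and in fact the underlying point-placement events $\{E_C\}$ are mutually independent (disjoint regions, Poissonization or a direct argument on the multinomial distribution handles the mild dependence).

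Putting it together: let $X = \crossAB(R,B)$ and let $X_C = \mathbf{1}[\text{a crossing occurs in } C]$. Then $X \ge \sum_C X_C$ (distinct clusters contribute distinct crossings, as a crossing in $C$ involves only edges inside $C$), so
\[
\mathbb{E}[X] \;\ge\; \sum_C \Pr[E_C] \;\ge\; \delta \cdot \Theta(n) \;=\; cn
\]
for a suitable constant $c>0$, which is the claimed bound. A small technical point: one must rule out that a "planted" short edge inside $C$ is preempted by an even shorter edge to a point in a neighboring cell of $C$ that is not in the moat — but since we plant the configuration in the central cells and the configuration is chosen so that the relevant red (resp.\ blue) points are mutually closest within the whole cluster, this cannot happen; extra points only in the outer cells of $C$ can attach to the planted subtrees without destroying the crossing.

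The main obstacle I expect is the locality lemma: making precise and correct the claim that emptiness of a constant-width moat forces all MST edges incident to cluster points to stay local, and that therefore the in-cluster picture of each monochromatic MST is insensitive to the (random, uncontrolled) global structure outside. This requires a careful argument of the form "if an incident MST edge were long, the empty moat would provide a shorter rerouting along the tree path, contradicting minimality" — essentially a quantitative version of the cycle/cut property of MSTs — together with choosing the moat width and the planted configuration so that the crossing is genuinely forced and not merely likely. Once this lemma is in place, the linearity of the expectation follows immediately from linearity of expectation over the $\Theta(n)$ independent witness clusters.
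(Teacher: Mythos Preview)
Your approach is essentially the paper's: partition $[0,1]^2$ into $\Theta(n)$ cells, identify a constant-probability local event that forces a crossing, and finish by linearity of expectation (independence is never needed). The paper's execution neatly dissolves what you flag as the main obstacle: rather than a cluster-plus-moat with a separate locality lemma, each cell is subdivided into $11\times 11$ subcells and called \emph{good} when exactly four points land in the four subcells adjacent to the center and all other subcells are empty; then the four points are pairwise closer to one another than to any point outside the cell, so whenever two of them share a colour the connecting edge is forced into that colour's MST by the cut property, and the two colourings that pair opposite subcells yield a guaranteed crossing with probability $\tfrac{2}{16}=\tfrac18$ given goodness.
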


Finally, we focus on the so-called \emph{dense} point sets, that is, (generic) sets with bounded ratio between the maximum and minimum distances defined by the points.
Various problems for dense point sets have been studied since the late 1980's \cite{Alon1989,Bukh2025,DumiPachGD,DumitrescuPach,DumiToth2022,Edelsbrunner1997,Kovcs2019,Valtr1992,Valtr1996}.
%to be added again valtr1994, removed because of space

Formally, let $P$ be a set of $n$ points in generic position in the plane such that $\min_{a,b\in P}(d(a,b))=1$ (this can always be achieved by rescaling the set). The diameter of $P$ is known to be at least $c\sqrt{n}$, where $c>0$ is a constant. If $n$ is sufficiently large, $c\ge \frac{\sqrt{2}\sqrt[4]{3}}{\sqrt{\pi}}$ \cite{Valtr1992}. We say that $P$ is \textit{$\alpha$-dense} if the diameter of $P$ is less than $\alpha \sqrt{n}$, for some constant $\alpha>0$. 
%With this definition, we can prove the following theorem.

\begin{restatable}{theorem}{dense}
\label{thm-dense}
Let $P$ be a generic $\alpha$-dense set of $n$ points for $\alpha \ge \frac{\sqrt{2}\sqrt[4]{3}}{\sqrt{\pi}}$. If $n$ is sufficiently large, then $\cross(P)\ge cn$ for some constant $c>0$ that depends only on $\alpha$.
\end{restatable}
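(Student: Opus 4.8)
The plan is to exhibit $\Omega(n)$ pairwise ``independent'' local crossing gadgets inside $P$ together with a single colouring $P=R\cup B$ that turns each gadget into a distinct crossing of $\redmst$ and $\bluemst$, so that $\cross(P)\ge\crossAB(R,B)\ge cn$ for some $c=c(\alpha)>0$.

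First I would normalise so that the minimum interpoint distance equals $1$, whence $P$ lies in an axis‑parallel square $S$ of side $<2\alpha\sqrt n$, of area $O(\alpha^2 n)$. Cover $S$ by a grid of cells of side $s=s(\alpha)$ for a sufficiently large constant $s$. A packing bound shows that each cell holds $O(s^2)$ points; moreover, since the number of points in cells with at most four points is at most $4$ times the number of cells $\le 4(2\alpha\sqrt n/s+1)^2$, choosing $s$ large in terms of $\alpha$ forces a constant fraction of the points to lie in cells with at least five points. In each such cell a classical fact (any five points in general position contain four in convex position) yields points $a,c,b,d$ in convex position in this cyclic order, spanning a set of diameter $O(s)=O(1)$, and whose diagonals $ab$ and $cd$ cross. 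A greedy pruning — pick a quadruple, delete the $O(1)$ points within a prescribed constant distance of it, recurse — retains $\Omega(n)$ of these quadruples that are pairwise vertex‑disjoint and whose point sets lie at pairwise distance at least $\Delta$, for any constant $\Delta$ fixed in advance. These are the gadgets.

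On each surviving gadget colour $a,b$ red and $c,d$ blue. As $ab$ and $cd$ are crossing diagonals of the convex quadrilateral $acbd$, it suffices to force $ab\in\redmst$ and $cd\in\bluemst$. For this I would invoke the nearest‑neighbour mechanism: the shortest edge incident to a vertex always lies in the (unique) minimum spanning tree, so if $b$ is the red point nearest to $a$ then $ab\in\redmst$, and symmetrically for $cd$. Taking $\Delta$ larger than the (constant) gadget diameter rules out interference between two distinct gadgets, so each gadget contributes a genuine crossing and these crossings are distinct; summing gives $cn$.

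The main obstacle is that a dense set has no empty space: the cell hosting a gadget also contains many ``background'' points, and such a point is close to both $a$ and $c$, so colouring it red endangers ``$b$ is $a$'s nearest red point'' while colouring it blue endangers ``$d$ is $c$'s nearest blue point'' — indeed the crossing point of $ab$ and $cd$ lies in both danger zones, so no purely local isolation argument can succeed. Resolving this is the technical heart of the argument, and I expect two possible routes. The first is to promote each gadget to govern its entire cell, partitioning the cell's point set into the two crossing diagonals plus leftovers and arguing — using the freedom in the choice of the convex quadruple and of which diagonal gets which colour — that $\Omega(n)$ cells admit a split in which every leftover is provably farther from $a$ than $b$ and farther from $c$ than $d$; this is where the quantitative dependence on $\alpha$, and the degradation of $c$, would enter. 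The second is to replace the nearest‑neighbour criterion by an honest cut argument, producing for each gadget a partition $R=R_1\sqcup R_2$ with $a\in R_1$, $b\in R_2$ whose cheapest crossing edge is $ab$ (and symmetrically for $B$), which tolerates nearby red points provided they lie on the correct side of the cut, at the price of designing the global colouring and the cuts in tandem. A third, probabilistic route is to colour $P$ uniformly at random and lower bound $\mathbb{E}[\crossAB(R,B)]$, sidestepping interference entirely but requiring control of the minimum spanning tree of a random dense subset of $P$; this is closest in spirit to Theorem~\ref{thm-random}.
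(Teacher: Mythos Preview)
You correctly frame the overall strategy (assemble $\Omega(n)$ well-separated local gadgets and a single global colouring) and you correctly locate the obstacle, but none of your three routes resolves it; the proposal has a genuine gap. The convex-quadruple gadget is the wrong primitive: to force $ab\in\redmst$ via nearest neighbours you need no red point in the disk $D(a,|ab|)$, and to force $cd\in\bluemst$ no blue point in $D(c,|cd|)$; both disks contain the crossing point of the diagonals, so any background point near that crossing obstructs one of the two conditions whatever colour it receives --- and density guarantees such points exist. Route~1 cannot escape this, since the overlap of the two danger zones is intrinsic to crossing diagonals. Route~2 (``design the colouring and the cuts in tandem'') restates the difficulty rather than advancing it: with red background points near both $a$ and $b$, arranging that $ab$ is the shortest edge across some red bipartition --- while \emph{simultaneously} doing the analogous thing for $cd$ on the blue side, using the same background colouring --- is exactly the problem you started with. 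Route~3 would prove a different statement and still needs a local mechanism.

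The paper sidesteps the obstacle by changing the gadget. It colours only two points $r_1,r_2$ red per gadget and \emph{all} other points of the gadget blue, so there is no colour-assignment dilemma for background points. The red side then contributes the single edge $r_1r_2$. On the blue side, instead of forcing a specific short edge into $\bluemst$, the paper uses a topological argument: the gadget is a region filling a $101\times101$ grid (found by iterating a density-improvement lemma until the density exceeds a packing bound), whose central $21\times21$ portion has diameter smaller than its distance to anything outside the gadget; hence the MST of the central points is a subtree of the global $\bluemst$, and a path in that subtree from the centre outward must cross the segment $r_1r_2$, which is positioned to separate the centre from the exit point. The crossing is guaranteed because a connected subtree is cut by a segment, not because two prescribed short edges both survive.
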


\subsection{Notation}

In this paper, we consider only finite sets of points in the plane. For brevity, we sometimes refer to them as point sets. Unless we explicitly state otherwise, we use $\log$ to refer to the logarithm with base $2$. For a point set $X$, we use $\CH(X)$ to denote the convex hull of $X$ and $\partial\CH(X)$ to denote the boundary of $\CH(X)$. If $X$ is generic, we use $T_X$ to denote the MST of $X$.

\section{The general case}\label{sec:general}
We begin this section with the proof of Theorem~\ref{thm-genupperbound}. Before this, we need the following two technical lemmas. 

\begin{restatable}{lemma}{crossingssmallangle}
\label{lem-crossingssmallangle} Let $a,b,c,d \in \mathbb{R}^2$ be four points such that the line segments $ab$ and $cd$ cross at a point $x$. Suppose that $|ab|,|cd|\ge 2$, $|ax|\le |xb|$, $|cx| \le |xd|$, and $\angle axc\le 1^\circ$. Then $\max\{|ab|,|cd|\} \ge \max\{|ac|,|bd|\}+0.5$.

\end{restatable}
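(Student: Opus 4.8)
The plan is to set up coordinates so the crossing point $x$ is the origin and the segment $ab$ lies along a convenient axis, then exploit the hypothesis $\angle axc \le 1^\circ$ to show that $c$ and $d$ lie very close to the line through $ab$. Concretely, write $p = |ax|$, $q = |xb|$, $r = |cx|$, $s = |xd|$, so $p \le q$, $r \le s$, $p+q = |ab| \ge 2$, $r+s = |cd| \ge 2$. Since $\angle axc \le 1^\circ$ and $c,d$ are on opposite sides of $x$ along segment $cd$, the directions $x\to c$ and $x\to a$ differ by at most $1^\circ$, and likewise $x\to d$ and $x\to b$ differ by at most $1^\circ$. Now estimate $|ac|$: by the law of cosines $|ac|^2 = p^2 + r^2 - 2pr\cos(\angle axc) \le p^2 + r^2 - 2pr\cos 1^\circ = (p-r)^2 + 2pr(1-\cos 1^\circ)$. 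Since $1 - \cos 1^\circ < 1.6 \times 10^{-4}$, and $p \le |ab|$, $r \le |cd|$, the cross term is at most a tiny multiple of $\max\{|ab|,|cd|\}^2$; a symmetric bound holds for $|bd|$ with $(q-s)^2$ in place of $(p-r)^2$.

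The next step is to compare $\max\{p-r,\, q-s\}$ against $\max\{|ab|,|cd|\}$. Assume WLOG $|ab| \ge |cd|$, i.e. $p + q \ge r + s$; I want to show $\max\{p-r, q-s\} \le p+q - (r+s)$ is false in general but that at least one of $p - r$ or $q - s$ is bounded by... actually the clean inequality to aim for is $\max\{p-r,\,q-s\} \le \max\{|ab|,|cd|\} - \text{(something)}$. Observe $(p-r) + (q-s) = (p+q) - (r+s) = |ab| - |cd|$. If both $p-r$ and $q-s$ were large, their sum would be large, but that sum is exactly $|ab|-|cd| \le |ab|$; more usefully, since $r \le s$ we have $q - s \le q - r$, and since $p+q=|ab|$, we can bound things in terms of $|ab|$ and the gap. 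The upshot I expect: $|ac| \le |p - r| + \varepsilon\max\{|ab|,|cd|\}$ and $|bd| \le |q-s| + \varepsilon\max\{|ab|,|cd|\}$ with $\varepsilon$ tiny (coming from $\sqrt{2(1-\cos 1^\circ)}$-type terms, using $\sqrt{a^2+b^2}\le a+b$ loosely or a sharper expansion). Then $\max\{|ac|,|bd|\} \le \max\{|p-r|,|q-s|\} + \varepsilon\max\{|ab|,|cd|\}$, and since $\max\{|p-r|,|q-s|\} \le \max\{|ab|,|cd|\} - \min\{p+r,\,\text{etc.}\}$... the key geometric fact is that $p-r$ and $q-s$ cannot both be close to $|ab|$ because their sum is $|ab|-|cd| \le |ab| - 2$ when... no: I will instead use $r \ge 1$ and $s \ge 1$? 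That is not given. What is given is $r + s \ge 2$.

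So the real engine is: $\max\{|p-r|,\,|q-s|\}$ — since $p-r$ and $q-s$ sum to $|ab|-|cd|$ and each is at most $\max\{p,q\} = q \le |ab|$ — we get, say, $q - s \le |ab| - 1$ provided $s \ge 1$, which follows from $s \ge (r+s)/2 \ge 1$. Similarly $p - r$: we have $p \le q$ so $p \le |ab|/2$, hence $p - r \le |ab|/2 \le |ab| - 1$ since $|ab| \ge 2$. Thus $\max\{|p-r|,|q-s|\} \le |ab| - 1 = \max\{|ab|,|cd|\} - 1$, and after adding the $\varepsilon\max\{|ab|,|cd|\}$ slack and checking that $\varepsilon \cdot (\text{diameter terms})$ eats less than $0.5$ of the spare $1$ — here one must be slightly careful because $\max\{|ab|,|cd|\}$ is unbounded, so a purely multiplicative error $\varepsilon \max\{|ab|,|cd|\}$ would not be absorbable. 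This is the main obstacle: I must get the error term in $|ac|$ to be $|p-r| + O(\sqrt{pr}\cdot \sqrt{1-\cos 1^\circ})$ and argue $\sqrt{pr} \le \max\{|ab|,|cd|\}$ does not suffice; instead bound $2pr(1-\cos1^\circ) \le \tfrac{1}{2}(p-r)^2 \cdot \delta + \dots$ — i.e., fold the small term back into making the "$-1$" into "$-0.5$" by a more careful single-variable optimization of $\sqrt{(p-r)^2 + 2pr(1-\cos1^\circ)}$ subject to $p \le |ab|/2$, $r \le |ab|$, $r \ge |ab|/2 - $ something. I expect that once the geometry is reduced to this one-variable calculus estimate, it goes through with $1-\cos 1^\circ$ being so small that the "$0.5$" slack is comfortably met; verifying that final numeric inequality is the one genuinely fiddly computation, but it is routine.
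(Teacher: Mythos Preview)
Your law-of-cosines setup is sound, and the $|ac|$ bound is easy along your lines (since $p,r\le M/2$ where $M=\max\{|ab|,|cd|\}$, one gets $|ac|\le 0.51M\le M-0.98$). The genuine gap is in the $|bd|$ estimate. You correctly identify that splitting $|bd|\le |q-s|+\sqrt{2qs(1-\cos 1^\circ)}$ yields an error of order $M\sqrt{2(1-\cos 1^\circ)}\approx 0.018M$, which is \emph{not} absorbable by the fixed slack $0.5$ when $M$ is large. Your proposed remedy (``fold the small term back'' via a bound like $2pr(1-\cos 1^\circ)\le \tfrac12(p-r)^2\delta+\cdots$) is stated for the easy pair $(p,r)$ rather than the hard pair $(q,s)$, and no concrete inequality is given that actually closes the gap; the assertion that it is ``routine'' is not borne out by what you wrote. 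In short: the additive decomposition is too lossy, and you have not supplied a replacement.

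The paper avoids this obstacle entirely with a short projection argument: assuming without loss of generality that $|xd|\ge|xb|$, it drops a perpendicular from $b$ to the line $xd$ and obtains $|xd|-|bd|\ge |xb|(\cos 1^\circ-\sin 1^\circ)>0.97\cdot 1$, hence $M\ge |cd|\ge |xd|\ge |bd|+0.5$. (For $|ac|$ the paper simply notes that the angle at $x$ is below $60^\circ$, so $ac$ is not the longest side of $\triangle axc$ and thus $|ac|<\max\{|ax|,|cx|\}\le M/2$.) If you want to rescue your own route, do not split: compare $|bd|^2\le(q-s)^2+2qs(1-\cos 1^\circ)$ \emph{directly} to $(\max\{q,s\}-0.5)^2$. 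With $1\le s\le q\le M$ (WLOG), the difference $(q-0.5)^2-\bigl[(q-s)^2+2qs(1-\cos 1^\circ)\bigr]$ is concave in $s$ and nonnegative at both endpoints $s=1$ and $s=q$, which gives $|bd|\le q-0.5\le M-0.5$. That is the missing computation; once done, your approach and the paper's are equally valid, though the paper's is shorter.
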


\begin{proof}
    We argue that the inequality holds for both $|ac|$ and $|bd|$. See Figure~\ref{fig-crossingssmallangle}. For $|ac|$: Since $\angle axc\le 1^\circ <60^\circ$, in the triangle $\triangle acx$ the angle by $x$ is not the largest, so $ac$ is not the longest side. Without loss of generality, suppose $|ac|<|ax|\le \frac12|ab|$. Then $|ab|-|ac|\ge \frac12|ab|\ge 1>0.5$ as needed.

\begin{figure}[h!]
    \centering
    \includegraphics[page=1]{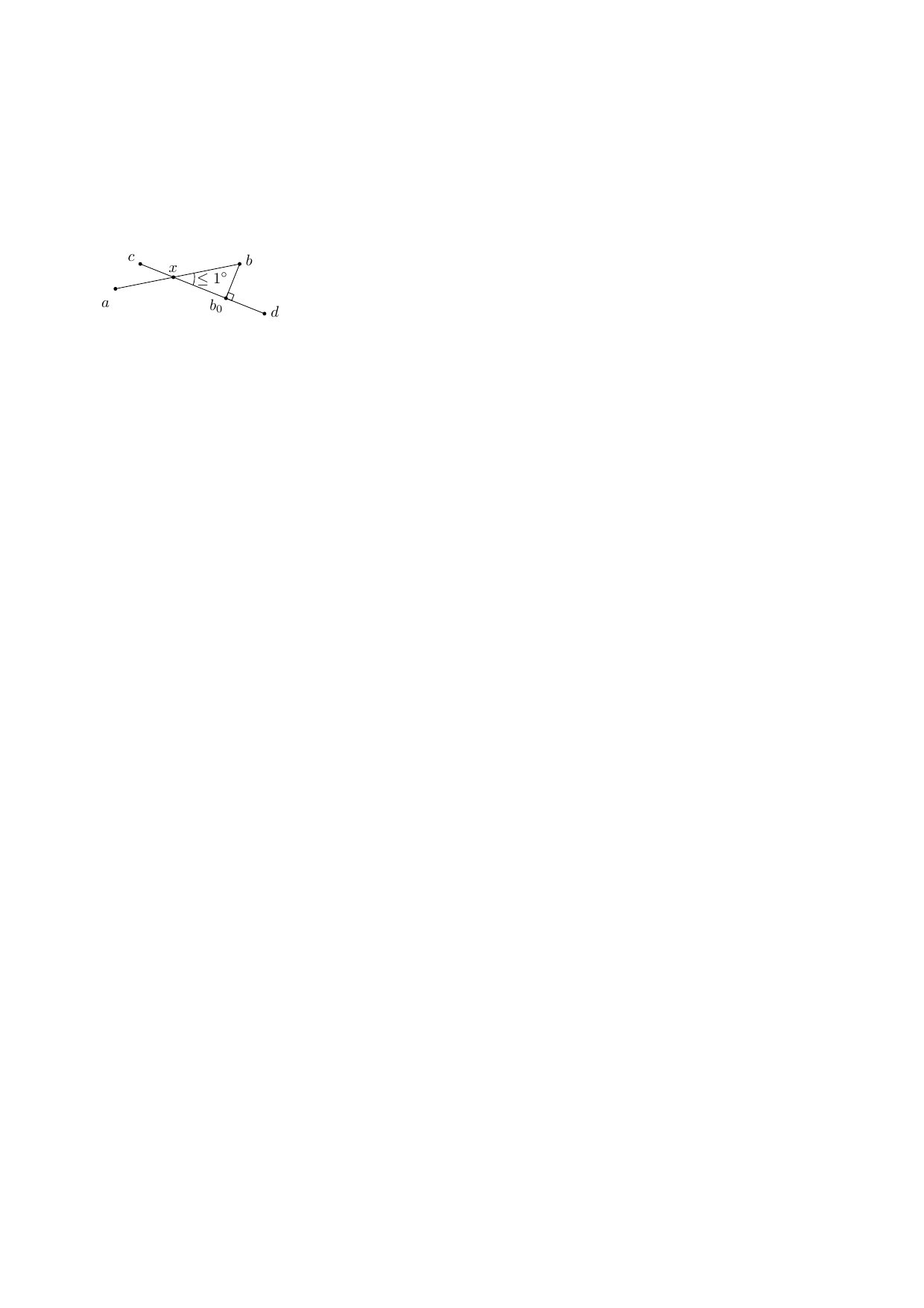}
    \caption{Proof of Lemma~\ref{lem-crossingssmallangle}.}
    \label{fig-crossingssmallangle}
\end{figure}

    For $|bd|$: Similarly, without loss of generality suppose $xd$ is the longest side in the triangle $\triangle xbd$. Let $b_0$ be the projection of $b$ onto the segment $xd$. By triangle inequality in $\triangle bb_0d$ we have $|bd|\le |bb_0|+|b_0d|$, so using the right triangle $\triangle bb_0x$ we get $|xd|-|bd|\ge |xb_0|-|bb_0|\ge |bx|\cdot(\cos 1^\circ - \sin 1^\circ)>1\cdot (0.99-0.02)>0.5 $. \qed
\end{proof}

\begin{lemma} \label{lem-constantlymanycrossings}
    Let $P,S$ be two disjoint point sets. Then there is a constant $c>0$, independent from $P$ and $S$, such that any edge $e$ of $T_P$ crosses at most $c$ edges of $T_S$ whose length is at least the length of $e$. 
\end{lemma}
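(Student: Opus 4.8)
The plan is to fix an edge $e = pq$ of $T_P$ with $|pq| = \ell$ and bound the number of edges of $T_S$ of length $\ge \ell$ that cross $e$. First I would invoke the standard "empty lune" property of minimum spanning trees: since $pq \in T_P$, the intersection of the two disks of radius $\ell$ centered at $p$ and at $q$ (the lune of $pq$) contains no point of $P$ other than $p$ and $q$; likewise, for any edge $f = rs \in T_S$ with $|rs| \ge \ell$, the lune of $rs$ contains no point of $S$, and in particular the smaller disk of radius $\ell$ centered at $r$ (resp.\ $s$) is relevant. The key geometric consequence I want is: if $f = rs$ is long (length $\ge \ell$) and crosses $e = pq$, then the crossing point $x$ is "central" on $f$ in a quantitative sense — more precisely I would partition the crossing edges according to which half of $f$ contains $x$, so WLOG $|rx| \le |xf|/2$, hence $|rx| \le |rx|$ can be large but the endpoint $r$ is forced to lie outside the lune of $pq$ while being within distance roughly $\ell$-ish of the segment $pq$... this needs care, so let me reorganize.

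The cleaner approach: group the crossing edges $f = rs$ of $T_S$ by the angle they make with $e$. Edges meeting $e$ at angle $\ge 1^\circ$ (say) are few by a direct packing argument — each such edge must cross the segment $pq$ of length $\ell$, and the endpoints of two such edges that are "close" would violate the MST lune property of $T_S$ (two points of $S$ within distance $< \ell$ of each other, both near $e$, cannot both be endpoints of $T_S$-edges of length $\ge \ell$ unless they are the endpoints of the \emph{same} edge). So I would show only $O(1)$ such edges exist. For edges meeting $e$ at a small angle ($\le 1^\circ$), this is exactly the regime of Lemma~\ref{lem-crossingssmallangle}: after splitting each such $f$ at its crossing point into a short half and a long half (relabeling endpoints so that the crossing point is the midpoint-side), Lemma~\ref{lem-crossingssmallangle} gives that the longer of $\{e, f\}$ exceeds the longer of the two "cross" connections $\{|pr|,|qs|\}$ (in the lemma's notation, $\{|ac|,|bd|\}$) by at least $0.5$. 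Since $f$ is assumed at least as long as $e$, this means $f$ is strictly longer than one of these alternative edges; an exchange-argument contradiction with $f \in T_S$ then follows \emph{if} that alternative edge joins two points of $S$ — but $p,q \in P$, not $S$, so the exchange has to be set up more carefully, probably by arguing about the cycle in $T_S \cup \{$some edge$\}$ or by a counting/charging argument rather than a single swap. Realistically I would combine this with the lune property: the inequality $|f| \ge |pr| + 0.5$ forces $r$ (and similarly $s$) into a thin region near $e$, and then a volume/packing bound on how many $S$-points can sit in that thin region while pairwise-far ($\ge \ell$) caps the count.

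Carrying this out, the main obstacle I anticipate is the small-angle case: Lemma~\ref{lem-crossingssmallangle} gives a clean additive gap, but converting "$f$ is longer than the alternative connection $pr$" into a bound on the \emph{number} of such $f$ is not a one-line swap, because $p$ and $q$ belong to the wrong color class. I expect the resolution is to show that all the small-angle crossing edges have one endpoint confined to a disk of radius $O(\ell)$ around a single point (say around $p$ or around the midpoint of $e$), combined with the fact that these endpoints are pairwise at distance $\ge \ell$ (they are endpoints of long $T_S$-edges, hence separated by the lune property — or, if not directly, by another application of the same length inequality), which bounds their number by a constant via area packing. Getting the constant region to actually have bounded radius — i.e., ruling out a small-angle crossing edge whose near endpoint is very far from $e$ — is where I'd spend the most effort, and I suspect it uses that the crossing point $x$ lies on segment $pq$ (length $\ell$) together with the small angle to pin down that the near endpoint is within $O(\ell)$ of $x$, hence of $p$.
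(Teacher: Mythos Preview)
Your proposal has a genuine gap in how you deploy Lemma~\ref{lem-crossingssmallangle}. You apply it to the pair $(e,f)$ with $e\in T_P$ and $f\in T_S$, and you correctly diagnose the resulting obstacle: the ``alternative'' edges $ac,bd$ produced by the lemma join a point of $P$ to a point of $S$, so no exchange argument in $T_S$ is available. Your proposed workaround (confine the near endpoint of each $f$ to a disk of radius $O(\ell)$ and pack) does not go through: when $f$ meets $e$ at a tiny angle, the near endpoint of $f$ can sit at distance up to $|f|/2$ from the crossing point, and $|f|$ is unbounded, so there is no fixed-radius disk in which to pack. Your separate claim that edges meeting $e$ at angle $\ge 1^\circ$ are ``few by a direct packing argument'' is also unsubstantiated --- there is no reason the endpoints of such edges must lie near $e$ or be pairwise far.

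The idea you are missing is that Lemma~\ref{lem-crossingssmallangle} should be applied to a pair of edges \emph{both in} $T_S$. The paper pigeonholes the crossing $T_S$-edges simultaneously by (i) the position of their crossing point along $e$ and (ii) their angle with $e$, and also by which side of $e$ carries the longer half. If too many edges cross $e$, two of them, say $uv$ and $xy$, have crossing points within $1/C$ of each other and angles within $1^\circ$ of each other (and short halves on the same side). Translating $xy$ by at most $1/C$ so that it crosses $e$ at the same point as $uv$, the two segments now meet at angle $\le 1^\circ$, and Lemma~\ref{lem-crossingssmallangle} (with the $0.5$ slack absorbing the translation) yields $|ux|,|vy|\le\max\{|uv|,|xy|\}$. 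Since $u,v,x,y$ are all in $S$, this makes one of $uv,xy$ the longest edge of the $4$-cycle $uvyx$ in the complete graph on $S$, contradicting its membership in $T_S$. The role of $e$ is purely as a reference segment on which to run the pigeonhole; it never enters the exchange.
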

\begin{proof}
    Let $e$ be an edge of $T_P$ and assume that $|e|=2$ and that $e$ is horizontal. Let $C>2$ be a constant. Assume that $e$ is crossed by at least $720C +2$ edges of $T_S$ whose length is at least the length of $e$. Then, at least $360 C+1$ of those edges have half or more of their length in the same halfplane determined by the line through $e$.
    %\maria{in the rest of the proof are we making any assumption on which halfplane it is? if yes, let's state it; if not, would it help?} 
    Associate to each such edge $e'$ a tuple $(\chi,\alpha)$, where $\chi \in(0,2)$ is the distance from the left endpoint of $e$ to the crossing point between $e$ and $e'$ and $\alpha \in (0,\pi)$ is the size of the angle defined by the portion of the plane lying above $e$ and to the right of $e'$. Now, by the pigeonhole principle we can find two such edges $e_1=uv,e_2=xy$ with tuples $(\chi_1,\alpha_1)$ and $(\chi_2,\alpha_2)$ such that $|\chi_1-\chi_2|\leq \frac{1}{C}$ and $|\alpha_1-\alpha_2|\leq 1^{\circ}$. In particular, the lines through $e_1$ and $e_2$ cross at an angle less than $1^{\circ}$. Let us assume that $u$ and $x$ are on the same side of $e$ and that this is the side which contains the shorter portions of $e_1,e_2$. Now, translate the edge $e_2=xy$ by distance at most $\frac{1}{C}<0.5$, so that it crosses $e$ at the same point as $e_1$. Call this translate $x'y'$.  By Lemma~\ref{lem-crossingssmallangle} we see that
    \[|ux| \le |ux'| + |xx'| \le \max\{|uv|,|x'y'|\}=\max\{|uv|,|xy|\}.\]
    Similarly, we see that 
    \[|vy| \le |vy'| + |y'y| \le \max\{|uv|,|x'y'|\}.\]
    Therefore, one of the edges $uv,xy$ is the longest edge of the cycle $uvyx$ in the complete geometric graph on $S$ and cannot be in $T_S$, which is a contradiction. \qed
\end{proof}

We remark that the constant that we obtain from Lemma~\ref{lem-constantlymanycrossings} is at least $1442$, which is significantly higher than the constant~$8$ claimed by the authors of~\cite{KanoMU05}. Unfortunately, we have not been able to reproduce their result, but we note that a more careful analysis in Lemma~\ref{lem-crossingssmallangle}, can lead to a better constant (although still far from $8$). 

We now finalize the proof of Theorem~\ref{thm-genupperbound}. 
\genupperbound*
\begin{proof} 
    Consider a coloring $P=R\cup B$. For each edge $e$ in $T_R$, let $\phi_R(e)$ be the number of crossings between $e$ and an edge $e'$ of $T_B$ such that $|e'|\ge |e|$. We define $\phi_B$ similarly. Clearly, $\crossAB(R,B) \le \sum_{e\in T_R}\phi_R(e) + \sum_{e\in T_B}\phi_B(e)$. By Lemma~\ref{lem-constantlymanycrossings}, $\phi_R$ and $\phi_B$ are upper bounded by a constant $c$, so the result follows.\qed
\end{proof}

Next, we prove Theorem~\ref{thm-genlowertwocols}. 
\newpage
\genlowertwocols*
\begin{proof}
     We split into two cases, based on the number of vertices on the convex hull of $P$. \\
    \textbf{Case 1: The convex hull of $P$ has at least $4$ vertices.} \\
    We color two non adjacent vertices of the convex hull of $P$ red and color all of the other points of $P$ blue. Then $T_R$ consists of only one edge that splits $B$  into two parts, so at least one edge of $T_B$ crosses it. See Figure~\ref{fig-generallowertwocols} (left).   \\
    \textbf{Case 2: The convex hull of $P$ has three vertices.} \\
    Let $a,b,c$ be the three vertices of the convex hull of $P$, and assume that $ab$ is at least as long as the other edges of the convex hull. Additionally, assume that $ab$ is horizontal and that $a$ is to the left of $b$. \\
    \textit{Case 2.1: There is a point $x\in P$, $x\neq c$, such that $\triangle axb$ contains a point of $P$ in the interior.} \\
    We color $a,x,b$ red and all other points of $P$ blue. The tree $T_R$ has two edges: $ax$ and $bx$. Let $Y$ be the set of points of $P$ in the interior of $\triangle axb$. At least one edge of $T_B$ connects a (blue) point of $Y$ to a blue point that is not in $Y$, and this edge crosses either $ax$ or $bx$. See Figure~\ref{fig-generallowertwocols} (middle). \\
    \textit{Case 2.2: $\triangle axb \cap (P\setminus\{a,x,b\}) = \emptyset,  \forall x\in (P \setminus \{a,b,c\})$.} \\
      Note that the assumptions of this case guarantee that no two points in $P\setminus \{a,b,c\}$ have the same $x$-coordinate, as otherwise we would be in Case 2.1. With this in mind, sort the points in $(P \setminus \{a,b,c\})$ by their $x$-coordinate.  Let $v$ be either the middle point or (if $n$ is odd) one of the two middle points in this ordering. Since $n>5$, there are points of $(P \setminus \{a,b,c\})$ both to the left and right of $v$. Assume that the $x$-coordinate of $v$ is smaller than or equal to the $x$-coordinate of $c$. This implies that $ac$ is the longest edge of $\triangle(a,v,c)$. Now, we color $a,v,c$ red and all the other points blue. By the same argument as in \textit{Case 2.1}, there is a crossing between $T_R$ and $T_B$. See Figure~\ref{fig-generallowertwocols} (right).  \qed
\end{proof}

\begin{figure}
    \centering
    \includegraphics[width=0.9\linewidth,page=2]{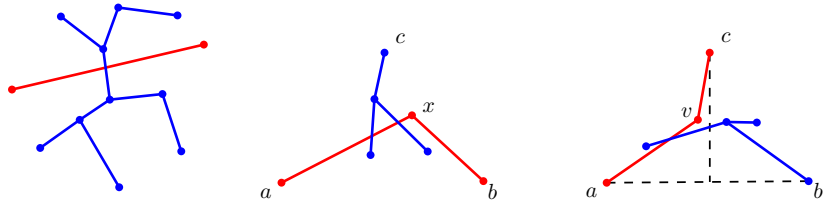}
    \caption{Colorings of the proof of Theorem~\ref{thm-genlowertwocols}.}
    \label{fig-generallowertwocols}
\end{figure}

Note that $\cross(P)$ is not monotone in $n = |P|$. For example, removing the topmost point in Figure~\ref{fig-5points} yields a point set for which the bicolored MST crossing number is $1$, but adding this point back brings the bicolored MST crossing number to $0$. This might be an indicator that finding better bounds for $\cross(P)$ is a difficult problem and further structural analysis might be needed to improve Theorem~\ref{thm-genlowertwocols}.

We conclude this section by considering a relaxed version of the problem where we are allowed to discard some points before coloring the rest in two colors.
%use three colors but only count the crossings between the two of them (in a sense, we are allowed to ``throw away'' some of the points).
Before proving Theorem~\ref{thm-genlowerthreecols}, we need a technical lemma.

\begin{restatable}{lemma}{islandsdonotinteract}
\label{lem-islandsdonotinteract}
Let $P = P_1 \cup P_2$ be a set of points in the plane such that $P_1$ is contained in a unit disk centered at point $o$, and each point $a \in P_2$ satisfies $|oa| > 3$ and is contained in a fixed wedge with apex at $o$ and angle $3.6^\circ$. Then, $T_P$ contains only one edge  connecting $P_1$ to $P_2$.
\end{restatable}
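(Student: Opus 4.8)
The plan is to argue by contradiction: suppose $T_P$ contains two distinct edges $e_1 = a_1 b_1$ and $e_2 = a_2 b_2$ joining $P_1$ to $P_2$, say with $a_1, a_2 \in P_1$ and $b_1, b_2 \in P_2$. Deleting $e_1$ from $T_P$ splits it into two components; since $T_P$ is connected, adding $e_1$ back creates no cycle, but the key point is that $T_P - e_1$ has exactly two components $C_1 \ni a_1$ and $C_1' \ni b_1$, and some edge of $T_P$ must be the unique longest edge on any cycle it closes. I want to exhibit a cycle in the complete geometric graph on $P$ that contains one of $e_1, e_2$ as its \emph{strict} longest edge, contradicting the MST (cycle) property. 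The natural candidate is the $4$-cycle $a_1\, b_1\, b_2\, a_2$ using the edges $e_1 = a_1b_1$, $e_2 = a_2 b_2$, the ``short'' edge $a_1 a_2$ inside the unit disk, and the ``long'' edge $b_1 b_2$ between two points of $P_2$ inside the narrow wedge.

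The heart of the matter is a geometric estimate showing that $|b_1 b_2| < \max\{|e_1|, |e_2|\}$, while trivially $|a_1 a_2| \le 2 < |e_i|$ (since each $b_i$ is at distance $>3$ from $o$ and each $a_i$ is within the unit disk, so $|e_i| > 2$). Write $r_i = |o b_i| > 3$; WLOG $r_1 \le r_2$. Since $b_1, b_2$ lie in a wedge of apex $o$ and angular width $3.6^\circ$, the angle $\angle b_1 o b_2$ is at most $3.6^\circ$. By the law of cosines, $|b_1 b_2|^2 = r_1^2 + r_2^2 - 2 r_1 r_2 \cos(\angle b_1 o b_2) \le (r_2 - r_1)^2 + 2 r_1 r_2 (1 - \cos 3.6^\circ)$. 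Now $2 r_1 r_2 (1 - \cos 3.6^\circ)$ is small relative to $r_1 r_2$ (roughly $r_1 r_2 / 250$), so $|b_1 b_2|$ is only slightly larger than $r_2 - r_1$. On the other hand, $|e_2| = |a_2 b_2| \ge r_2 - |o a_2| \ge r_2 - 1$ and similarly $|e_1| \ge r_1 - 1$. A short computation should give $|b_1 b_2| < \max\{|e_1|, |e_2|\}$; the cleanest route is to split into the case $r_2 - r_1$ large (then $|b_1 b_2| \approx r_2 - r_1$ is dominated by $|a_2 b_2| \ge r_2 - 1 \ge (r_2 - r_1) + 2$ when $r_1 \ge 3$) and the case $r_2 - r_1$ small (then $|b_1 b_2|$ is controlled by the $(1 - \cos 3.6^\circ)$ term while $|e_2| \ge r_2 - 1 \ge 2$ stays bounded away, using $r_2 \ge 3$; the wedge angle $3.6^\circ$ is presumably tuned so that $2 r_2^2 (1 - \cos 3.6^\circ) < (r_2 - 1)^2$ already holds for $r_2 \ge 3$, which reduces to $\sqrt{2(1-\cos 3.6^\circ)}\, r_2 < r_2 - 1$, i.e.\ $r_2 < 1/(1 - \sqrt{2(1-\cos 3.6^\circ)})$, and since $\sqrt{2(1-\cos 3.6^\circ)} \approx 0.063$ this holds comfortably — wait, this needs $r_2$ \emph{bounded above}, so instead one compares $|b_1b_2|$ with $\max\{|e_1|,|e_2|\}$ directly using both terms).

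Once $|b_1 b_2| < \max\{|e_1|, |e_2|\}$ and $|a_1 a_2| < \max\{|e_1|,|e_2|\}$ are established, the $4$-cycle $a_1 b_1 b_2 a_2$ has its unique longest edge among $\{e_1, e_2\}$ (genericity of $P$ rules out ties), so that edge cannot lie in $T_P$ by the cycle property of minimum spanning trees — contradiction. Hence at most one edge of $T_P$ connects $P_1$ to $P_2$; since $T_P$ is connected, exactly one such edge exists.

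The main obstacle is the geometric estimate in the middle paragraph: one must verify that the chosen wedge angle $3.6^\circ$ and the distance threshold $3$ are compatible, i.e.\ that for \emph{all} admissible configurations ($r_1, r_2 \ge 3$, angular separation $\le 3.6^\circ$, apex offsets $\le 1$) one genuinely has $|b_1 b_2| < \max\{|e_1|, |e_2|\}$. The degenerate-looking case is $r_1 = r_2$ with the two wedge rays maximally separated, where $|b_1 b_2| \approx r \cdot 3.6^\circ \cdot \pi/180 \approx 0.063\, r$ must beat $|e_i| \ge r - 1$; this forces a careful argument only when $r$ is large, but there $|e_i| \ge r - 1$ grows linearly while $|b_1b_2|$ grows only like $0.063 r$, so it is fine, and when $r$ is close to $3$ one uses that $0.063 \cdot r$ is tiny (around $0.19$) whereas $|e_i| \ge 2$. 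Handling the asymmetric cases $r_1 \ne r_2$ cleanly — where $|b_1 b_2|$ can be as large as $r_2 - r_1 + \text{(small)}$ but $|e_2| \ge r_2 - 1$ — is the bookkeeping one has to do carefully, and is where I would spend the most care.
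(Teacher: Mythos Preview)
Your approach matches the paper's exactly: both argue by contradiction using the cycle property of MSTs on the $4$-cycle $a_1 b_1 b_2 a_2$, with the only substantive step being the estimate $|b_1 b_2| < \max\{|e_1|,|e_2|\}$. One remark: your ``wait, this needs $r_2$ bounded above'' is a sign slip---the inequality $0.063\,r_2 < r_2-1$ rearranges to $r_2 > 1/(1-0.063)\approx 1.07$, not $r_2 < \cdots$, so it holds automatically for $r_2>3$; for the general case the paper avoids your case split by observing that if $r_2 = 3+\varphi \ge r_1 \ge 3$ then both $b_1,b_2$ lie in the annular wedge between radii $3$ and $3+\varphi$, whose diameter is at most $\max\{2(3+\varphi)\sin 1.8^\circ,\ 6\sin 1.8^\circ + \varphi\} < 2+\varphi \le r_2 - 1 \le |e_2|$.
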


\begin{proof}
    For contradiction, suppose that the MST of $P$ contains two such edges and denote them by $ab$, $cd$ with $a,c\in P_1$ as in Figure~\ref{fig-islandsdonotinteract}. To reach the contradiction, it suffices to show that both $ac$ and $bd$ are shorter than $\max\{|ab|,|cd|\}$.
    
    \begin{figure}[h]
    \centering
    \includegraphics{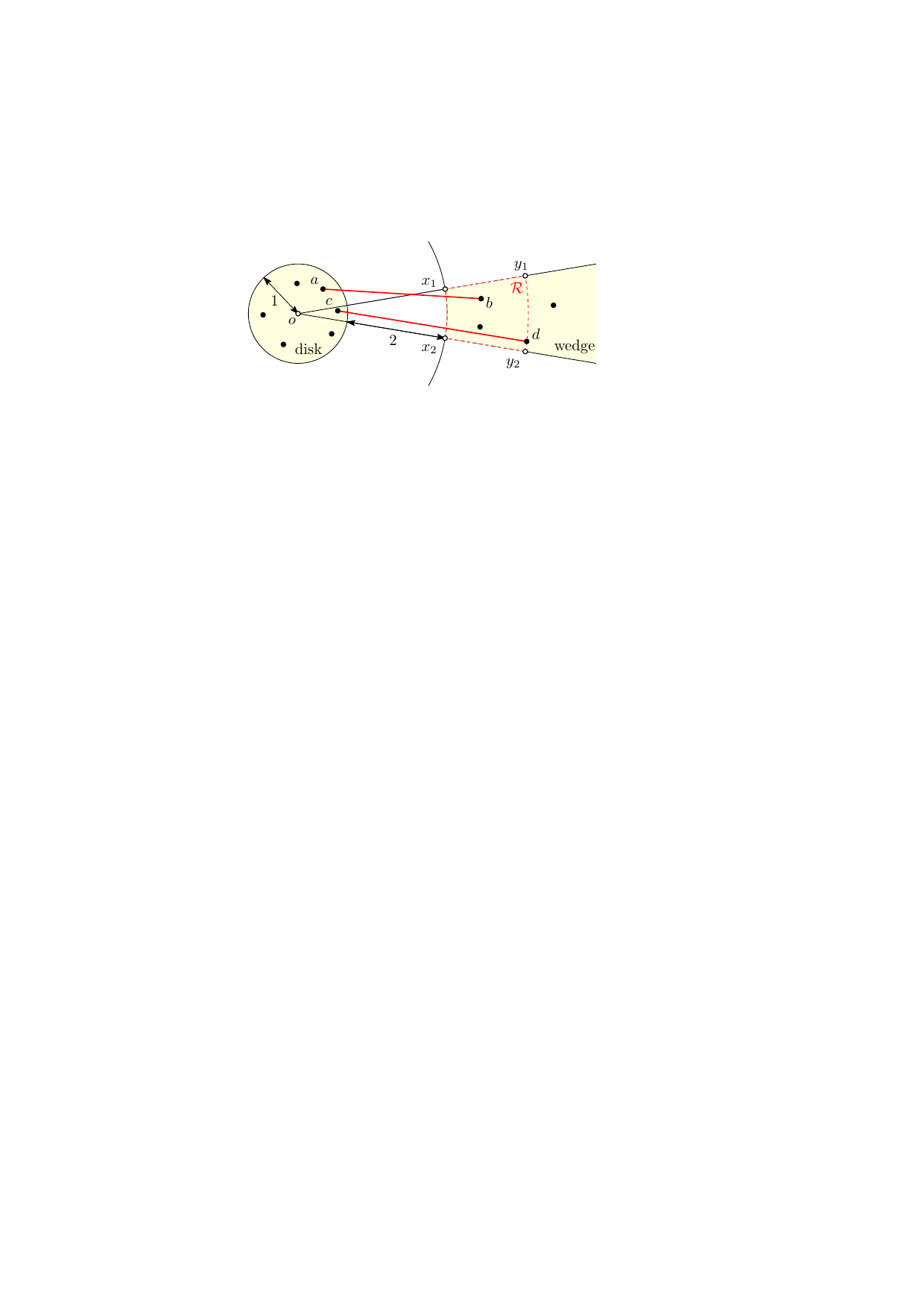}
    \caption{Proof of Lemma~\ref{lem-islandsdonotinteract}. The wedge is drawn with a larger angle for readability.}
    \label{fig-islandsdonotinteract}
    \end{figure}

    Since $|ob|,|od|> 3$ and $|oa|,|oc|\le 1$, we have $|ab|,|cd|>2$ and $|ac|\le 2$. In particular, $|ac|< \max\{|ab|,|cd|\}$ .

    Regarding $|bd|$, without loss of generality assume that $|od|\ge |ob|$ and denote $|od|=3+\varphi$, where $\varphi>0$. Then $|cd|\ge |od|-|oc|\ge2+\varphi$. Moreover, $b$ lies in a quadrangular region $\mathcal R$ delimited by the sides of the wedge, and by circles centered at origin with radii 3 and $3+\varphi$. Denote the respective intersection points by $x_1$, $x_2$, $y_1$, $y_2$ as shown in Figure~\ref{fig-islandsdonotinteract}. Note that the diameter of $\mathcal R$ is equal to either $|y_1y_2|$ or to $|x_1y_2|=|x_2y_1|$, thus $|bd|\le \max\{|y_1y_2|,|x_1y_2|\}$.
    
    From the isosceles triangle $oy_1y_2$ we have $|y_1y_2|=2\cdot (3+\varphi)\cdot \sin(1.8^\circ)<0.2\cdot (3+\varphi)$  and likewise $|x_1x_2|<0.2\cdot 3$. Since $0.2\cdot (3+\varepsilon)<2+\varepsilon$ for all $\varepsilon\ge 0$, we thus get $|y_1y_2|<|cd|$. Moreover, from triangle inequality in $\triangle x_1x_2y_2$ we have $|x_1y_2|<|x_1x_2|+|x_2y_2|<0.2\cdot 3+\varphi<2+\varphi\le|cd|$. In total, $|bd|<\max\{|y_1y_2|,|x_1y_2|\}<|cd| \leq \max\{|ab|,|cd|\}$, as required. \qed
\end{proof}

We now prove Theorem~\ref{thm-genlowerthreecols} using Lemma~\ref{lem-islandsdonotinteract} and Theorem~\ref{thm-convex}, which will be proved in a later section. 

\genlowerthreecols*
\begin{proof}
    Assume that $100^{k+1} \ge  n\ge 100^k$ for some natural number~$k$. We describe a coloring procedure in stages that gives us our desired number of crossings. The points colored green are to be discarded. Let $S$ be the set of currently colored points. We begin the process with $S=\emptyset$, $m_1=|P\setminus S|= n$ and $q_1=\sqrt{m_1}$. 
    
    In the $i^{\text{th}}$ step of the procedure (illustrated in Figure~\ref{fig-threecols}), let $m_i = |P\setminus S|,q_i=\sqrt{m_i} $ and let $C_i$ be a smallest circle containing $q_i$ points of $P\setminus S$. We call $C_i$ the $i$-th \emph{island} and denote its center $x_i$ and its radius $r_i$. By the Erd\H{o}s Szekeres theorem \cite{Erds2009}, there is a set of $\Omega(\log q_i) = \Omega(\log m_i)$ points in convex position inside $C_i$. We color these points in red and blue in such a way that colors alternate along the boundary of the convex hull; in this way, we obtain $c \log m_i$ crossings between $T_R$ and $T_B$ inside $C_i$ (for details, see Theorem~\ref{thm-convex}). We color all the other points inside $C_i$ green. Let $C'_i$ be the disk concentric with $C_i$ and with radius $3r_i$. We color all of the remaining uncolored points inside $C'_i$ green. Note that $C'_i \setminus C_i$ can be covered by $30$ copies of $C_i$~\cite{FriedmanURL}; thus, $C'_i \setminus C_i$ contains at most $30q_i$ points.  Finally, we consider a subdivision of the plane into $100$ equal wedges centered at $x_i$. Let $W_i$ be the wedge with the highest number of uncolored points; note that $|W_i| \ge \frac{|P \setminus S|}{100}$. We color the points inside all other wedges green and proceed to the next step with $m_{i+1} = |(P\setminus S ) \cap W_i| \ge \frac{m_i - 31q_i}{100}$ and $q_{i+1} = \sqrt{m_{i+1}}$; see Figure~\ref{fig-threecols}. We repeat this procedure until all of the points of $P$ are colored. 
    
        \begin{figure}[h!]
        \centering
        \includegraphics[page=3]{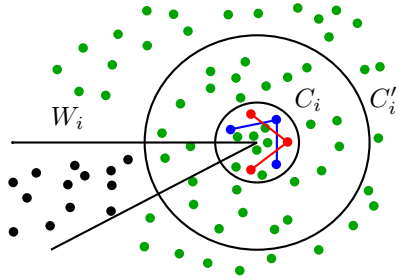}
        \caption{Situation after the $i^{th}$ step of the procedure in the proof of Theorem~\ref{thm-genlowerthreecols}. Wedge $W_i$ is drawn with a larger angle for readability.}
        \label{fig-threecols}
    \end{figure}

   Next, denote by $T_R[C_i]$ the MST of only those red points that belong to the island $C_i$. We next show that each $T_R[C_i]$ is contained in $T_R$. An analogous statement holds for the blue set.
    
    Suppose, for the sake of contradiction, that an edge $ab\in T_R[C_i]$ is not an edge of $T_R$. Take the path $p$ from $a$ to $b$ in $T_R$. If $p$ never leaves $C_i$, then $p$ together with $ab$ forms a cycle within $C_i$ whose longest edge is $ab$; this contradicts the fact that $ab\in T_R[C_i]$. Otherwise, let $j$ be the smallest index such that $C_j$ is visited by $p$ (note that $j$ might be equal to $i$). Then the path $p$ includes two edges connecting $C_j$ to one of the subsequent islands, a contradiction with Lemma~\ref{lem-islandsdonotinteract}.  %\maria{A bit more is needed here, because if $j>i$ this argument doesn't work (in this case, the problem is at $C_i$ instead of $C_j$)}\todor{I don't think $j>i$ is possible because $j$ is defined to be mininmal and $p$ connects two points in island $i$ so by default $j\le i$, right? If I'm correct this comment can be deleted.}\maria{I don't understand... Why isn't it possible that $p$ begins in $C_i$, then it moves to some $C_j$ with $j>i$, and then it returns to $C_i$?}\todor{It is possible, but then j is not minimal index such that $C_j$ is visited by $p$, since $p$ visits $C_i$ and $i<j$, right?}\maria{Right. So what I would do is, when we define $j$, add this in brackets at the end: (notice that $j$ might be equal to $i$)} 
    
    %By Lemma~\ref{lem-islandsdonotinteract}, we know that there is at most one edge between the points inside $C_i$ and $C_j$ for $i\neq j$, hence, the edges of the red and blue MSTs inside $C_i$ are also present in the red and blue MSTs for the whole set.
    
    It only remains to count the crossings between $T_R$ and $T_B$. In each step, we get $\Theta (\log m_i)$ crossings. Additionally, since $q_i \in o(m_i)$, we perform logarithmically many steps before we run out of points and terminate the procedure. \qed
\end{proof}

\section{Special cases}

%%%%%%%%%%%%%%%%%%%%%%%%%%%%%%%
\subsection{Convex Point Sets}
\convex*
\begin{proof} 
We color $P$ using an alternating coloring along $\partial \CH(P)$ (see Figure~\ref{fig-convex}). Then $||R|-|B|| \in \{0,1\}$. Assume that $|R|\ge |B|$. Then $\redmst$, together with $\partial \CH(P)$, splits $CH(P)$ into cells such that no two blue points are in the same cell. Thus, any edge of $\bluemst$ crosses at least one edge between two cells, and this edge is necessarily an edge of $\redmst$. Since $\bluemst$ has $\lfloor\frac{n}{2}\rfloor-1$ edges, we conclude that $\crossAB(R,B)\geq \lfloor\frac{n}{2}\rfloor-1$. \qed
\end{proof}

\begin{figure}[h]
    \centering
    \includegraphics[page=4]{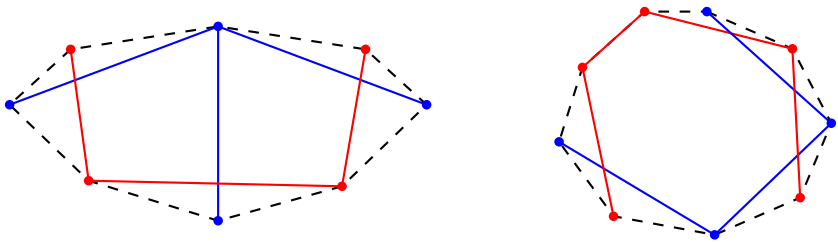}
    \caption{Coloring of the proof of Theorem~\ref{thm-convex}.}
    \label{fig-convex}
\end{figure}

We can also prove a version of Theorem~\ref{thm-convex} when the coloring is randomized. 

\begin{lemma}
    Let $P$ be a set of $n$ points in convex position. If we color $P$ with two colors $P=R\cup B$ uniformly at random, then  $\mathbb{E}[\crossAB(R,B)]\ge n/4-1$. 
\end{lemma}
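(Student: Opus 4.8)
The plan is to mimic the deterministic argument of Theorem~\ref{thm-convex} but to track, for each edge of the smaller tree, the expected number of forced crossings with the larger tree. Label the points $p_1,\dots,p_n$ in convex position order along $\partial\CH(P)$. For a uniformly random two-coloring, let $R$ be the larger color class (breaking ties arbitrarily, say in favor of $R$), so $|R|\ge |B|$. As in Theorem~\ref{thm-convex}, $\redmst$ together with $\partial\CH(P)$ subdivides $\CH(P)$ into cells, each cell containing at most one blue point; hence every edge of $\bluemst$ is forced to cross at least one edge of $\redmst$, giving $\crossAB(R,B)\ge |B|-1$ deterministically whenever $|B|\ge 1$. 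Therefore $\mathbb{E}[\crossAB(R,B)] \ge \mathbb{E}[\,\max(|B|-1,0)\,] = \mathbb{E}[\min(|R|,|B|)] - \Pr[B\ne\emptyset]$, and it remains to show $\mathbb{E}[\min(|R|,|B|)] \ge n/4$ (with an additive error absorbed into the "$-1$").

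The quantity $\min(|R|,|B|)$ where $|R|+|B|=n$ equals $\tfrac{n}{2} - \tfrac12\bigl||R|-|B|\bigr|$. Here $|R|$ is $\mathrm{Binomial}(n,1/2)$, so $\bigl||R|-|B|\bigr| = |2|R|-n|$, which is (up to scaling) the absolute deviation of a centered binomial. A clean way to finish is to observe $\mathbb{E}\bigl[|2|R|-n|\bigr] \le \sqrt{\mathbb{E}\bigl[(2|R|-n)^2\bigr]} = \sqrt{4\operatorname{Var}(|R|)} = \sqrt{n}$ by Jensen/Cauchy--Schwarz, so $\mathbb{E}[\min(|R|,|B|)] \ge \tfrac{n}{2} - \tfrac{\sqrt n}{2}$. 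This already yields $\mathbb{E}[\crossAB(R,B)] \ge \tfrac{n}{2} - \tfrac{\sqrt n}{2} - 1$, which is stronger than the claimed $n/4 - 1$ for all sufficiently large $n$; for small $n$ one checks the bound directly or notes that $\tfrac{n}{2} - \tfrac{\sqrt n}{2} \ge \tfrac n4$ already once $n \ge 4$, and small cases are trivial since the statement is vacuous or easily verified. (If one prefers to match the stated constant exactly and uniformly in $n$, it suffices to use the crude bound $\Pr[B\ne\emptyset] \le 1$ together with $\mathbb{E}[\min(|R|,|B|)] \ge n/4$, the latter being a standard fact about the binomial that can also be proven by a direct pairing/symmetrization argument.)

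I would therefore structure the write-up as: (i) recall from the proof of Theorem~\ref{thm-convex} that, given any coloring with $|B|\le|R|$ and $|B|\ge 1$, one has $\crossAB(R,B)\ge |B|-1$; (ii) take expectations to get $\mathbb{E}[\crossAB(R,B)]\ge \mathbb{E}[\min(|R|,|B|)]-1$; (iii) bound $\mathbb{E}[\min(|R|,|B|)]$ from below using the identity $\min(|R|,|B|)=\tfrac n2-\tfrac12|2|R|-n|$ and Cauchy--Schwarz with $\operatorname{Var}(|R|)=n/4$. The only mild subtlety is the handling of the tie case and the event $B=\emptyset$ (probability $2^{-n}$), both of which only cost lower-order additive terms and are comfortably swallowed by the slack between $n/2-\sqrt n/2$ and $n/4$. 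I do not expect any real obstacle here; the argument is essentially the deterministic convex-position bound plus a one-line concentration estimate for the binomial.
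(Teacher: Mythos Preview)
Your key step (i) is incorrect: the cell argument from Theorem~\ref{thm-convex} does \emph{not} yield $\crossAB(R,B)\ge |B|-1$ for an arbitrary coloring with $|R|\ge|B|$. That argument is specific to the alternating coloring, where every boundary arc between consecutive red points contains exactly one blue point, so the cells cut out by $\redmst\cup\partial\CH(P)$ separate all the blue points. For a general coloring this can fail badly. Already for four points in convex position colored $RRBB$ along the boundary, $\redmst$ and $\bluemst$ are two disjoint boundary edges and $\crossAB(R,B)=0$, while $|B|-1=1$. More generally, for a small perturbation of a regular $n$-gon with the first $\lceil n/2\rceil$ boundary points red and the rest blue, both MSTs are boundary paths and there are no crossings at all, even though $|B|-1=\lfloor n/2\rfloor-1$. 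Hence the deterministic inequality you take expectations of is simply false, and the binomial-deviation estimate downstream never gets to apply.

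The paper instead uses a lower bound that \emph{is} valid for every coloring: if the boundary has $k$ maximal blue arcs, then $\bluemst$ contains at least $k-1$ edges joining distinct arcs, and each such edge separates the red points into two nonempty parts and therefore crosses $\redmst$. Since the expected number of color changes along $\partial\CH(P)$ is $n/2$, the expected value of $k$ is $n/4$, giving $\mathbb{E}[\crossAB(R,B)]\ge n/4-1$ directly. To repair your approach you would have to replace $|B|-1$ by $k-1$, at which point you recover exactly the paper's proof.
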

\begin{proof}
    Denote the points of $P$ as they appear along $\partial \CH(P)$ by $v_1,v_2,\dots, v_n$. Let $X_i$ be the event that $v_{i+1}$ is colored differently from $v_i$, with indices counted modulo $n$. Clearly, $\mathbb{P}[X_i]=\frac{1}{2}$. Thus, by the linearity of expectation, $\mathbb{E}(\sum I(X_i))=\frac{n}{2}$, where $I$ is the indicator function.
    If $\sum I(X_i)=2k>0$, there are exactly $k$ maximal blue intervals along $\partial \CH(P)$. To connect all these intervals into one tree ($\bluemst$), at least $k-1$ edges of the tree  connect vertices from different intervals.
    Each such edge crosses at least one edge of $\redmst$. Hence, there are at least $k-1$ crossings between the two MSTs. It follows that the expected number of crossings is at least $\mathbb{E}(k-1)=\frac12\mathbb{E}(\sum I(X_i))-1=\frac n4-1.$ \qed
\end{proof}

Note that, when $P$ is a perturbation of a regular $n$-gon, the alternating coloring from Theorem~\ref{thm-convex} yields MSTs that are paths, and we get $\cross(P)= n-O(1)$. This is roughly twice as many crossings compared to what is guaranteed by Theorem~\ref{thm-convex} for any convex point set.
Next, we show that the same situation occurs at the other extreme, that is, for point sets that are very ``flat'' rather than very ``round''.
%\pepa{I will comment on round vs very flat both $\sim n$.} We now present a specific class of point sets in convex position for which Theorem~\ref{thm-convex} is very far from being optimal.
Here we call a point set in convex position \emph{flat}, if the maximum difference in $y$-coordinates between any two points is smaller than the minimum difference in $x$-coordinates.

\begin{restatable}{lemma}{convexflat}
\label{lem-convex-flat}
If $P$ is a flat convex point set with $n$ points, $\cross(P) \ge n - 3$.
\end{restatable}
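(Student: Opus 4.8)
The plan is to exploit flatness to pin down the combinatorial structure of the MST of any ``nice'' subset of $P$. First I would observe that if we order the points $p_1,\dots,p_n$ by $x$-coordinate, then for any subset $S\subseteq P$ the flatness condition forces the MST of $S$ to be exactly the path visiting the points of $S$ in $x$-order: indeed, for three points with increasing $x$-coordinates $a,b,c$ in $S$, we have $|ac| > |ab|$ and $|ac|>|bc|$, because the $x$-gap of $ac$ strictly exceeds both the $x$-gap of $ab$ and of $bc$, while all $y$-gaps are dominated by the smallest $x$-gap; so $ac$ is never a shortest reconnecting edge, and a standard exchange/cycle argument (the longest edge of any cycle is not in the MST) rules out every edge that ``jumps over'' an intermediate point of $S$. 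Hence $T_S$ is the monotone path $p_{i_1}p_{i_2}\cdots p_{i_k}$ on $S=\{p_{i_1},\dots,p_{i_k}\}$ with $i_1<\dots<i_k$.

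With this structural fact in hand, I would use the alternating coloring $R=\{p_1,p_3,p_5,\dots\}$, $B=\{p_2,p_4,p_6,\dots\}$ along the $x$-order (equivalently along $\partial\CH(P)$, up to reversing the order on the lower chain — since the set is flat and convex, the $x$-order and the convex-hull order essentially agree). Then $T_R$ is the path through the odd-indexed points and $T_B$ the path through the even-indexed points. The red path has an edge $p_{2j-1}p_{2j+1}$ for each valid $j$, and the blue path has an edge $p_{2j}p_{2j+2}$; by flatness these two ``consecutive'' edges, which share the overlapping $x$-interval $[x(p_{2j}),x(p_{2j+1})]$, must cross, because near-horizontality means the relevant segments are graphs of nearly-linear functions over overlapping $x$-intervals whose endpoints interleave, and convexity fixes their vertical order at the shared interval's ends to be opposite. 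Counting one crossing for each such overlapping pair of edges gives roughly $n-3$ crossings: there are $\lceil n/2\rceil - 1$ red edges and $\lfloor n/2\rfloor - 1$ blue edges, and consecutive red/blue edges overlap in $x$, yielding $n - O(1)$ crossings, and a careful count of the boundary terms pins it at exactly $n-3$.

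The step I expect to be the main obstacle is the crossing argument for a single pair of edges: I need to argue rigorously that red edge $p_{2j-1}p_{2j+1}$ and blue edge $p_{2j}p_{2j+2}$ genuinely cross (not merely that their $x$-projections overlap). The clean way is to track, at the two $x$-coordinates $x(p_{2j})$ and $x(p_{2j+1})$, which of the two segments lies above: at $x = x(p_{2j})$ the blue segment passes through the vertex $p_{2j}$ while the red segment is strictly interpolating between $p_{2j-1}$ and $p_{2j+1}$; at $x = x(p_{2j+1})$ the roles reverse. Convexity of $P$ and flatness together force the sign of (red height $-$ blue height) to flip between these two abscissae, so by continuity the segments cross in between. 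I would also need to handle the transition between the upper and lower convex-hull chains (the leftmost and rightmost points), which is exactly where the ``$-3$'' slack comes from, and to double-check the parity bookkeeping so the final count is $n-3$ rather than off by a constant.
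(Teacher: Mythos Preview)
Your structural observation that flatness forces the MST of any monochromatic subset to be the $x$-monotone path is correct and is exactly what the paper uses. The gap is in the coloring. Alternating along the $x$-order does \emph{not} in general give $n-3$ crossings, and your parenthetical remark that $x$-order alternation agrees with convex-hull alternation is false: the upper and lower arcs of a flat convex set can interleave arbitrarily in $x$-order, not just at the two extremes. Concretely, take $p_1=(1,0.4)$, $p_2=(2,0)$, $p_3=(3,0.5)$, $p_4=(4,0.1)$, $p_5=(5,0.4)$; this is flat and in convex position, with $p_3$ on the upper arc and $p_2,p_4$ on the lower arc. Your coloring gives $R=\{p_1,p_3,p_5\}$ and $B=\{p_2,p_4\}$, the red path lies entirely above the blue edge, and there are zero crossings, whereas $n-3=2$. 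Your sign-flip argument fails exactly here: when the red endpoints $p_{2j-1},p_{2j+1}$ lie on one arc and the blue endpoints $p_{2j},p_{2j+2}$ on the other, convexity gives no control over which segment is higher at the two test abscissae.

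The paper's remedy is to make the coloring depend on arc membership: walking left to right, if $v_i$ and $v_{i-1}$ lie on \emph{different} arcs then $v_i$ gets the \emph{same} color as $v_{i-1}$; if they lie on the same arc, the color switches (with $v_1$ red, $v_2$ blue, and $v_n$ opposite to $v_{n-1}$). On the example above this yields $R=\{p_1,p_5\}$, $B=\{p_2,p_3,p_4\}$, and the single red edge $p_1p_5$ crosses both blue edges. The paper then counts crossings via a charging argument that assigns one crossing to every point except $v_1$, $v_2$, and one of $v_{n-1},v_n$.
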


\begin{proof}
    %\pepa{Sketch, I will polish it:} %[Maybe define ``flat convex'' instead of just ``flat'']
    Label the points $v_1,v_2,\dots,v_n$ in the order from the left to the right. Since $P$ is convex, each point $v_i$ with $2\le i\le n-1$ belongs either to the upper arc or to the lower arc of the convex hull of $P$.
    
    We color the points from left to right as follows: First, color $v_1$ in red and $v_2$ in blue. Then, for $3\le i \le n-1$, use the following rule: If $v_i$ and $v_{i-1}$ lie on different arcs, color $v_i$ the \emph{same} as $v_{i-1}$. If they lie on the \emph{same} arc, color $v_i$ with the \emph{other} color. Finally, color $v_n$ with the color different from that of $v_{n-1}$. See Figure~\ref{fig-convex-flat}.
    
        \begin{figure}[h!]
            \centering
            \includegraphics{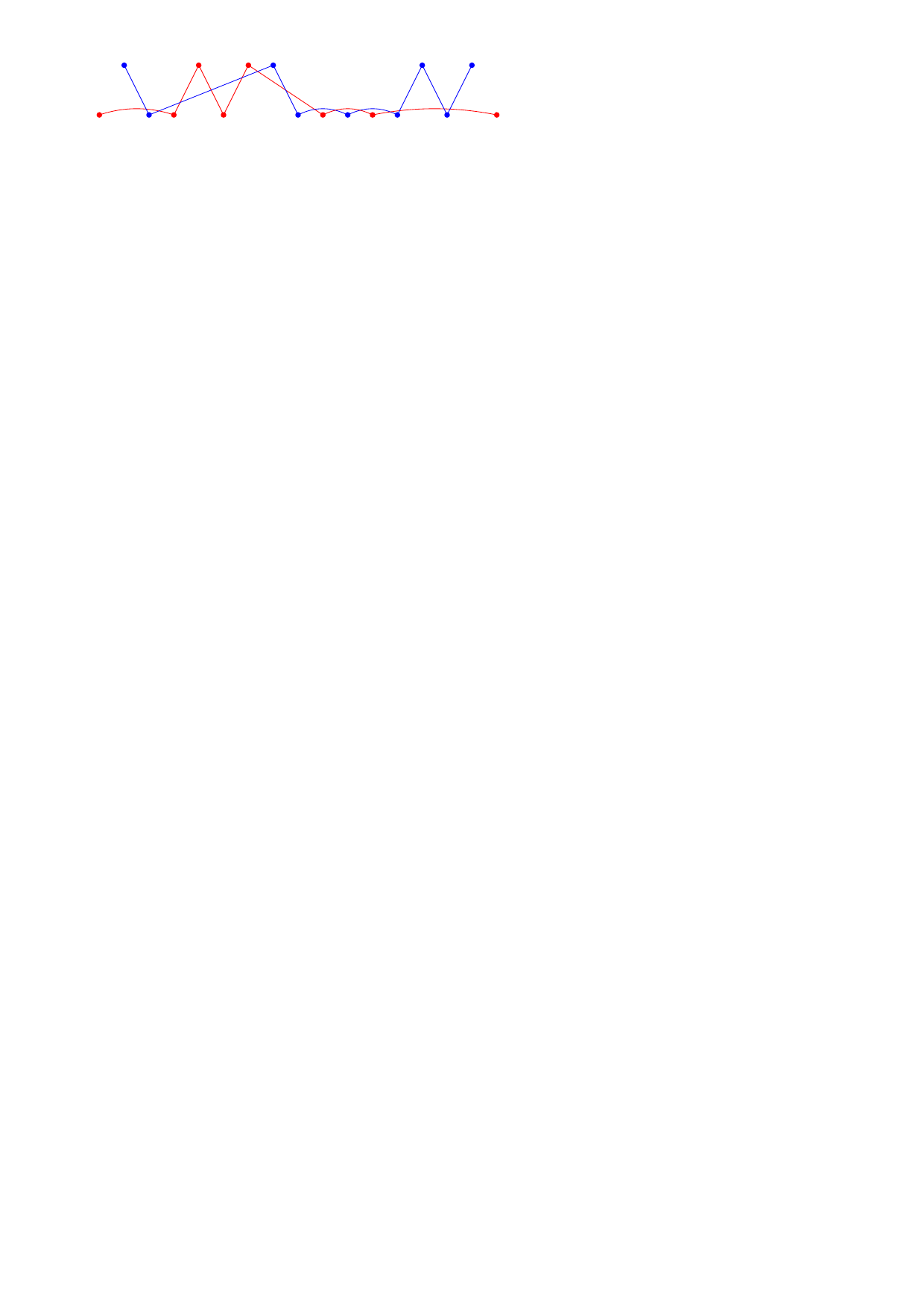}
            \caption{Proof of Lemma~\ref{lem-convex-flat}.}
            \label{fig-convex-flat}
        \end{figure}

    Now we argue that $\redmst$ and $\bluemst$ cross  $n-3$ times. Since the point set is flat, both MSTs are paths that connect the points in the order from the left to the right.
    %We will charge each crossing of the MSTs to a different point from among $A_3,A_4,\dots,A_{n-1}$. Suppose $A_iA_j$ and $A_kA_l$ are crossing MST edges with $i<j$ and $k<l$. Assign their crossing to $A_{\min \{k,l\}}$. Then 
    We count the crossings along $\bluemst$. Note that $\bluemst$ consists of segments $v_iv_k$ that either ``stay'' within one arc, or that ``jump'' to the other arc. Either way, consider all crossings on one such segment, charge the rightmost one to the point $v_k$, and charge each other one to the right endpoint of the corresponding red edge. In this way, each red point $v_j$ with $i<j<k$ gets charged exactly once, and the blue point $v_k$ gets charged too. Repeating this for all blue segments, we charge once each point of $P$, except for the two leftmost points $v_1$, $v_2$, and the (unique) red point from among the two rightmost points $\{v_{n-1},v_n\}$. Therefore, there are $n-3$ crossings. \qed
\end{proof}

%%%%%%%%%%%%%%%%%%%%%%%%%%%%%%%%%%%%%%%%%%%%%%%%%%%%%%%%%%%%
%%%%%%%%%%%%%%%%%%%%%%%%%%%%%%%%%%%%%%%%%%%%%%%%%%%%%%%%%%%%
\subsection{Perturbations of the $2 \times \frac{n}{2}$ grid}
%I think the solutions is great :)
%\maria{Might be nice to add some sentence explaining what motivated us to study these grids} \todor{Do you have some suggestion what to write?} \maria{What about rephrasing the first paragraph as follows:}

In this section, we use a perturbation of the $2 \times \frac{n}{2}$ grid to prove that Theorem~\ref{thm-convex} is best possible in the non-generic setting (Lemma~\ref{lem-nonUniqueMST}). Afterwards, we consider the same family of point sets in the generic setting, and we show that in this case it cannot be used to prove that Theorem~\ref{thm-convex} is best possible. In fact, for generic perturbations of the $2\times \frac{n}{2}$ grid, the minimum value of the bicolored MST crossing number is exactly $\lfloor\frac{5}{8}(n-2)\rfloor$  (Theorem~\ref{thm-grids}). 

%In this section we present two results. 
%First, we prove that Theorem~\ref{thm-convex} is best-possible in the non-generic setting (Lemma~\ref{lem-nonUniqueMST}). 
%%First, we prove a matching lower bound for Theorem~\ref{thm-convex} in the non-generic setting (Lemma~\ref{lem-nonUniqueMST}). 
%Afterwards we show that, for generic perturbations of the $2\times \frac{n}{2}$ grid, the minimum value of the bicolored MST crossing number 
%is exactly $\lfloor\frac{5}{8}(n-2)\rfloor$  (Theorem~\ref{thm-grids}). 

Let us first introduce some notation that we will use throughout this section. For us, the $2\times \frac{n}{2}$ grid consists of a row of grid points $v_1=(1,0),v_2=(2,0),\ldots,v_{n/2}=(n/2,0)$ at $y=0$ and a row of grid points $w_1=(1,1),w_2=(2,1),\ldots,w_{n/2}=(n/2,1)$ at $y=1$.
We say that a $2\times \frac{n}{2}$ grid has $2$ \emph{rows} and $\frac{n}{2}$ \emph{columns}. With some abuse of notation, even after perturbing the grid, we still denote the points  by $v_1,\ldots,v_{n/2}$ and $w_1,\ldots,w_{n/2}$ and talk about rows and columns. We call an edge between two points in a perturbed $2 \times \frac{n}{2}$ grid \emph{vertical} if it connects two points in the same column, \emph{horizontal} if it connects two points in the same row, and \emph{diagonal} otherwise. 

Suppose that we are given a coloring of the points of a (perturbed) $2 \times \frac{n}{2}$ grid. A column is \emph{rainbow} if the two points of the column receive different colors, and it is \emph{monochromatic} if they receive the same color. Additionally, we call a crossing between  $T_R$ and $T_B$ \emph{vertical} if one of the edges involved in the crossing is vertical, and \emph{non-vertical} otherwise.

Before proving that Theorem~\ref{thm-convex} is tight in the non-generic setting, let us formally define the bicolored MST crossing number for non-generic sets. Recall that non-generic sets might have more than one MST. For a non-generic set $P$ and a coloring $P=R\cup B$, we define $\crossAB(R,B)$ to be the minimum number of crossings among all choices of MSTs for $R$ and $B$. Then, $\cross(P)$ is again defined as the maximum value of $\crossAB(R,B)$ taken over all colorings.

\begin{restatable}{lemma}{nonUniqueMST} \label{lem-nonUniqueMST}
Let $P$ be a convex point set obtained by perturbing the $2 \times \frac{n}{2}$ grid in such a way that each row of the grid becomes a set of equidistant points on a convex curve, and the nearest neighbor of each vertex is its column neighbor. 
Then $\cross(P) \leq \frac{n}{2} - 1$.
%Then any coloring of $P$ gives at most $\frac{n}{2}-1$ MST crossings.
\end{restatable}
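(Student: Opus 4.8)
The plan is to exhibit, for every two-coloring $P=R\cup B$, a choice of MSTs for $R$ and $B$ that cross at most $\frac{n}{2}-1$ times, matching the convex-position bound of Theorem~\ref{thm-convex} from above. The key structural fact we would exploit is the hypothesis that in this perturbed grid the nearest neighbor of each vertex is its column neighbor, and that each row sits on a convex curve with equidistant points. Consequently, if a column of $P$ is monochromatic in the given coloring, the corresponding vertical edge is a legitimate (and essentially forced) MST edge for that color class, and it will cross \emph{no} edge of the other color, since the other color's MST can be taken to avoid that column. So monochromatic columns contribute nothing, and the real work is bounding the crossings coming from the rainbow columns.

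First I would set up notation: say there are $k$ rainbow columns and $\frac{n}{2}-k$ monochromatic ones. Using the convexity of each row and the column-nearest-neighbor property, I would argue that a shortest spanning tree of $R$ can be obtained by taking, within each maximal block of consecutive columns whose $R$-vertices all lie on the same arc, the horizontal path along that arc, plus a minimal set of ``connector'' edges (vertical or diagonal) joining consecutive blocks; and analogously for $B$. The point is that both MSTs decompose into horizontal sub-paths following the two convex arcs, plus a bounded number of short connectors, so that two horizontal edges of opposite colors lying on the \emph{same} arc never cross, and two lying on opposite arcs cross at most a controlled number of times. I would then charge crossings to rainbow columns: each rainbow column is where the red arc-membership and the blue arc-membership ``swap,'' and a careful accounting (similar in spirit to the charging argument in Lemma~\ref{lem-convex-flat}) shows that each rainbow column is responsible for at most one crossing among the connector/horizontal edges, while the monochromatic columns are responsible for none, and one global crossing is saved because the two trees are nested along the convex hull (the ``$-1$''). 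This yields at most $k\le \frac{n}{2}-1$ crossings, with the $-1$ coming from the fact that not every rainbow column can simultaneously force a fresh crossing (the first one along the hull does not).

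The main obstacle I anticipate is making precise the claim that one can \emph{choose} the two MSTs so that their horizontal portions genuinely follow the convex arcs and their connectors are localized at (or adjacent to) rainbow columns. Because the set is non-generic, there is freedom in the choice of MST, and one must verify that the ``arc-following'' tree is actually of minimum total length: this requires comparing the horizontal step length against diagonal alternatives, using equidistance and the convex curvature to rule out shortcuts, and handling the boundary cases where a color class occupies only one point in some column. A secondary subtlety is the interaction between two connectors of opposite color that happen to land in nearby columns; one must check that rerouting one of them (still keeping it an MST) removes the potential extra crossing. Once the arc-following MSTs are justified, the crossing count itself should follow from a clean charging scheme. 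I would present the argument by first proving the MST-structure lemma for a single color class, then doing the two-color charging, and finally collecting the bound $\cross(P)\le \frac{n}{2}-1$.
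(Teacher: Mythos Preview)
Your central claim---that the vertical edge in a monochromatic column ``will cross no edge of the other color, since the other color's MST can be taken to avoid that column''---is false, and this breaks the whole charging scheme. Consider three consecutive monochromatic columns colored red, blue, red. The blue MST contains the vertical edge $v_2w_2$. The red MST must connect $\{v_1,w_1\}$ to $\{v_3,w_3\}$, and any connecting edge ($v_1v_3$, $w_1w_3$, or a diagonal) passes through the open vertical strip of column~$2$; because the two rows lie on convex curves, such an edge genuinely crosses the segment $v_2w_2$. More dramatically, if \emph{every} column is monochromatic with colors alternating along the grid, there are zero rainbow columns, yet the two MSTs cross $\Theta(n)$ times. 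Your bound ``at most one crossing per rainbow column'' gives zero here. So monochromatic columns are not free; on the contrary, they are where most of the crossings live.

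The paper's argument runs almost in the opposite direction. The interior rainbow columns cut $P$ into parts $P_0,\dots,P_k$, and crossings are charged to parts rather than to rainbow columns. A key lemma shows that at most one edge of each MST crosses any half-integer vertical line (unless all four nearby grid points share a color), which bounds the \emph{vertical} crossings inside a part with $c$ columns by $c-1$ (and $c-2$ for the last part). The remaining non-vertical crossings are then controlled by exploiting the non-genericity: whenever a horizontal edge of length $>1.5$ joins two monochromatic columns of the same color, one may swap $v_av_b$ for $w_aw_b$ at no cost, and by alternating these row choices from left to right within a part one eliminates all non-vertical crossings inside it, leaving at most one such crossing at its right boundary. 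Summing $c(j)$ over all parts gives $n/2$, and the single saved crossing in the last part yields $n/2-1$. The MST freedom you identified is real, but it is spent on routing horizontal edges between rows, not on making the other tree avoid monochromatic columns.
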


\begin{proof}
    Suppose a coloring of $P$ with red and blue is given. Since the point set is non-generic, $R$ and $B$ might have several MSTs. 

We define ${\cal T}_B$ (resp., ${\cal T}_R$) as the set containing all MSTs of $B$ (resp., $R$). Observe that, for any column, if the two points have  color blue (resp., red), then the edge connecting them is contained in any  $T_B\in {\cal T}_B$ (resp.,  $T_R\in {\cal T}_R$).

Let $i_1,\dots,i_k$ be the indices of rainbow columns different from $C_1,C_{n/2}$, listed from left to right. They split the set $P$ into $k+1$ parts $P_0,\dots,P_k$, such that $P_0$ is the union of the columns $C_1,\dots,C_{i_1-1}$, each $P_j$ with $1\leq j< k$ is the union of the columns $C_{i_j},\dots,C_{i_{j+1}-1}$, and
$P_k$ is the union of the columns $C_{i_k},\dots,C_{n/2}$.
Since each column is in exactly one section, the sections $P_0,\dots,P_k$ have $n/2$ columns in total. 

The following claims are stated for color blue, but they also hold for color red.

\begin{restatable}{claim}{edgescrossinghalfint}\label{cl:edges-crossing-half-integer-vertical-lines-first} Let $i \in \{1,\dots,n/2-1\}$ and  $T_B\in {\cal T}_B$. Suppose that two distinct edges $e,f$ of  $T_B$ intersect the vertical line $x = i+1/2$.
Then all four points $v_i, v_{i+1}, w_i, w_{i+1}$ are blue and the edges $e,f$ are the edges $v_i v_{i+1}$ and $w_i w_{i+1}$. 
%An analogous statement holds for $\redmst\in {\cal T}_R$.
\end{restatable}

\begin{proof}
    Let $e=ab$ and $f=a'b'$, where $a,a'$ lie to the left of the line $x=i+1/2$ and $b,b'$ lie to the right of it.
If $a=a'$ then, due to the construction our point set, one of the edges $e,f$ is the longest edge of the triangle $\triangle abb'$ with all vertices colored blue. Kruskal's algorithm then gives a contradiction.
%\pavel{Maybe we could make separate lemma about the longest edge of a cycle not lying in an MST?}\maria{I agree. It is probably used in other parts of the paper, so we should find the best place for it}
The case $b=b'$ is analogous. Suppose now that the edges $e,f$ have no common vertex and they are not exactly the edges
$v_iv_{i+1}$ and $w_iw_{i+1}$. Since the edges $v_iw_{i+1}$ and $w_iv_{i+1}$ cross each other and therefore cannot be both in the blue MST, at least one of the points $a,a'b,b'$, is different from each of the four points closest to the line $x=i+1/2$ (the points $v_i,v_{i+1},w_i,w_{i+1}$). Then one of the edges $e,f$ is the longest of the blue $4$-cycle $abb'a'$, which again gives a contradiction due to the Kruskal's algorithm. \qed
\end{proof}

Claim \ref{cl:edges-crossing-half-integer-vertical-lines-first} implies that every vertical edge is involved in at most one vertical crossing. Hence, no matter how the two MSTs are chosen, in the convex hull of any part $P_j$ with $c(j)$ columns, there are at most $c(j)-1$ vertical crossings: There is no vertical crossing involving a  possible vertical edge connecting $v_{i_j}$ and $w_{i_j}$ because the first column of the part is rainbow or, in the case of $P_0$, $C_0$.
For $P_k$, there is also no vertical crossing involving a possible edge connecting $v_{n/2}$ and $w_{n/2}$,
thus the convex hull of $P_k$ contains at most $c(k)-2$ crossings.
In the remainder of the proof, we show that it is possible to choose the two MSTs in such a way that the remaining non-vertical crossings can be assigned to parts so that at most one such crossing is assigned to each part. This implies our desired bound, since the total number of crossings will be at most $(c(0)-1+1)+(c(1)-1+1)+\dots+(c(k)-2+1)=\sum c(i)-1=n/2-1.$ 

We start with some further observations on the structure of the two MSTs. 

\begin{restatable}{claim}{sectionshorizontalvertical} \label{cl:edges-in-sections-horizontal-or-vertical} Let $j\in \{0,1,\dots, k\}$ and  $T_B\in {\cal T}_B$. 
%and let $\bluemst$ be some MST of $B$. 
All edges of $\bluemst$ with both endpoints in $P_j$ are either horizontal or vertical. 
%An analogous statement holds for $\redmst$. 
\end{restatable}

\begin{proof}
    Assume, for the sake of contradiction, that there is a diagonal edge $v_aw_b$ for some $a,b$ such that $C_a,C_b \in P_j$. We can assume that $a<b$, and therefore $C_b$ is a monochromatic (blue) column. Then, $v_aw_b$ is the longest edge of the blue 3-cycle $v_aw_bv_b$, and thus is not an edge of $\bluemst$.   \qed
\end{proof}

\begin{claim} \label{cl:choice-horizontal-edges} 
    Let $j\in\{0,1,\dots,k\}$ and let $a,b$ be such that $C_a,C_b \in P_j$ and $v_a,v_b,w_a,w_b$ are all blue. If there exists some $\bluemst\in {\cal T}_B$ containing the edge $v_av_b$, then the tree obtained by replacing $v_av_b$ by $w_aw_b$ also belongs to ${\cal T}_B$.
    %An analogous statement holds if the points are colored red. 
\end{claim}

\begin{proof}
Since $v_aw_a$ and $v_bw_b$ are contained in $\bluemst$, the new graph is also a spanning tree of $B$. Since $w_aw_b$ has the same length as $v_av_b$, the graph has minimum weight. \qed
\end{proof}

With Claims~\ref{cl:edges-crossing-half-integer-vertical-lines-first}, \ref{cl:edges-in-sections-horizontal-or-vertical} and~\ref{cl:choice-horizontal-edges} in hand, we can describe the structure of the MSTs. 
As already mentioned, if a column is monochromatic, the associated edge is present in any MST for the corresponding color. Inside every part, by Claims~\ref{cl:edges-crossing-half-integer-vertical-lines-first} and~\ref{cl:edges-in-sections-horizontal-or-vertical}, the additional edges are horizontal edges between two neighboring monochromatic columns of the same color (here ``neighboring" means neighboring \emph{if we ignore the columns of the other color}), and (possibly with the exception of $P_0$) two horizontal edges connecting the rainbow column of the section to the first monochromatic column of each color.  Finally, there is only one edge in each tree that connects two neighboring parts, and this edge is either horizontal (see $P_0$ and $P_1$ in Figure~\ref{fig-mstexample}) or diagonal (see $P_2$ and $P_3$ in Figure~\ref{fig-mstexample}).
%(if one of the section contains a single column). See Figure~\ref{fig-mstexample} for an illustration. 

\begin{figure}
    \centering
    \includegraphics[page=1]{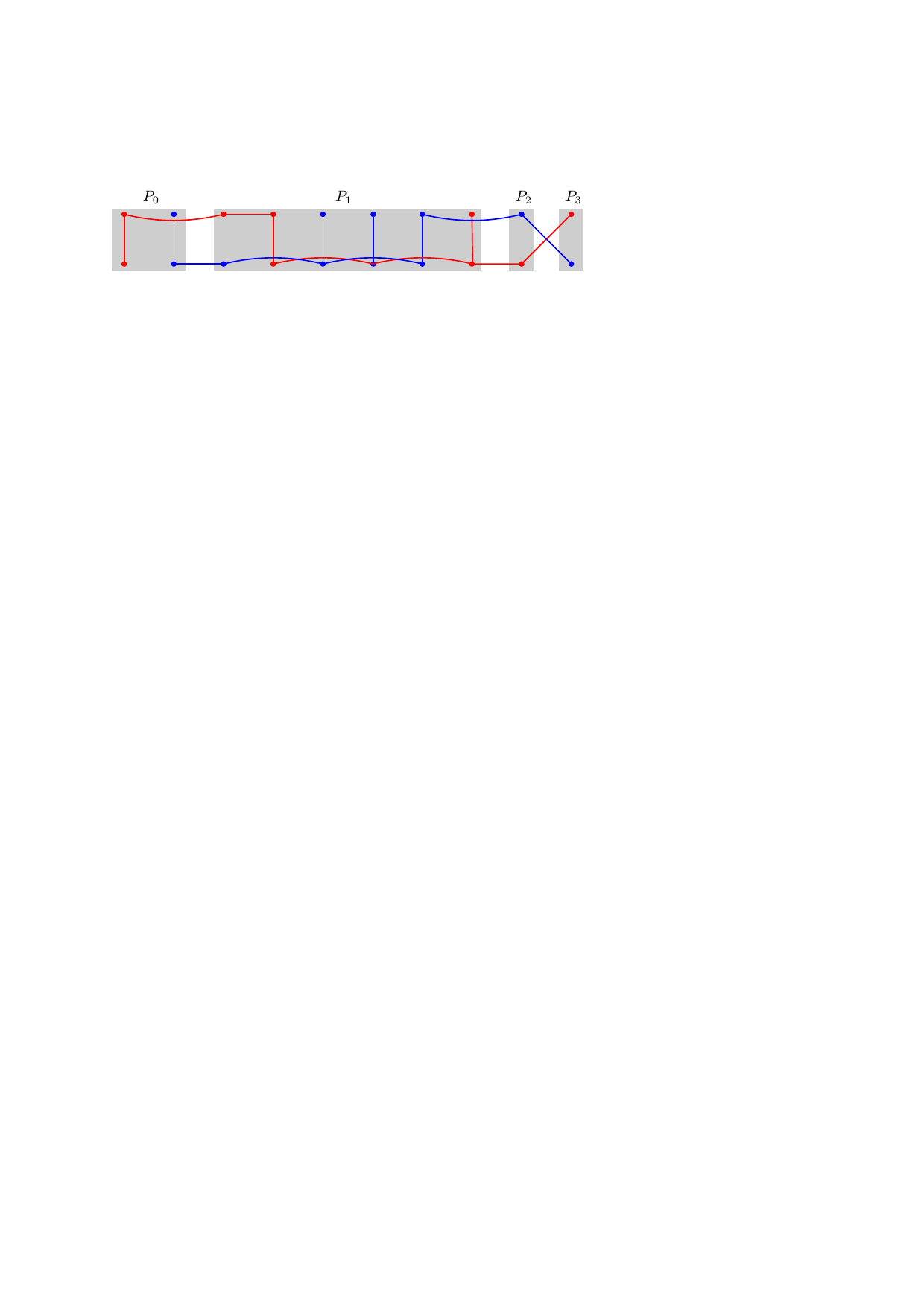}
    \caption{Possible $\bluemst\in {\cal T}_B$ and $\redmst\in {\cal T}_R$ in the proof of Lemma~\ref{lem-nonUniqueMST}.}
    \label{fig-mstexample}
\end{figure}

We now describe our choice of $\bluemst\in {\cal T}_B$ and $\redmst\in {\cal T}_R$. Suppose first that $P_j$ consists of more than one column. We describe how to select the horizontal edges of the MSTs. Since we only need to assign to $P_j$ non-vertical crossings, we only analyze horizontal edges longer than 1.5 (shorter horizontal edges do not participate in any crossing) and diagonal edges. 

If $j \neq 0$, the leftmost horizontal edge longer than 1.5 with both endpoints in $P_j$ (here left/right is decided according to the leftmost endpoint of the edge) connects the rainbow column  to the first monochromatic column in $P_j$ (if any) of one of the colors. Let us assume that this color is blue. Then this edge is present in any $\bluemst\in {\cal T}_B$. If $j=0$, the leftmost column of the part might not be rainbow, in which case, by Claim~\ref{cl:choice-horizontal-edges}, there might be two choices for the leftmost horizontal edge longer than 1.5 with both endpoints in $P_j$. We again assume that this edge is blue and  arbitrarily choose one of them. Let us now look at the next horizontal edge (i.e., second leftmost) with both endpoints in $P_j$ and longer than 1.5. This is a red edge connecting two red columns. By Claim~\ref{cl:choice-horizontal-edges}, there are two choices for this edge, so we can choose it so that it has the endpoints in the row opposite to that of the previous analyzed edge (i.e., the previous horizontal edge longer than 1.5 with both endpoints in $P_j$). In this way, the two edges do not cross. See the ``long" horizontal edges of $P_1$ in Figure~\ref{fig-mstsgoodchoice}.
We continue in this way, and at each step we choose an edge connecting points in the row opposite to that of the previously chosen edge. Eventually, we reach the end of part $P_j$, where we need to connect a point in the last monochromatic column of each color to the leftmost (i.e., the rainbow) column of the next part $P_{j+1}$. These edges are present in every $\bluemst\in {\cal T}_B$ and $\redmst\in {\cal T}_R$, and one of them is a horizontal edge of length at least $1.5$. This edge might participate in a non-vertical crossing, in which case the crossing is assigned to $P_j$; see Figure~\ref{fig-mstsgoodchoice}.  

%We now describe our choice of $\bluemst\in {\cal T}_B$ and $\redmst\in {\cal T}_R$. Note that, for each part $P_j$ consisting of more than one column, the leftmost edge in $\bluemst\cup\redmst$ with both endpoints in $P_j$ which is longer than 1.5 is the edge connecting the rainbow column to the first monochromatic column in one of the colors (assume this color is blue) and this edge is present in any choice of $\bluemst\cup\redmst$. The only exception is $j=0$ which does not contain a rainbow column and therefore we can choose this edge arbitrarily by Claim \ref{cl:choice-horizontal-edges}, in either case this edge is horizontal by Claim \ref{cl:edges-in-sections-horizontal-or-vertical}. Then, the next edge in $\bluemst\cup\redmst$ with both endpoints in $P_j$ and longer than 1.5 is going to be a red edge connecting two red columns, by Claim \ref{cl:choice-horizontal-edges}, we can choose this edge to have both endpoints in the opposite row from the previous one (and therefore not cross it). 
%We can continue this pattern and at each step choose an edge connecting points in a different row from the previous one, see the Figure \ref{fig-mstsgoodchoice}. Eventually, we reach the end of part $P_j$, where we need to connect a point in the last monochromatic column of each color to the leftmost (rainbow) column of the next part $P_{j+1}$, and this edge is present in any choice of $\bluemst\cup\redmst$. If this edge participates in any non-vertical crossing, we assign this crossing to $P_j$.  

\begin{figure}
    \centering
    \includegraphics[page=2]{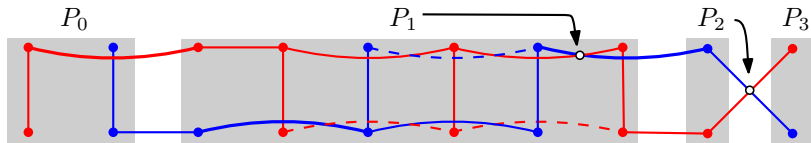}
    \caption{The choice of $\bluemst\in {\cal T}_B$ and $\redmst \in {\cal T}_B$ in the proof of Lemma \ref{lem-nonUniqueMST}. The arrows connect the parts to the non-vertical crossings assigned to them.}
    \label{fig-mstsgoodchoice}
\end{figure}

Finally, suppose that $P_j$ consists of just one column $C_i$. In this case, as already mentioned, in the convex hull of $C_i\cup C_{i+1}$ there might be at most one crossing, created by two diagonal edges. This crossing, if it exists, is assigned to~$P_j$. %Note that by Claim \ref{cl:edges-crossing-half-integer-vertical-lines}, we know that there cannot be more than one such crossing. 

In conclusion, given a coloring of $P$, it is possible to construct $\bluemst\in {\cal T}_B$ and $\redmst\in {\cal T}_R$ such that the trees cross at most $n/2-1$ times. \qed
%Therefore, we have successfully defined the choice of the MSTs and the assignment of the crossings to parts, finishing the proof. 
\end{proof}

We now prove Theorem~\ref{thm-grids}, which tells us that, in the generic setting, perturbations of the $2\times \frac{n}{2}$ grid cannot give us a matching upper bound for Theorem~\ref{thm-convex}. The proof of Theorem~\ref{thm-grids} is split into two parts, corresponding to the two lemmas below.
%This result follows immediately from Lemmas \ref{lem-n2-grid-tight} and \ref{lem-n2-grid-tight-2}, which we state and prove below.

In the proofs of Lemmas~\ref{lem-n2-grid-tight} and~\ref{lem-n2-grid-tight-2}, we use the following terminology: If a (perturbed) grid is colored in red and blue, we call the set of columns between two rainbow columns (in the left-to-right order) a \emph{section}. Note that a section might be empty.
 
\begin{lemma}\label{lem-n2-grid-tight}
    Suppose that $n\ge2$ is even. Let $P$ be any perturbation of a $2\times \frac n2$ grid that yields a generic convex point set. Then $\cross(P)\ge \lfloor\frac{5}{8}(n-2)\rfloor$.
    %Suppose that $n=8k+2$. Then any perturbation of the $2\times\frac{n}{2}$ grid that gives a convex point set of pairwise different distances admits a coloring such that the MSTs of the two colors cross at least $\frac{5}{8}(n-2)$ times.
\end{lemma}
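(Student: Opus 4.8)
\textbf{Proof plan for Lemma~\ref{lem-n2-grid-tight}.}

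The plan is to give a lower bound on $\crossAB(R,B)$ for \emph{every} coloring of a generic convex perturbation $P$ of the $2\times\frac n2$ grid, and then take the maximum over colorings only at the very end; in fact we will produce one specific near-alternating coloring that forces roughly $\frac58 n$ crossings. First I would recall the structural facts that are already available: because $P$ is convex and a perturbation of a thin grid, both $T_R$ and $T_B$ are ``caterpillar-like'' trees that essentially traverse the columns from left to right, and the analogues of Claims~\ref{cl:edges-crossing-half-integer-vertical-lines-first} and~\ref{cl:edges-in-sections-horizontal-or-vertical} hold verbatim in the generic setting (with ``longest edge of a monochromatic cycle'' replaced by the unique MST argument). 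In particular, within a section (a maximal run of monochromatic columns of a single color bounded by rainbow columns) the monochromatic color's tree uses only horizontal and vertical edges, and each vertical edge meets at most one vertical crossing. The key quantitative input is that each \emph{rainbow} column forces a vertical crossing (its vertical edge, if present in neither tree, is replaced by short diagonal/horizontal edges that must cross the opposite tree near that column), while each \emph{monochromatic} column contributes a vertical edge that is crossed exactly when the column neighbor on the ``other side'' belongs to the opposite color class in a suitable sense.

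Next I would set up a charging/counting scheme. Let $t$ be the number of rainbow columns. The $t$ rainbow columns themselves contribute at least $t$ crossings (one near each, arguing as in Theorem~\ref{thm-convex}: removing the rainbow column's two vertices still leaves the point inside one cell). The $\frac n2 - t$ monochromatic columns come in maximal monochromatic sections; inside a section with $c$ columns the two long horizontal ``shortcut'' edges of the two colors, together with the diagonal edge joining consecutive sections, are forced, and a careful case analysis on whether consecutive sections have the same color and on the parity of section lengths shows each column of the section is charged a crossing except for a bounded deficit per section-boundary. Summing, one gets a bound of the form $\crossAB(R,B)\ge \frac n2 - t + (\text{contribution from sections})$, which is minimized over $t$; the $\frac58$ coefficient should emerge because the worst case balances ``many short monochromatic sections'' (cheap, but then $t$ is forced up) against ``few long sections'' (each long section of length $c$ is worth about $c - O(1)$ crossings but wastes $O(1)$ at each end). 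I would make this precise by defining, for each coloring, a potential equal to (number of rainbow columns) $+ \sum_{\text{monochromatic sections } S}(|S| - \beta)$ for the right constant $\beta$, show the number of crossings is at least this potential, and then observe the potential is at least $\frac58(n-2)$ — taking the floor at the end. To be safe about the exact constant I would first verify the bound on small cases ($n=2,4,8$) and on the two extreme colorings (fully alternating, giving $\approx n$ crossings; and the specific $P_0$ from Theorem~\ref{thm-grids} with the block pattern realizing $\frac58(n-2)$), using the latter to calibrate $\beta$.

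The main obstacle I anticipate is the local analysis at the \emph{boundaries between sections}: precisely determining, as a function of the colors of the two adjacent sections (same vs.\ different) and the parities of their lengths, how many of the forced horizontal/diagonal ``bridge'' edges actually produce crossings, and in which row they sit (which is where Claim~\ref{cl:choice-horizontal-edges}'s freedom to swap a horizontal edge between the two rows must be exploited to avoid double-counting). This is exactly the place where genericity matters — in the non-generic case (Lemma~\ref{lem-nonUniqueMST}) one gets to \emph{choose} the MSTs to kill these crossings, whereas here the unique MST is forced and the swap freedom disappears, so some of those crossings are unavoidable. I would organize this into a short case table (four cases by color-match $\times$ parity) and handle each by exhibiting the forced edges and counting their crossings with the opposite tree, then plug the per-boundary deficits into the global sum. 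The rest — that intra-section non-long horizontal edges (length $<1.5$) and vertical edges are accounted for cleanly, and that no crossing is counted twice — follows the bookkeeping already established in the proof of Lemma~\ref{lem-nonUniqueMST}.
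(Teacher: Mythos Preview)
Your plan conflates the two directions of Theorem~\ref{thm-grids}. Since $\cross(P)=\max_{R\cup B=P}\crossAB(R,B)$, a \emph{lower} bound on $\cross(P)$ only requires exhibiting \emph{one} coloring with at least $\lfloor\tfrac58(n-2)\rfloor$ crossings; proving a lower bound on $\crossAB(R,B)$ ``for every coloring'' is both unnecessary and false (the all-red coloring has zero crossings). The elaborate machinery you sketch --- a potential function in terms of the number $t$ of rainbow columns, minimizing over $t$, per-boundary deficits, the four-case table at section boundaries --- is the shape of the argument for the \emph{upper} bound (Lemma~\ref{lem-n2-grid-tight-2}), where one must control \emph{all} colorings of a fixed $P_0$. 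Here it is the wrong tool.

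The paper's proof is direct and short: write $n=8k+2+2d$ with $d\in\{0,1,2,3\}$, make columns $C_1,C_5,\dots,C_{4k+1}$ (and the last $d$ columns) rainbow, and in each intervening three-column section choose which column is the ``singleton'' color according to whether $|w_{4i+2}w_{4i+4}|<|v_{4i+2}v_{4i+4}|$ or not. This adaptive step --- letting the metric of the given perturbation dictate the coloring of each section --- is the key idea your plan is missing; it is what forces the two long horizontal edges in a section to lie in \emph{different} rows, so that one of them necessarily crosses the other color's bridge and the section yields five crossings rather than four. Without adapting the coloring to $P$, a fixed pattern can lose one crossing per section (cf.\ Claim~\ref{cl:choice-horizontal-edges}, where in the non-generic case the MST \emph{chooses} the favorable row), and then you only recover $\lfloor\tfrac n2\rfloor-1$.
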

\begin{proof}
A coloring of $P$ yielding exactly $\lfloor\frac{5}{8}(n-2)\rfloor$ crossings is defined as follows.
Suppose $n=8k+2+2d$, where $k\ge0$ is an integer and $d\in\{0,1,2,3\}$. The columns $C_1,C_5,C_9,\dots,C_{4k+1}$ have $v_{4i+1}$ colored blue and $w_{4i+1}$ colored red, so they are rainbow.
In each row, the colors alternate to the right of the column $C_{4k+1}$. This makes the $d$ columns to the right of $C_{4k+1}$ rainbow. 
% See Figure~\ref{fig-n2grid-tight-0}. 

% \pavel{ I suggest a new figure fig-n2grid-tight-0.pdf here (it is already prepared in the latex file; just needs to be drawn. :-) The figure would show an example with $n=22$, say. Yes, I like figures. :-)}

%      \begin{figure}[h!]
%         \centering
%            \includegraphics[page=3]{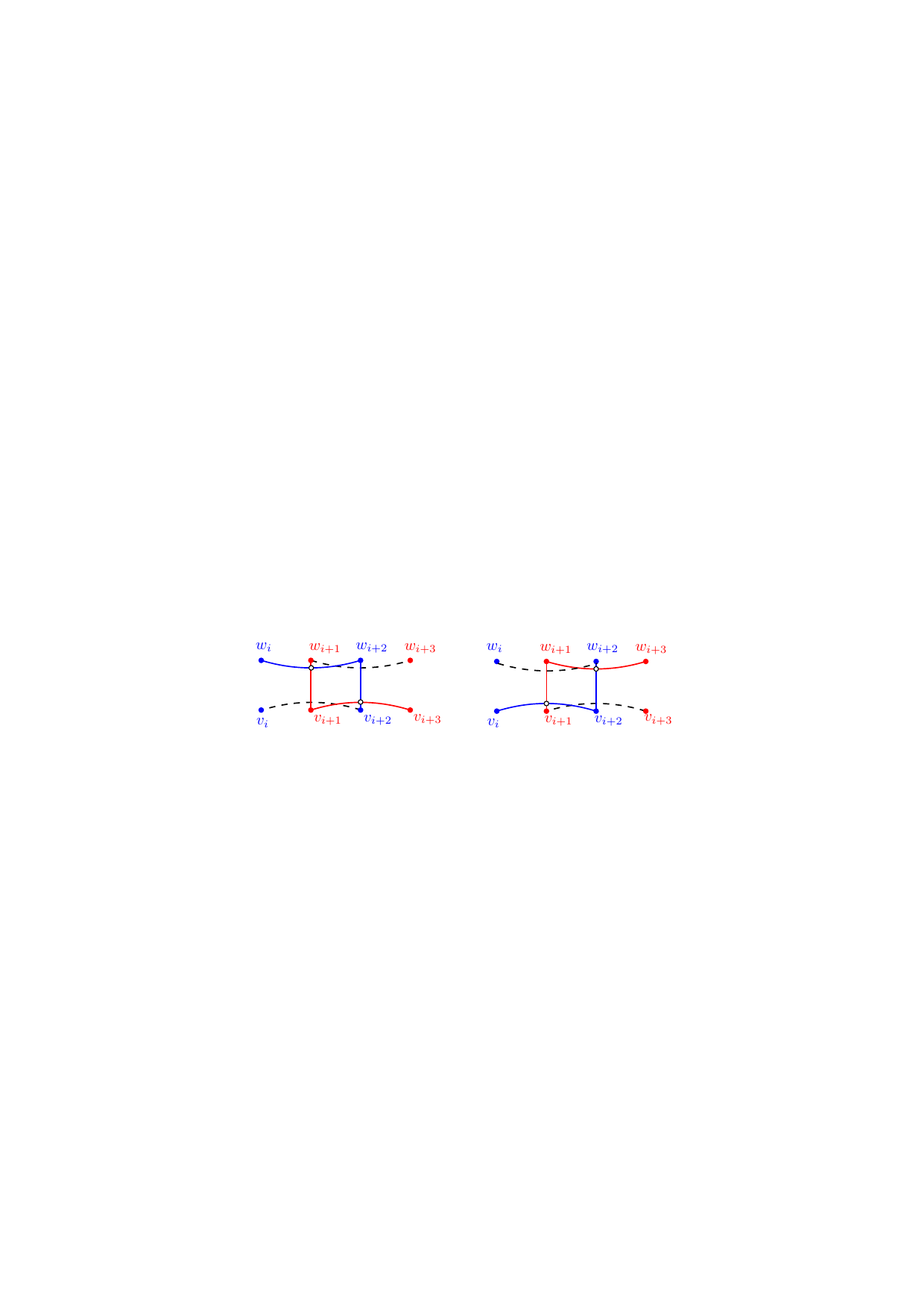}
%            \caption{Possible coloring of the grid in the proof of Lemma~\ref{lem-n2-grid-tight} for $n=22$ (and $k=2$, $d=2$).}
%           \label{fig-n2grid-tight-0}
%        \end{figure}

         \begin{figure}[h!]
            \centering
            \includegraphics{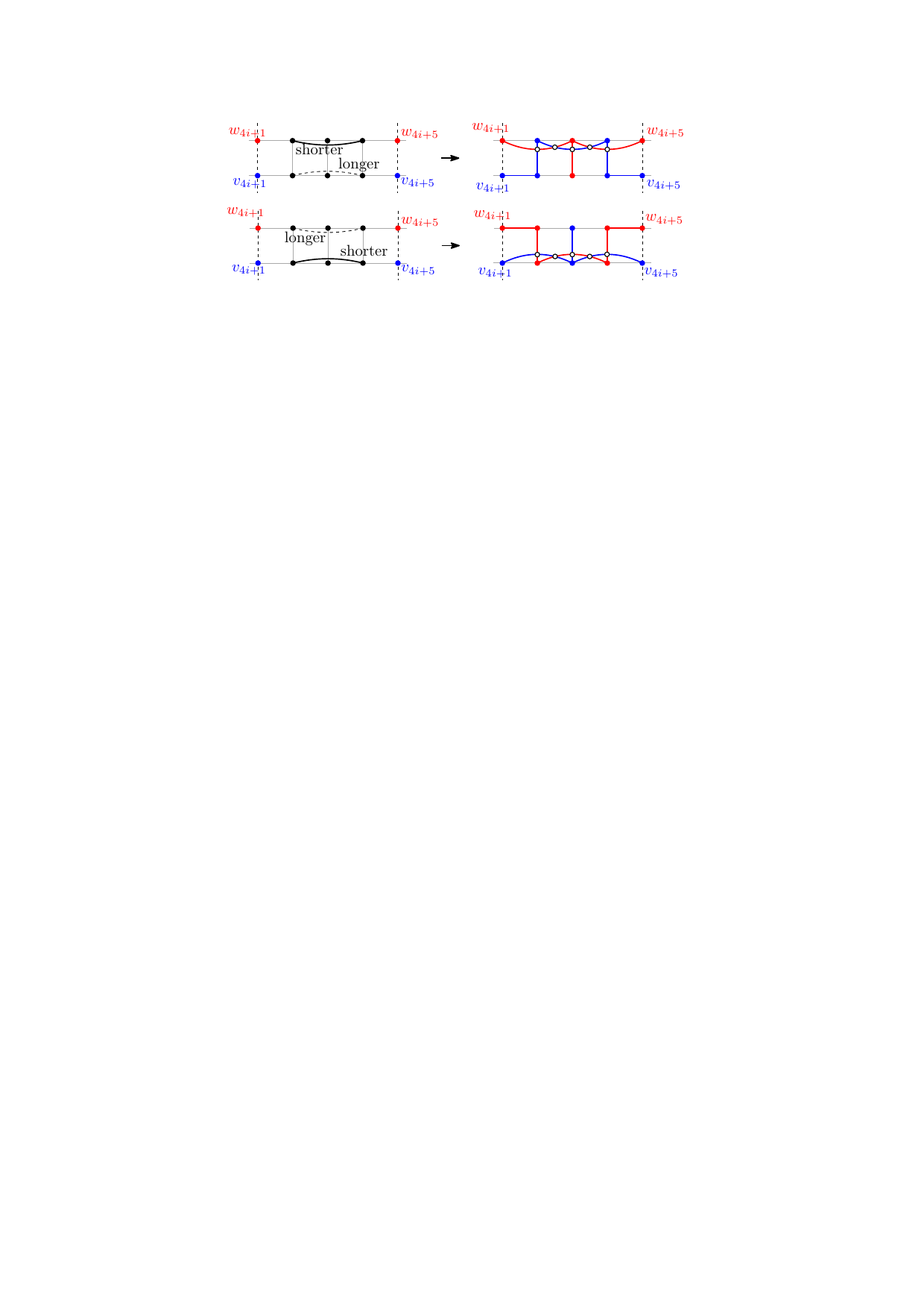}
            \caption{Coloring  a non-empty section in the proof of Lemma~\ref{lem-n2-grid-tight} (two cases).}
            \label{fig-n2grid-tight}
        \end{figure}

All columns that are not yet colored will be monochromatic, with their colors determined as follows.
Within a section delimited by the rainbow columns $C_{4i+1}$ and $C_{4i+5}$, we compare the lengths of $w_{4i+2}w_{4i+4}$ and $v_{4i+2}v_{4i+2}$. If $|w_{4i+2}w_{4i+4}|<|v_{4i+2}v_{4i+4}|$, we color the middle column $C_{4i+3}$ in red and the remaining two columns in blue; see Figure~\ref{fig-n2grid-tight} (top). Otherwise, we color the middle column $C_{4i+3}$ in blue and the rest in red; see Figure~\ref{fig-n2grid-tight} (bottom). 

%Then, the MSTs within each section are as shown in Figure~\ref{fig-n2grid-tight}.

Each non-empty section contributes 5 crossings, and each empty section contributes 1 crossing. Thus, in total the MSTs cross $5k+d=5k+\lfloor \frac58(2d)\rfloor =\lfloor \frac58(8k+2d)\rfloor=\lfloor \frac58(n-2)\rfloor $ times; see Figure~\ref{fig-n2grid-tight}. \qed
\end{proof}

Lemma~\ref{lem-n2-grid-tight} is best possible:
\begin{restatable}{lemma}{nTwoGridTightTwo}
\label{lem-n2-grid-tight-2}
Suppose that $n \ge 2$ is even. There exists a generic perturbation $P_0$ of the $2\times\frac n2$ grid in convex position such that $\cross(P_0) \leq \lfloor\frac{5}{8}(n-2)\rfloor$.
\end{restatable}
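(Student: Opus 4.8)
## Proof plan for Lemma~\ref{lem-n2-grid-tight-2}

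The plan is to exhibit an \emph{explicit} perturbation $P_0$ of the $2\times\frac n2$ grid, in convex position and generic, for which \emph{every} red-blue coloring produces at most $\lfloor\frac58(n-2)\rfloor$ crossings. A natural candidate is to place both rows on a very ``flat'' convex curve, so that each row is nearly collinear and the column neighbour of each vertex is its nearest neighbour; additionally one wants the within-row gaps to grow (or shrink) monotonically so that horizontal distances are pairwise distinct in a controlled way, and the two rows to be slightly offset so that no degeneracies survive. With such a $P_0$, the structural analysis of the MSTs from the proof of Lemma~\ref{lem-nonUniqueMST} essentially applies: vertical edges of each monochromatic column are forced, long ``jumps'' between far columns are excluded (a jump edge is the longest edge of a short cycle through the intermediate column), and each tree, restricted to a section between two consecutive rainbow columns, looks like a zig-zag of horizontal edges plus the forced verticals, with exactly one edge leaving the section on each side.

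The core of the argument is a charging scheme: bound the number of crossings section by section. First I would re-derive, now in the \emph{generic} setting, the analogue of Claim~\ref{cl:edges-crossing-half-integer-vertical-lines-first} — at most one edge of $T_B$ (and of $T_R$) crosses each half-integer vertical line $x=i+\tfrac12$ except possibly when all four neighbouring points are monochromatic of that color, in which case the two crossing edges are exactly $v_iv_{i+1}$ and $w_iw_{i+1}$. This controls \emph{vertical} crossings: each vertical edge participates in at most one crossing, and the first column of every section carries no vertical crossing because it is rainbow (for the first section, because $C_1$ is rainbow in the worst coloring — or one handles the boundary columns separately, losing only an additive constant absorbed by the floor). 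Then I would count \emph{non-vertical} (horizontal–horizontal, horizontal–diagonal, diagonal–diagonal) crossings: inside a section all edges of one color are horizontal or vertical (the diagonal-excluding argument of Claim~\ref{cl:edges-in-sections-horizontal-or-vertical}), so a non-vertical crossing inside a section is between two long horizontal edges of opposite colors, and by choosing — or rather, by showing the MST is \emph{forced} in the generic case — the zig-zag structure, one sees each section of $c$ columns contributes at most... here is where the constant $\tfrac58$ must come out: a section of $c$ columns should be shown to contribute at most $\lfloor\frac58\cdot 2(c-1)\rfloor$-type bound, i.e. on average $\tfrac54$ crossings per column, matching the construction of Lemma~\ref{lem-n2-grid-tight} where a non-empty $4$-column section gives $5$ and an empty ($1$-column) section gives $1$. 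Summing over all sections and using $\sum c_i = n/2$ yields the $\lfloor\frac58(n-2)\rfloor$ bound.

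Concretely, the crossing count per section breaks into: (i) at most $c-1$ vertical crossings (one per ``interior'' half-integer line of the section), minus $1$ for the leftmost rainbow column carrying none; (ii) the crossings on the two edges leaving the section — the left-leaving and right-leaving edges of the two colors — of which, by the zig-zag parity argument, a bounded number (I expect at most one per section boundary, charged to the section on its left) are non-vertical; (iii) for a singleton section $C_i$, at most one crossing, formed by the two diagonals in $\conv(C_i\cup C_{i+1})$. A careful bookkeeping of (i)+(ii)+(iii), together with the observation that the \emph{worst} coloring is forced to realize the extremal pattern of Lemma~\ref{lem-n2-grid-tight}'s construction, gives exactly $5$ per $4$-column section and $1$ per empty section, hence at most $\lfloor\frac58(n-2)\rfloor$ overall.

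\textbf{Main obstacle.} The delicate point is proving that for the chosen $P_0$ the relevant edges of the MSTs are \emph{uniquely determined} (the set $P_0$ is generic, so each $T_R,T_B$ is unique, but one still needs to know \emph{which} edges appear), and that this forced structure never beats $\tfrac54$ crossings per column. In Lemma~\ref{lem-nonUniqueMST} one had the freedom to \emph{choose} favourable MSTs; here genericity removes that freedom, so one must instead engineer $P_0$ (e.g.\ by making within-row distances strictly monotone along each row and choosing the vertical offsets so that in each section the shorter of $|w_{4i+2}w_{4i+4}|,|v_{4i+2}v_{4i+4}|$ and the resulting MST shape agree with the construction) so that the forced MST is exactly the zig-zag one, and then verify the per-section count. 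Pinning down this perturbation so that every coloring is simultaneously controlled — rather than just the single bad coloring — is where essentially all the work lies; the arithmetic $5k+d=\lfloor\frac58(8k+2d)\rfloor$ is then immediate, matching Lemma~\ref{lem-n2-grid-tight}.
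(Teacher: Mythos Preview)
Your high-level plan (explicit perturbation plus a section-by-section charging argument, reusing the half-integer-line claim and the ``no diagonal inside a section'' claim) matches the paper's, but the concrete mechanism you propose does not reach $\tfrac58$, and the actual key idea is missing.

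The coarse count you sketch --- roughly one vertical crossing per block plus at most one non-vertical crossing between consecutive blocks --- gives up to $2b-1$ crossings in a section with $b$ blocks. Since the adversary may color so that every block is a single column ($b=c$), this yields about $2c-1$ crossings for a section of weight $2c+2$, a ratio tending to $1$, not $\tfrac58$. Your suggestion of making within-row distances \emph{monotone} and forcing ``the zig-zag MST'' does not rescue this: for an alternating coloring of the monochromatic columns there is no zig-zag to force, and a monotone choice of gaps gives no control over which of $v_iv_{i+2}$, $w_iw_{i+2}$ is shorter for different $i$.

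What the paper does instead is a \emph{parity-based} perturbation: scale the $x$-coordinates of $v_{2},v_{4},\dots$ and $w_{1},w_{3},\dots$ so that $|w_{2j+1}w_{2j+3}|<|v_{2j+1}v_{2j+3}|$ while $|v_{2j}v_{2j+2}|<|w_{2j}w_{2j+2}|$. This alternating pattern is precisely what is needed for two structural claims that your plan lacks: (i) between two consecutive \emph{short} blocks (inside four consecutive monochromatic columns) there is \emph{no} non-vertical crossing, because the two relevant width-$2$ horizontal edges are forced into \emph{opposite} rows; and (ii) in a short--long(2)--short pattern, at most one of the two potential boundary crossings occurs. Claim~(i) collapses the cost of a run of $s$ short blocks from $2s-1$ down to $s$; claim~(ii) handles the single remaining bad case (one long block of weight $4$). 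The paper then packages this into a \emph{cluster} decomposition (maximal runs of short blocks, maximal runs of long blocks, plus a special ``medium'' cluster at section ends) and checks that each cluster of weight $w$ carries at most $\tfrac{5w}{8}$ crossings; summing weights gives $n-2$ after first arguing that $C_1,C_{n/2}$ may be assumed rainbow. Without the parity trick and claims~(i)--(ii), no amount of bookkeeping on your (i)+(ii)+(iii) will produce the constant $\tfrac58$.
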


\begin{proof}
We obtain a perturbation of a $2\times \frac{n}{2}$ grid satisfying Lemma~\ref{lem-n2-grid-tight-2} by following three perturbation steps described below. Although we give specific small ``perturbation'' constants, it is apparent from the construction and from the proofs that the final perturbation can be arbitrarily small. That is, all the points can be arbitrarily close to their original grid positions.
Here are the three perturbation steps:
\begin{enumerate}
\item We first slightly increase the $y$-coordinate of the points of type $w_i$ from 1 to 1.01.
\item We next slightly decrease the $x$-coordinates of the points $v_2,v_4,v_6\ldots$ and $w_1,w_3,w_5\ldots$ by scaling them by a factor slightly smaller than $1$, namely by the factor $1-1/(100n)$, so that, for every $i=1,2,\ldots,n/2-1$, we still have $|v_iv_{i+1}|<1.01$ and $|w_iw_{i+1}|<1.01$. We observe that, after this step, $|w_{2j+1}w_{2j+3}|<|v_{2j+1}v_{2j+3}|$ for $j=0,1,\ldots$, since $|w_{2j+1}w_{2j+3}|<1$ and $|v_{2j+1}v_{2j+3}|=1$. Analogously, $|v_{2j}v_{2j+2}|<|w_{2j}w_{2j+2}|$ for $j=1,2,\ldots$
\item Finally, we infinitesimally perturb the points so that they are in convex position and all distances between pairs of points are distinct, while still maintaining:
\begin{itemize}
\item[(a)] $\max_{i \in 1,2,\ldots,n/2-1} \{|v_iv_{i+1}|,|w_iw_{i+1}| \}<\min_{i \in 1,2,\ldots,n/2} \{ |v_iw_{i}|\}$,
\item[(b)] $|w_{2j+1}w_{2j+3}|<|v_{2j+1}v_{2j+3}|$ for $j=0,1,\ldots,$
\item[(c)] $|v_{2j}v_{2j+2}|<|w_{2j}w_{2j+2}|$ for $j=1,2,\ldots$, and
\item[(d)] each point lies in the $0.02$-neighborhood of its initial grid position.
\end{itemize}
\end{enumerate}

 We call edges of type $v_iv_j$ or $w_iw_j$ \emph{horizontal}, and edges of type $v_iw_{i}$ \emph{vertical}. The remaining edges are called \emph{diagonal}. The \emph{width} of a horizontal edge $v_iv_j$ or $w_iw_j$ is $|i-j|$. Notice that $|v_iv_j|$ or $|w_iw_j|$ is not necessarily $|i-j|$, but it is very close to it. We observe that property (a) of the point set implies that any vertical edge is longer than any horizontal edge of width 1.

For points $v_i$ and $w_i$, we denote the set of points $\{v_j,w_j:\ j<i\}$ the points \emph{to their left} and the set of points $\{v_j,w_j:\ j>i\}$ the points \emph{to their right}. We observe that, due to the perturbation, $v_i$ might also be to the left or right of $w_i$, but we do not include each other in the left and right sets.

Let us fix a coloring of the point set.

%\pavel{I strengthened the lemma below. The original statement easily follows. (The original statemant was: Suppose that $v_i$ has color blue. Then, in the blue MST, $v_i$ is adjacent to at most one blue point to each side, namely, the closest blue point on that side.An analogous statement holds for $w_i$.}

%%%%%%%%%%%%%%%%%%%%%%%%%%%%%%%%%%%%%%%%%%%%%%%%%%%%%%%%%%%%%
\begin{restatable}{claim}{edgescrossinghalfint}\label{cl:edges-crossing-half-integer-vertical-lines} Let $i \in \{1,\dots,n/2-1\}$. Suppose that two distinct edges $e,f$ of  $T_B$ intersect the vertical line $x = i+1/2$.
Then all four points $v_i, v_{i+1}, w_i, w_{i+1}$ are blue and the edges $e,f$ are the edges $v_i v_{i+1}$ and $w_i w_{i+1}$. 
An analogous statement holds for $\redmst$.
\end{restatable}

\begin{proof}
Identical to the proof of Claim \ref{cl:edges-crossing-half-integer-vertical-lines-first}. \qed
\end{proof}

%%%%%%%%%%%%%%%%%%%%%%%%%%%%%%%%%%%%%
\begin{claim} \label{cl:hor-edges}
If $v_i$ and $v_{i+1}$ have the same color, the edge $v_iv_{i+1}$ belongs to the minimum spanning tree of that color. An analogous statement holds for the points $w_i$.
\end{claim}

\begin{proof}
Suppose, for the sake of contradiction, that $v_iv_{i+1}$ does not belong to the minimum spanning tree of that color. Then the path in the minimum spanning tree from $v_i$ to $v_{i+1}$ contains a vertical edge, a diagonal edge or a horizontal edge of width 2 or more. Edges of all of these types are longer than $|v_iv_{i+1}|$, so we reach a contradiction. \qed
\end{proof}

We call a column $v_i,w_i$ \emph{rainbow} if $v_i$ and $w_i$ have different color; otherwise, we call it \emph{monochromatic}.

%%%%%%%%%%%%%%%%%%%%%%%%%%%%%%%%%%%%%%%%%%%%%%%%%%%%%%%%%%%%%%
\begin{claim} \label{cl:first/last-column-change-to-rainbow}
If $C_1$ is monochromatic, it is possible to change the color of one of the points of the column in such a way that the number of crossings between the two minimum spanning trees does not decrease.
An analogous statement holds for the last column $C_{n/2}$.
\end{claim}

\begin{proof}
Suppose that the points $v_1,w_1$ are blue. 
If no edge of the blue MST $\bluemst$ intersects the vertical line $x=1.5$ then there is no crossing between $\bluemst$ and $\redmst$, and the claim trivially holds. Suppose now that exactly one edge of $\bluemst$ intersects the line $x=1.5$. Then $\bluemst$ contains the edge $v_1w_1$ and one of the vertices $v_1,w_1$, further denoted by $\ell,$ is a leaf of $\bluemst$. We claim that changing the color of $\ell$ from blue to red gives the claim in this case. Indeed, considering Kruskal's algorithm, we observe that (i) the new blue MST is obtained from the original one by removing the edge $v_1w_1$, which did not participate on any crossing between $\bluemst$ and $\redmst$, and, due to Claim~\ref{cl:edges-crossing-half-integer-vertical-lines} (ii) the new red MST is obtained from the original one by adding an edge incident to $\ell$. The claim therefore holds also in this case.

Due to Claim~\ref{cl:edges-crossing-half-integer-vertical-lines}, it remains to verify the case when all four points $v_1,v_2,w_1,w_2$ are blue and $\bluemst$ contains the two edges $v_1v_2$ and $w_1w_2$. If we change the color of $v_1$ from blue to red then, due to Kruskal's algorithm, (1) the new $\bluemst$ is obtained from the original one by removing the edge $v_1v_2$ and eventually replacing an edge $v_1w_1$ (if it is in $\bluemst$) by another edge, and (2) the new $\redmst$ is obtained from the original one by adding an edge incident to $v_1$. Since the edges $v_1v_2$ and $v_1w_1$ cannot participate in any crossing, the claim is proved. \qed
\end{proof}

Consequently, we can assume that $C_1$ and $C_{n/2}$ are rainbow.

We consider all columns that are rainbow, recall that a \emph{section} of a (preturbed) grid is the subset of points strictly between two consecutive rainbow columns (in the left-to-right order). In particular, if any two consecutive columns are both rainbow, the empty set of points between them is considered as an \emph{empty section}.
For a section $S$ delimited by the rainbow columns $C_i$ and $C_j$, we define $S^+:=S\cup C_i\cup C_j$. See Figure~\ref{fig:section}. 

\begin{figure}[h!]
    \centering
    \includegraphics[page=2]{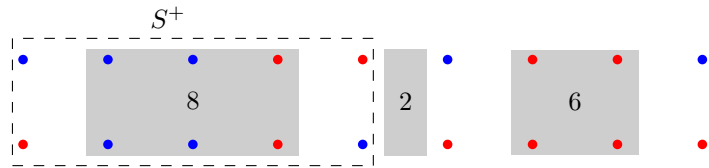}
    \caption{Sections in a grid: each section is marked by a grey rectangle and the weight of the section is indicated.}
    \label{fig:section}
\end{figure}

%Thus, if $C_{1}=C_{i_1},C_{i_2},\dots,C_{i_k}=C_{n/2}$ are all the rainbow columns (in the left-to-right-order) then there are $k-1$ sections, and if $S=C_{i_j}the $j$-th section is
In the rest of the proof, we consider various types of intervals of consecutive monochromatic columns.
Let ${\cal C}=\{C_i,C_{i+1},\dots,C_{i+j-1}\}$ be such an interval of $j$ consecutive monochromatic columns, and let $r({\cal C)}\in\{0,1,2\}$ be the number of rainbow columns in the pair $\{C_{i-1},C_{i+j}\}$. Then the \emph{weight} $w({\cal C})$ of ${\cal C}$ is $2j+r({\cal C})$. In particular, each empty section (delimited by two consecutive rainbow columns) has weight $2$. 

%%%%%%%%%%%%%%%%%%%%%%%%%%%%%%%%%%%%%%%%%%%%%%%%%%%
\begin{claim} \label{cl:total-weight-of-sections}
 The total weight of all the sections, including the empty ones, is $n-2$. (Recall that we assume that $C_1$ and $C_{n/2}$ are rainbow.)

%% (ii) Suppose that a section $S$ is partitioned into intervals of consecutive columns. Then the total weight of these intervals is equal to the weight of $S$.
\end{claim}

\begin{proof}
The weight of a section between two rainbow columns $C_i,C_j$ is $2(j-i-1)+2=2(j-i)$.
Thus, if $1=i_1,i_2,\dots,i_k=n/2$ are the indices of the rainbow columns then the total weight of the sections is $\sum_{j=1}^{k-1} 2(i_{j+1}-i_j)=2(i_k-i_1)=2(n/2-1)=n-2$. \qed
% Each monochromatic column contributes $2$ to the weight of its section.
% Each rainbow column contributes $1$ to each of the two sections just before and just after it (if they exist).
% Thus, the rainbow columns $v_1,w_1$ and $v_{n/2},w_{n/2}$ contribute $1$ to the first (resp. last) section, and every other rainbow column contributes $1+1=2$ in total.
% The claim follows.
\end{proof}

\begin{claim}\label{cl:interval-weights}
Suppose that a section $S$ is partitioned into intervals of consecutive columns. Then the total weight of these intervals is equal to the weight of $S$.
\end{claim}

\begin{proof}
By the definition of weight, both quantities are equal to two plus the number of (monochromatic) columns in $S$. \qed
\end{proof}

%%%%%%%%%%%%%%%%
\begin{claim}\label{cl:crossings-in-S+}
Let $S$ be a section of weight $w(S)$. %between two rainbow columns $C_i,C_j,i<j$.
%(Thus, $S^+=S\cup C_i\cup C_j$.)
Then $\conv(S^+)$ contains at most $(5/8)w(S)$ crossings between the two MSTs. (Recall that $S^+$ denotes the section $S$ extended by the two rainbow columns delimiting it.)
\end{claim}

The union $\bigcup\conv(S^+)$ taken over all sections $S$ covers the convex hull of our point set. Therefore, Claims~\ref{cl:total-weight-of-sections} and~\ref{cl:crossings-in-S+} immediately give Lemma~\ref{lem-n2-grid-tight-2}.

We now show claims needed for the proof of Claim~\ref{cl:crossings-in-S+}.

%The idea of the proof is to show that there within the convex hull of a section (including its bounding rainbow columns) of weight $w$, the number of 

A \emph{block} is a maximal set of consecutive monochromatic columns of the same color. We observe that a non-empty section is partitioned into blocks whose colors alternate. A block is \emph{blue} or \emph{red}, according to the color of its columns.

Recall that we say that a crossing is \emph{vertical} if one of the two edges involved is of type $v_iw_i$; otherwise, it is \emph{non-vertical}.

%%%%%%%%%%%%%%%%
\begin{claim}\label{cl:exactly-one-crossing-in-every-block}
For every block, there is exactly one crossing between $\bluemst$ and $\redmst$ lying in the convex hull of the points forming the block. This crossing is a vertical crossing that involves a vertical edge within the block.
\end{claim}

\begin{proof}
Suppose that the block is formed by columns $v_j,w_j;v_{j+1},w_{j+1};\ldots;v_{j+\ell},w_{j+\ell}$ of color blue. By Claim~\ref{cl:hor-edges}, all the edges along the paths $v_jv_{j+1}\ldots v_{j+\ell}$ and $w_jw_{j+1}\ldots w_{j+\ell}$ belong to $\bluemst$. This already implies that there is at most one vertical edge within the block. By Claim~\ref{cl:edges-crossing-half-integer-vertical-lines}, there are no diagonal blue edges within the block. We next prove that there is exactly one vertical edge: Suppose, for the sake of contradiction, that there is no vertical edge within the block. Then, since in each side of the block there is a pair such that at least one of the two points is red, the path from $v_j$ to $w_j$ in the blue MST cannot contain only horizontal edges of width one and vertical edges outside the block. Thus, this path contains a diagonal edge or a horizontal edge of width 2 or more. Since both types of edges are longer than $v_jw_j$, we obtain a contradiction.

In consequence, there is exactly one vertical edge within the block. Since there are red points both to the left and right of the block, there is an edge of $\redmst$ connecting red points from both sides. Due to Claim~\ref{cl:edges-crossing-half-integer-vertical-lines}, this edge is unique. This edge intersects the vertical blue edge within the block creating a crossing.  There is no other crossing between $\bluemst$ and $\redmst$ lying in the convex hull of the points forming the block, since the unique red edge intersecting the block intersects no edge of the two paths $v_jv_{j+1}\ldots v_{j+\ell}$ and $w_jw_{j+1}\ldots w_{j+\ell}$ and there is no additional edge of $\bluemst$ intersecting the block due to Claim~\ref{cl:edges-crossing-half-integer-vertical-lines}. \qed
 \end{proof}

\begin{claim} \label{cl:empty-section-1}
For every empty section $S$, there is at most one crossing in the interior of the convex hull of $S^+$, and no crossing on the boundary.
\end{claim}

\begin{proof}
Direct consequence of Claim~\ref{cl:edges-crossing-half-integer-vertical-lines}. \qed
\end{proof}

\begin{claim} \label{cl:mono-rain}
Given two consecutive columns, one of them being monochromatic and the other rainbow, there is no crossing lying in the interior of the convex hull of the two columns.
\end{claim}

\begin{proof}
Direct consequence of Claims~\ref{cl:edges-crossing-half-integer-vertical-lines} and~\ref{cl:hor-edges}. \qed
\end{proof}

\begin{claim} \label{cl:two-columns}
For any two consecutive columns, there is at most one crossing lying in the interior of the convex hull of the two columns.
\end{claim}

\begin{proof}
Direct consequence of Claim~\ref{cl:edges-crossing-half-integer-vertical-lines}. \qed
\end{proof}

A block is \emph{short} if it contains only one column, and it is \emph{long} otherwise. 

The next two key claims show that, in some situations, the number of crossings is even more restricted:

\begin{claim} \label{cl:no111}
Consider four consecutive monochromatic columns $C_i,\dots,C_{i+3}$ such that $C_{i+1}$ and $C_{i+2}$ are short blocks. Then there is no crossing lying in the interior of $\CH(C_{i+1}\cup C_{i+2})$.
\end{claim}

\begin{figure}[ht]
    \centering
    \includegraphics[page=1]{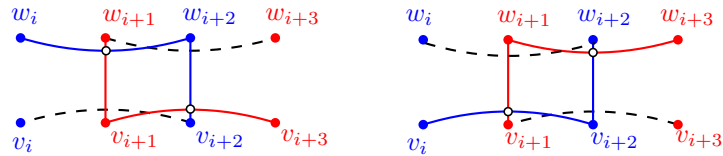}
    \caption{Two possible situations in Claim~\ref{cl:no111}, longer edges are drawn with dashed lines.} 
    \label{fig-no111}
\end{figure}

\begin{proof}
Due to condition (c) in step 3. of the construction of the perturbation of the grid, we have either $|v_{i+2}v_i|<|w_{i+2}w_i|$ and $|v_{i+3}v_{i+1}|>|w_{i+3}w_{i+1}|$ (if $i$ is even),
or $|v_{i+2}v_i|>|w_{i+2}w_i|$ and $|v_{i+3}v_{i+1}|<|w_{i+3}w_{i+1}|$ (otherwise). In the first case the only two edges of the red and blue MSTs intersecting the vertical line $x=i+1/2$ are $v_{i+2}v_i$ and $w_{i+3}w_{i+1}$. In the latter case they are $v_{i+3}v_{i+1}$ and $w_{i}w_{i+2}$. In either case, the two edges do not cross; see Figure~\ref{fig-no111}.  The claim follows. \qed
\end{proof}

\begin{claim} \label{cl:no12111}
Consider six consecutive monochromatic columns $C_i,\dots,C_{i+5}$ such that $C_{i+1}$ and $C_{i+4}$ are short blocks and $C_{i+2}\cup C_{i+3}$ is a long block between them. Then either there is no crossing in the interior of $\CH(C_{i+1}\cup C_{i+2})$ or there is no crossing in the interior of $\CH(C_{i+3}\cup C_{i+4})$. 
\end{claim}

\begin{figure}[h]
    \centering
    \includegraphics[width=\textwidth, page=3]{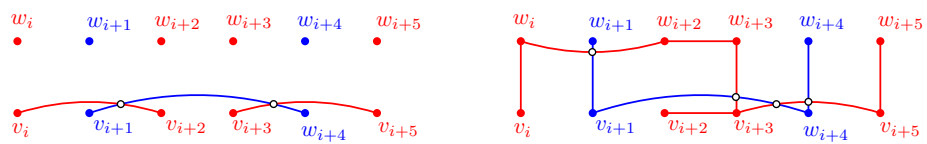}
    \caption{The two non-vertical crossings on the left cannot be present simultaneously by Claim~\ref{cl:no12111}. On the right we can see one possible situation in Claim~\ref{cl:no12111}.}
    \label{fig-no121}
\end{figure}

\begin{proof}
Without loss of generality, suppose that the column $C_i$ is red.
If $i$ is even then $|v_{i}v_{i+2}|<|w_{i}w_{i+2}|$ and $|v_{i+3}v_{i+5}|>|w_{i+3}w_{i+5}|$, and therefore the edges $v_{i}v_{i+2}$ and $w_{i+3}w_{i+5}$ are the two unique edges of $\redmst$ crossing the two convex hulls of our interest. (Again, we have used Claim~\ref{cl:edges-crossing-half-integer-vertical-lines}.) The unique edge of $\bluemst$ crossing the two convex hulls of our interest is either $v_{i+1}v_{i+4}$ or $w_{i+1}w_{i+4}$. In the first case there is no crossing in the right of the two convex hulls, and in the latter case there is no crossing in the left one.

If $i$ is odd then the signs in the two inequalities considered in the even case are flipped, and the rest of the argument is analogous as in the even case. \qed
\end{proof}

Blocks are grouped into \emph{clusters} as follows: A \emph{big cluster} is a maximal interval of consecutive long blocks. A \emph{small cluster} is a maximal interval of consecutive short blocks. In some particular cases, blocks are grouped differently: If we have the sequence: rainbow column, short block, long block with 2 columns, short block, rather than ``grouping" the first two blocks into a short cluster and a long cluster, we group them together into a so-called \emph{medium cluster}. If we have the sequence: short block, long block with 2 columns, short block, rainbow column, we group the second and third block together into a \emph{medium cluster}. If we have the two situations happening at the same time, that is, rainbow column, short block, long block with 2 columns, short block, rainbow column, we group the first two blocks into a \emph{medium cluster} and we leave the third block as a small cluster. 

%\maria{I am assuming, Pavel, that you revised the proof until here.}
%\pavel{yes, to some extend}

For a fixed cluster, we call the crossings lying inside the convex hull of the points in the cluster \emph{interior}. We call \emph{boundary} the crossings lying in the interior of the convex hull of the following four points: either the left-most column in the cluster and the column to its left; or the right-most column in the cluster and the column to its right. 

To a big or medium cluster, we assign all interior crossings and all boundary crossings. To a small cluster, we assign all interior crossings.

Let us now consider an empty section $S$. To this section, we assign the unique crossing (see Claim~\ref{cl:empty-section-1}) lying in the convex hull of $S^+$ (if it exists).

\begin{claim}
Each crossing is assigned to exactly one cluster or one empty section.
\end{claim}
\begin{proof}
As already mentioned, the union $\bigcup\conv(S^+)$ taken over all sections $S$ covers the convex hull of our point set. 

Given an empty section $S$, by Claim~\ref{cl:empty-section-1} any crossing lying in the convex hull of $S^+$ lies in the interior of this convex hull. Therefore, it is only assigned to $S$.

Suppose next that section $S=\{C_i,\ldots,C_{i+j-1}\}$ is not empty. By Claim~\ref{cl:mono-rain}, there is no crossing lying in the interior of the convex hulls of $C_{i-1}$ and $C_i$, and of $C_{i+j-1}$ and $C_{i+j}$. The blocks in $S$ are grouped into clusters. For each cluster, crossings inside the convex hull of the points in the cluster are assigned to that cluster. In the left to right sequence of clusters in $S$, there is always an alternation between big or medium clusters and small clusters. This implies that crossings lying ``between'' two consecutive clusters (i.e., in the interior of the convex hull of two columns corresponding to the right-most column of one cluster and the left-most column of the next cluster) are assigned precisely to one of the two clusters (the big or medium one). \qed
\end{proof}

\begin{claim}\label{cl:cluster-weight}
A cluster of weight $w$ is assigned at most $\frac{5w}{8}$ crossings.
\end{claim}

\begin{proof}
We divide the proof into two main cases.

Let us first assume that the cluster does not contain the first or the last block of the section. 

If the cluster is small, it has only assigned the interior crossings. By Claim~\ref{cl:exactly-one-crossing-in-every-block}, there are $w/2$ vertical crossings lying in the convex hulls of the individual columns forming the cluster. By Claim~\ref{cl:no111}, there are no other crossings lying on the convex hull of the cluster. 

The cluster cannot be medium because, by definition, a medium cluster always contains the first or last block of the section.

Finally, suppose that the cluster is big and is made of long blocks with weight $w_1,w_2,\ldots,w_{\ell}$. The number of interior crossings assigned to it is $l$ vertical crossings (by  Claim~\ref{cl:exactly-one-crossing-in-every-block}) and at most $l-1$ crossings ``between'' consecutive blocks (by Claim~\ref{cl:two-columns}). By Claim~\ref{cl:two-columns}, the number of boundary crossings assigned to it is at most $2$. Hence, we get at most $2\ell +1$ crossings. Using that $\ell\leq w/4$, we get $\frac{w}{2}+1$ crossings, which is at most $\frac{5w}{8}$ if $w\geq 8$, so if there are at least two long blocks in the cluster. If there is one long block in the cluster, we have at most 3 crossings. If $w\geq 6$, this is smaller than $\frac{5w}{8}$. In consequence, it only remains to look at the case where there is one long block with $w=4$. 
%If it is the last block of the section, there is no boundary crossing to the right of the block, so it gets assigned only 2 crossings, which is fine. 
Since, by assumption, this block is not the first or the last block of the section, there is a short block to its left and a short block to its right. By definition, this is not a middle cluster, so this short block to the left is not the first block of the section, and the short block to the right is not the last block of the section. Thus, we can apply Claim~\ref{cl:no12111} to see that the number of boundary crossings assigned to the block is at most 1, and the total number of crossings is 2. Since $w=4$, this is smaller than $\frac{5w}{8}$.

The second main case is when the cluster contains the first or last block of the section.

If the cluster is small, it has only assigned the interior crossings. As before, there are $w/2$ vertical crossings lying in the convex hulls of the individual columns forming the cluster. We next analyze the other crossings lying on the convex hull of the cluster. We consider two subcases.

Suppose first that the cluster contains one among the first and last blocks of the section, say the first. If the cluster contains one column, its width is 3, the number of crossings is 1 (only a vertical one), and thus the bound holds. If it contains two columns, the number of crossings is at most 3 (2 vertical and at most one between the two columns) and the width is 5, so the bound also holds. If the cluster contains three or more columns, there are $\frac{w-1}{2}$ vertical crossings and possibly a non-vertical crossing between the first and second columns (but not between any other pair of consecutive columns due to Claim~\ref{cl:no111} and the fact that the last column of the cluster is not the last block of the section). Thus, we have at most $\frac{w-1}{2}+1$ crossings, which is smaller than $\frac{5w}{8}$ because $w\geq 7$.

Next, suppose that the cluster contains both the first and last blocks of the section. If the cluster contains one column, its width is 4, the number of crossings is 1, and the bound holds. If it contains two columns, the number of crossings is at most 3 (as in the previous case) and the width is 6, so the bound also holds. If the cluster contains three or more columns, there are $\frac{w-2}{2}$ vertical crossings, possibly a non-vertical crossing between the first and second columns and possibly a non-vertical crossing between the last two columns. Thus, we have at most $\frac{w-2}{2}+2$ crossings, which is smaller than or equal to $\frac{5w}{8}$ because $w\geq 8$. We observe that equality is achieved precisely at $w=8$.

If the cluster is medium, the number of interior crossings assigned to it is $2$ vertical crossings and at most $1$ crossing ``between'' consecutive blocks. Since the cluster contains the first or last block of the section, one of the two potential boundary crossings does not occur thanks to Claim~\ref{cl:mono-rain}. Hence, the total number of crossings is at most $4$. Since the weight of the cluster is 7, the bound holds.

Finally, suppose that the cluster is big and is made of long blocks with weight $w_1,w_2,\ldots,w_{\ell}$. As in the first case, the number of interior crossings assigned to it is $l$ vertical crossings and at most $l-1$ crossings ``between'' consecutive blocks. As in the previous case, the number of boundary crossings is at most 1. Hence, we get at most $2\ell$ crossings. Using that $\ell\leq w/4$, we get at most $\frac{w}{2}$ crossings. \qed
\end{proof}

\begin{claim}\label{cl:empty-section-1}
An empty section $S$ is assigned less than $\frac{5}{8}w(S)$ crossings.
\end{claim}

\begin{proof}
The proof trivially follows from the facts that $w(S)=2$ and $S$ is assigned at most one crossing. \qed
\end{proof}

Now, Claims~\ref{cl:total-weight-of-sections},~\ref{cl:interval-weights},~\ref{cl:cluster-weight},~\ref{cl:empty-section-1} imply Lemma~\ref{lem-n2-grid-tight-2}. \qed

\end{proof}

\subsection{Random Point Sets}
In this section, we consider the case where $P$ is a set of $n$ random points uniformly distributed in $[0,1]^2$.
%, and prove Theorem \ref{thm-random}.

\random* 
\begin{proof} 
We divide the square $[0,1]^2$ into square cells of area $\frac{4}{n}$ each. We further divide each such cell into $121$ equal subcells of area $\frac{4}{121n}$. We call a cell \emph{good} if all subcells are empty except for the four subcells adjacent to the centermost subcell, which contain a single point each (see Figure~\ref{fig-random}). Note that the distance between any two points inside a good cell is smaller than the distance between a point in a good cell and a point in any other cell. 

For a single cell $C$, we have $\mathbb{P}[C \text{ is good}]= \binom{n}{4}\left(\frac{4}{121n}\right)^4 \left( 1-\frac{4}{n} \right)^{n-4}.$ For $n>4$, this is decreasing and converges to $K>0$. Now, let $A_C$ be the event that there is a crossing between $\redmst$ and $\bluemst$ inside a cell $C$. Then we have  $\mathbb{P}[A_C|C \text{ is good}] = \frac{1}{8}$, since two out of the $16$ possible colorings of points in a good cell $C$ induce a crossing, and this crossing occurs independently from the coloring of the points outside $C$. Finally, the result follows by the linearity of expectation. \qed \hfill 

    \begin{figure}[h]
        \centering
        \includegraphics[ page=5]{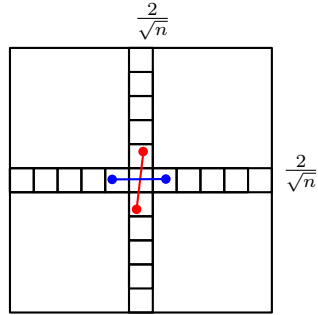}
        \caption{A good cell in the proof of Theorem~\ref{thm-random}.}
        \label{fig-random}
    \end{figure}
\end{proof}

%Combining Theorems~\ref{thm-genupperbound} and~\ref{thm-random} gives us the following as a corollary. 

%\begin{corollary}
%Let $P$ be a set of $n$ points uniformly distrbuted in $[0,1]^2$. If we color $P$ into two colors $P=R\cup B$ uniformly at random, then there is a constant $c>0$ such that $\mathbb{P}[\crossAB(R,B)\ge cn] \to 1$ as $n\to \infty$. 
%\end{corollary}

%\textcolor{orange}{Todor: I will prove it or delete it, might not be as easy as I thought}

\subsection{Dense Point Sets }

In this section, our goal is to prove Theorem~\ref{thm-dense}. 
%But before we proceed with the proof, we need some preparation. 

Let $P$ be a point set in the plane such that $\min_{a,b\in P}d(a,b)=1$ (we can always assume this up to rescaling). We define the \emph{minimum bounding box} of $P$ to be a minimum area square with integer-length edges that contains $P$. We further denote the area of the minimum bounding box of $P$ by $B(P)$. Then, we define the \emph{density} of $P$ as $\phi(P)=\frac{|P|}{B(P)}$. This notion of density is more useful for our purposes and is closely related to the original definition of $\alpha$-density. More formally, we have the following. 

\begin{lemma}\label{lem:densesetsaredense}
    Let $P$ be a generic $\alpha$-dense set of $n$ points in the plane for some $\alpha>1$. Then $\phi(P)\ge \frac{1}{\lceil\alpha \rceil^2}$.
\end{lemma}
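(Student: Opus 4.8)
The plan is to relate the two definitions via the diameter of $P$. Recall that $P$ is $\alpha$-dense means the diameter $D$ of $P$ satisfies $D < \alpha\sqrt{n}$, where we have normalized so that $\min_{a,b\in P} d(a,b) = 1$. The minimum bounding box is a square with integer side length containing $P$; since $P$ has diameter $D$, it is contained in some axis-aligned square of side $D$ (indeed even a disk of radius $D$), hence in an axis-aligned square of side $\lceil D\rceil$, and this square can be taken to have integer side length. Therefore $B(P) \le \lceil D\rceil^2$.

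Next I would bound $\lceil D \rceil$. Since $D < \alpha\sqrt{n}$ and, for $n$ large, $\sqrt n$ is large, we have $D < \alpha \sqrt n$, so $\lceil D\rceil \le \lceil \alpha\sqrt n\rceil \le \lceil\alpha\rceil \cdot \lceil \sqrt n\rceil$. To make this completely clean it is easier to observe: $D < \alpha\sqrt n \le \lceil \alpha\rceil \sqrt n$, and $\lceil \alpha\rceil \sqrt n \le \lceil\alpha\rceil \lceil\sqrt n\rceil$; since $\lceil\alpha\rceil\lceil\sqrt n\rceil$ is a positive integer strictly greater than $D$ (or at least $\ge D$), we get $\lceil D\rceil \le \lceil\alpha\rceil\lceil\sqrt n\rceil$. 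Actually since we need $B(P)$ to be an integer square containing $P$, it suffices to exhibit one such square, and a square of side $\lceil\alpha\rceil\lceil\sqrt n\rceil$ (which has integer side) contains $P$ because its side is $\ge D$. Hence $B(P) \le \lceil\alpha\rceil^2 \lceil\sqrt n\rceil^2$.

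Now I combine the estimates. We have
\[
\phi(P) = \frac{n}{B(P)} \ge \frac{n}{\lceil\alpha\rceil^2\lceil\sqrt n\rceil^2}.
\]
It remains to check that $\lceil\sqrt n\rceil^2 \le n \cdot c$ for an appropriate constant; in fact for $n$ a perfect square $\lceil\sqrt n\rceil^2 = n$, and in general $\lceil\sqrt n\rceil \le \sqrt n + 1$, so $\lceil\sqrt n\rceil^2 \le n + 2\sqrt n + 1 \le 4n$ for $n\ge 1$, which would only give $\phi(P)\ge \frac{1}{4\lceil\alpha\rceil^2}$. To get the stated bound $\frac{1}{\lceil\alpha\rceil^2}$ exactly, the cleaner route is to note that $P$ fits in a square of side $\lceil\alpha\sqrt n\rceil$, so $B(P)\le \lceil \alpha\sqrt n\rceil^2$, and then argue $\lceil\alpha\sqrt n\rceil^2 \le \lceil\alpha\rceil^2 n$: indeed $\alpha\sqrt n < \lceil\alpha\rceil\sqrt n$ and one checks $\lceil\alpha\sqrt n\rceil \le \lceil\alpha\rceil\lceil\sqrt n\rceil$ is not quite enough, so instead use that $\alpha \sqrt n \le \sqrt{\lceil\alpha\rceil^2 n}$ and $\lceil\alpha\rceil^2 n$ is already a perfect-square-times-$n$... the point is $\lceil \alpha \sqrt n \rceil \le \lceil \sqrt{\lceil\alpha\rceil^2 n}\rceil$ and when $\lceil\alpha\rceil^2 n$... this still has a ceiling. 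The honest statement is: $P$ lies in an axis-aligned square of side length $D$; rounding up, in one of integer side $\le \lceil D \rceil \le \lceil \alpha\sqrt n\rceil$. Thus $B(P) \le \lceil \alpha\sqrt{n}\rceil^2$ and $\phi(P) \ge n/\lceil\alpha\sqrt n\rceil^2$. I would then simply invoke $\lceil\alpha\sqrt{n}\rceil^2 \le \lceil\alpha\rceil^2 n$, which holds because $\alpha\sqrt n \le \lceil\alpha\rceil\sqrt n = \sqrt{\lceil\alpha\rceil^2 n}$ and hence $\lceil\alpha\sqrt n\rceil \le \lceil\sqrt{\lceil\alpha\rceil^2 n}\rceil$; this last quantity equals $\lceil\alpha\rceil\sqrt n$ exactly only when $n$ is a perfect square. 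So the truly clean bound requires $n$ a perfect square, or an extra $(1+o(1))$ factor in general — this mismatch is the one delicate point, and I expect the main obstacle to be pinning down exactly how the ceilings interact. In the worst case one proves the slightly weaker $\phi(P) \ge \frac{1}{\lceil\alpha\rceil^2(1+o(1))}$ and notes this suffices for all later applications, or strengthens the hypothesis to $D \le \alpha\sqrt n - 1$ (harmless since $\alpha$ can be taken slightly larger) so that $\lceil D\rceil \le \lceil\alpha\rceil\lceil\sqrt n\rceil$ cleanly and then... Anyway, modulo this bookkeeping the lemma follows, and I would present the argument as: bounding box $\subseteq$ square of side $\le\lceil\alpha\sqrt n\rceil\le\lceil\alpha\rceil\lceil\sqrt n\rceil$, then $\phi(P)=n/B(P)\ge n/(\lceil\alpha\rceil\lceil\sqrt n\rceil)^2 \ge 1/\lceil\alpha\rceil^2$ upon also using $\lceil\sqrt n\rceil^2 \le n$... which is false, so the final write-up must instead use that $n$ points each at pairwise distance $\ge 1$ inside a box of area $B(P)$ forces $B(P) \ge$ something, or simply accept the box of integer side $\lceil\alpha\rceil\lceil\sqrt{n}\rceil$ and the resulting constant.
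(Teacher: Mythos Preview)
Your approach is exactly the paper's: from diameter $<\alpha\sqrt{n}$ deduce that $P$ fits in an integer-side square of side $\lceil\alpha\sqrt{n}\rceil$, hence $B(P)\le\lceil\alpha\sqrt{n}\rceil^2$, and then pass to $B(P)\le\lceil\alpha\rceil^2 n$ to conclude $\phi(P)\ge 1/\lceil\alpha\rceil^2$. The paper's proof is three lines and makes precisely the jump you agonise over --- from $\lceil\alpha\sqrt{n}\rceil^2$ to $\lceil\alpha\rceil^2 n$ --- without comment.

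Your hesitation about that step is justified: the inequality $\lceil\alpha\sqrt{n}\rceil^2\le\lceil\alpha\rceil^2 n$ is not universally true (e.g.\ $\alpha=2$, $n=3$ gives $16>12$), so strictly speaking the paper's own proof carries the same $(1+o(1))$ slack you identified. This is immaterial for the downstream use of the lemma, which only needs $\phi(P)$ bounded below by some positive constant depending on $\alpha$ to seed the iteration in the next lemma. So you should stop second-guessing yourself: your plan matches the paper, and the bookkeeping issue you found is a (harmless) imprecision in the paper rather than a flaw in your approach. If you want a clean statement, just write $\phi(P)\ge n/\lceil\alpha\sqrt{n}\rceil^2\ge 1/(\alpha+1)^2$ for $n\ge1$ and move on.
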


\begin{proof}
    Since $P$ is $\alpha$-dense, we know that the diameter of $P$ is at most $\alpha\sqrt{n}$. Thus, $P$ is contained in a square with side length $\lceil\alpha\sqrt{n}\rceil$, which implies that $B(P)\le \lceil\alpha\rceil^2n$. This in turn gives $\phi(P)\ge \frac{1}{\lceil\alpha\rceil ^2}$. \qed
\end{proof}

%The following result was proven by Valtr [\cite{Valtr1992}, Proposition 4.10]. 

%\begin{lemma}[\cite{Valtr1992}, Proposition 4.10] \label{lem:pavel}
 %   For every $\alpha < \frac{\sqrt{2}\sqrt[4]{3}}{\sqrt{\pi}}$ there is a number $k(\alpha)\in \mathbb{N}$ such that there is no set of more than $k(\alpha)$ points in the plane with the minimum distance $1$ and diameter less than $\alpha\sqrt{n}$. 
%\end{lemma}

The following corollary can be obtained by an easy packing argument.

\begin{corollary}\label{cor:densitybounded}
    For any $\delta\ge 3$, there is a number $k(\delta)\in \mathbb{N}$ such that there is no set $P$ of more than $k(\delta)$ points in the plane with $\phi(P) = \delta$.
\end{corollary}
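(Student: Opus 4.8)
\textbf{Proof proposal for Corollary~\ref{cor:densitybounded}.}

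The plan is to show that once $\phi(P)=\delta$ with $\delta\ge 3$, the point set cannot be too large, by combining a lower bound on $B(P)$ coming from the minimum pairwise distance $1$ with the definition of density. First I would recall that, by definition, $\min_{a,b\in P}d(a,b)=1$ (after rescaling), so the open disks of radius $\tfrac12$ centered at the points of $P$ are pairwise disjoint. If $P$ has $n$ points and its minimum bounding box has side length $s=\sqrt{B(P)}$ (an integer), then all these disks are contained in the square of side $s+1$ concentric with the bounding box, giving the packing inequality $n\cdot \pi/4 \le (s+1)^2$, i.e. $n \le \tfrac{4}{\pi}(s+1)^2$.

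Next I would use the density relation $n = \phi(P)\cdot B(P) = \delta s^2$ to eliminate $n$: substituting gives $\delta s^2 \le \tfrac{4}{\pi}(s+1)^2$, hence $\delta \le \tfrac{4}{\pi}\bigl(1+\tfrac1s\bigr)^2$. For $s\ge 1$ the right-hand side is at most $\tfrac{4}{\pi}\cdot 4 = \tfrac{16}{\pi} < 6$, but this only bounds $\delta$, not $n$; the point is that for fixed $\delta\ge 3$ the inequality $\delta s^2 \le \tfrac4\pi (s+1)^2$ forces $s$ to be \emph{small}. Indeed rewriting it as $\bigl(\delta-\tfrac4\pi\bigr)s^2 - \tfrac8\pi s - \tfrac4\pi \le 0$, and noting $\delta-\tfrac4\pi \ge 3-\tfrac4\pi>0$, this quadratic in $s$ has only finitely many integer solutions; the largest admissible $s$ is bounded by some explicit function of $\delta$, and then $n=\delta s^2$ is bounded as well. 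We may therefore take $k(\delta)$ to be (say) the ceiling of $\delta\cdot s_{\max}(\delta)^2$, where $s_{\max}(\delta)$ is the largest integer $s$ satisfying the quadratic inequality; any set with more than $k(\delta)$ points would violate the packing bound, so no such set has density exactly $\delta$.

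The only mild subtlety — and the main thing to get right rather than a genuine obstacle — is the relationship between the bounding box side length, the actual diameter of $P$, and the radius-$\tfrac12$ packing: one must be slightly careful that the disks centered at points of $P$ stay within an $(s+1)\times(s+1)$ square (enlarging the $s\times s$ bounding box by $\tfrac12$ on each side), which is what the ``$+1$'' accounts for. One could alternatively phrase the packing inside the bounding box itself with a boundary correction, but the $(s+1)^2$ version is cleanest. Everything else is the elementary observation that a quadratic inequality with positive leading coefficient bounds its variable, so the corollary follows immediately; this is why the authors describe it as ``an easy packing argument.''
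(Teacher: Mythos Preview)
Your proposal is correct and is precisely the ``easy packing argument'' the paper alludes to but does not spell out: disjoint radius-$\tfrac12$ disks inside the $(s+1)\times(s+1)$ square give $n\le\tfrac4\pi(s+1)^2$, and combining this with $n=\delta s^2$ and $\delta>4/\pi$ bounds $s$ (hence $n$) by an explicit function of~$\delta$. There is nothing to compare, since the paper gives no further details.
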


%Todor: I can add a proof but it's really just using the previous lemma for a suitable alpha. 

For a set of points $P$ and integer $k$ we say that $P$ can \emph{fill in} a $k\times k$ grid if we can place a (possibly rescaled) copy of $P$ inside the square $[0,k]^2$  in such a way that each subsquare of the grid $\{0,1,\dots,k\}\times \{0,1,\dots, k\}\subset \mathbb{Z}\times \mathbb{Z}$ contains at least one point of $P$ in its interior.

%For a set of points $P$ and integer $k$ we say that $P$ can \emph{fill in} a $k\times k$ grid if we can place a (possibly rescaled) copy of $P$ into $\{0,1,\dots,k\}\times \{0,1,\dots, k\}\subset \mathbb{Z}\times \mathbb{Z}$ in such a way that each subsquare of the grid contains at least one point of $P$ in its interior.

\begin{restatable}{lemma}{fullchessboard}
\label{lem:fullchessboard}
Let $P$ be a generic set of points in the plane. If there is a subset $P' \subseteq P$ that can fill in a $101 \times 101$ grid, there is a coloring of $P'$ such that $\redmst$ and $\bluemst$ cross at least once, regardless of how we color the points of $P$ outside the $101 \times 101$ grid.
\end{restatable}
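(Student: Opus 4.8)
Here is how I would approach Lemma~\ref{lem:fullchessboard}.

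The plan is to produce a small ``crossing gadget'' consisting of four points of $P'$ sitting deep inside the grid, to colour its two diagonals with the two colours, and to argue that these two diagonal edges are \emph{forced} into $\redmst$ and $\bluemst$, respectively, no matter how $P\setminus P'$ is coloured. The reason such forcing is even possible is that a central gadget is far from every point of $P\setminus P'$, so those points cannot provide a cheap alternative route for either minimum spanning tree; the real work is then to colour the \emph{rest} of $P'$ so that it also cannot provide such a route.

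Concretely, I would first place the (rescaled) copy of $P'$ inside $[0,101]^2$ so that every unit cell of the grid contains a point of $P'$; we may assume that every point of $P\setminus P'$ lies outside $[0,101]^2$ (otherwise put it into $P'$ and colour it ourselves). Then every point of $P\setminus P'$ is at distance more than $40$ from the central $3\times 3$ block of cells, whose diameter is less than $5$. This central block contains at least $9$ points of $P'$, so by the Erd\H os--Szekeres theorem~\cite{Erds2009} it contains four points $a,b,c,d$ in convex position; label them so that $a,b,c,d$ is their cyclic order, whence the diagonals $ac$ and $bd$ of the quadrilateral cross at an interior point $O$. Colour $a,c$ red and $b,d$ blue.

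It then remains to colour $P'\setminus\{a,b,c,d\}$ so that $ac\in\redmst$ and $bd\in\bluemst$; indeed, since $ac$ and $bd$ cross, this gives a crossing between $\redmst$ and $\bluemst$ at $O$, a point lying deep inside the grid, so the crossing survives every colouring of $P\setminus P'$. To force $ac\in\redmst$ I would use a ``moat'': a simple closed curve $\gamma$ that encloses $a$ but neither $c$ nor any point of $P\setminus P'$, such that every point of $P'$ within distance $|ac|$ of $\gamma$ is coloured blue. Given such a $\gamma$, no chain $a=z_0,z_1,\dots,z_m=c$ of red points with every step of length $<|ac|$ can exist: the first step $z_iz_{i+1}$ that leaves the region bounded by $\gamma$ crosses $\gamma$, so (as that step has length $<|ac|$) its endpoint $z_i$ lies within $|ac|$ of $\gamma$; since $\gamma$ is a small loop near the center, $z_i\in P'$, so $z_i$ is a red point of $P'$ within $|ac|$ of $\gamma$, contradicting the moat condition. (A chain also cannot escape to a red point of $P\setminus P'$, since those are more than $40$ away while $|ac|<5$.) By the standard threshold description of the unique minimum spanning tree, $a$ and $c$ then lie in distinct components of the graph on the red points that uses only edges shorter than $|ac|$, i.e.\ $ac\in\redmst$. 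Symmetrically, a ``red moat'' around $b$ gives $bd\in\bluemst$.

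The main obstacle is building the two moats \emph{simultaneously and consistently}: the blue region around $\gamma$ and the red region around the companion curve $\gamma'$ both approach $O$, so the two colouring rules must not conflict, and the blue points created near $a$ must not become closer to $b$ than $d$ is (and symmetrically). This is exactly where the generous size $101\times 101$ is used: among the $\sim 10^4$ cells there is enough freedom to choose the gadget $a,b,c,d$ favourably -- in particular one whose shorter diagonal is as short as possible -- and to route $\gamma,\gamma'$ through cells that keep the two regions disjoint, after which the remaining points of $P'$ may be coloured arbitrarily. I expect to discharge this last step by a finite case analysis on the combinatorial position of the four gadget points relative to their surrounding cells, possibly invoking a variant of Lemma~\ref{lem-islandsdonotinteract} to control how an edge of either colour can leave a tiny central cluster. (If instead $P'$ happens to contain a \emph{well-separated} subset of at least four points, one can skip all this and simply colour that subset by the colouring supplied by Theorem~\ref{thm-genlowertwocols} or Theorem~\ref{thm-convex}, since the minimum spanning trees of a well-separated subset appear intact inside $\redmst$ and $\bluemst$; the hard case is when $P'$ is, say, a mild perturbation of the grid, which admits no such subset.)
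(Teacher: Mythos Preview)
Your plan has a genuine obstruction that the deferred ``finite case analysis'' will not bypass. The blue moat is impossible as written: since the closed curve $\gamma$ separates $a$ from $c$, it meets the segment $ac$ at a point within distance $<|ac|$ of $a$; hence $a\in P'$ itself lies within distance $|ac|$ of $\gamma$, yet $a$ is red, so the condition ``every point of $P'$ within distance $|ac|$ of $\gamma$ is blue'' already fails. Any repair of the moat (exclude $a,c$ from the condition, impose it only outside $\gamma$, etc.) still amounts to demanding that some neighbourhood of $\{a,c\}$ inside $P'$ be blue, while the companion red moat around $b$ forces a neighbourhood of $\{b,d\}$ inside $P'$ to be red. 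Since all four gadget points sit in one $3\times3$ block in which every unit cell is occupied, these two neighbourhoods overlap on many points of $P'$, and the two colouring rules are incompatible. The size $101$ buys distance from $P\setminus P'$ but does nothing about this local colour conflict inside $P'$.

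The paper avoids the difficulty by colouring only \emph{two} points of $P'$ red. Then, locally, $\redmst$ is the single edge $r_1r_2$ and needs no forcing at all; the remaining work is one-sided. Concretely, the paper first looks at the MST $T_Q$ of the central $21\times21$ block $Q\subset P'$ and follows the path in $T_Q$ from the centre point $q$ until it first exits a slightly smaller inner block $Q'$; whichever side this path exits through, the two red points $r_1,r_2$ are placed in cells flanking that exit so that the long edge $r_1r_2$ (length ${\approx}\,20$) lies transverse to the outgoing path. Since the diameter of $Q$ is smaller than the distance from $Q$ to $P\setminus P'$, this crossing survives any colouring of the outside. The idea you are missing is precisely this asymmetry: make one colour class locally trivial (two points) so that only one tree needs to be controlled, turning a two-sided forcing problem into a one-sided one.
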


\begin{proof}
    We rescale and translate $P$ so that each subsquare of $\{0,1,\dots,101\}\times \{0,1,\dots, 101\}$ contains at least one point of a (possibly rescaled) copy of $P'$ in its interior. The conclusion of the lemma will still hold for the original copy of $P$. 
    %Before starting the proof, $P$ might be rescaled in order for $P'$ to fill in the grid, but our argument still works on the original copy of $P$ (but we might need to use a grid with larger squares). 
    Let $Q\subset P'$ be the set of points in $P'$ contained in $\{40,41,42,\dots,61\}\times \{40,41,42,\dots,61\}$. That is, $Q$ is the set of points inside the ``innermost'' $21\times 21$ square of the $101\times 101$ grid. Further, let $Q'\subset Q$ be the set of points in $Q$ contained in $\{42,43,\dots,59\}\times \{42,43,\dots,59\}$. Lastly, let $q \in Q$ be a point of $Q$ inside $\{50,51\}\times \{50,51\}$, the innermost $1\times1$ subsquare of the grid. Now, let $T[q]$ be an inclusion-minimal subpath of $T_Q$ connecting $q$ with some point in $Q\setminus Q'$. We can assume that $T[q]$ terminates in a point inside the square $\{59,60\}\times\{x,x+1\}$ for some $x \in \{41,42,\dots,59\}$. Let $r_1 \in P \cap\{58,59\} \times \{40,41\}$ and $r_2\in P \cap \{58,59\} \times \{60,61\}$. Let us color $r_1,r_2$ red and all other points of $P'$ blue. Then, $\redmst$ consists of the single edge $r_1r_2$ and this edge crosses $T[q]$, see Figure~\ref{fig-fullchessboard}.
    
    \begin{figure}[ht]
    \centering
    \includegraphics[page = 6]{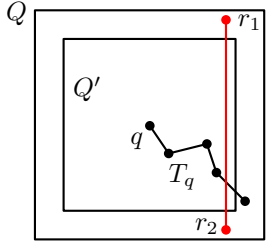}
    \caption{Obtaining a crossing in the proof of Lemma~\ref{lem:fullchessboard}}
    \label{fig-fullchessboard}
\end{figure}
    
    Further, distance between any two points in $Q$ is less than the distance between a point in $Q$ and a point in $P\setminus P'$ so this crossing is present however we color the points outside $P'$.  \qed
\end{proof}

In order to apply Lemma~\ref{lem:fullchessboard}, we need the following key observation.

\begin{lemma} \label{lem:densityimproving}
    Let $P$ be a generic set of $n$ points in the plane such that $\phi(P)=\delta$ and let $k\in \mathbb{N}$ be a positive integer. Then, $P$ fills in a $k\times k$ grid or there is a subset $Q\subset P$ such that $\phi(Q)\ge (1+\frac{1}{k^2-1})\delta$ and $|Q|\ge \frac{1}{k^2-1}|P|$.  
\end{lemma}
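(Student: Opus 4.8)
The statement is a dichotomy: either $P$ already fills in a $k\times k$ grid, or we can find a subset $Q$ whose density is a multiplicative factor larger than that of $P$ while $Q$ is still a $1/(k^2-1)$-fraction of $P$. The natural approach is: suppose $P$ does \emph{not} fill in the $k\times k$ grid; then, no matter how we place (a rescaled copy of) $P$ into $[0,k]^2$, some subsquare of the grid is missing a point in its interior. The plan is to use this ``missing cell'' to isolate a dense sub-square. First I would take the minimum bounding box of $P$, which by definition is an axis-aligned square with integer side length $L$ and area $B(P)=L^2$, so $\delta = n/L^2$. I would overlay on this bounding box a grid of $L/$(appropriate scale) — more precisely, scale so the bounding box is exactly $[0,k]^2$ and look at the $k^2$ unit subsquares; since $P$ does not fill in the grid, at least one unit subsquare $U$ contains no point of $P$ in its interior.

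\textbf{Key steps.} (1) Normalize so the minimum bounding box of $P$ is $[0,k]^2$, hence there are $k^2$ grid cells and $P$ lies in their union; $\phi(P)=\delta$ means (after rescaling back) $n$ points sit in area proportional to $k^2$. (2) Since $P$ fails to fill in the grid, some cell $U$ is empty of interior points of $P$; discard this cell. The remaining $k^2-1$ cells contain all $n$ points. (3) By pigeonhole there is a cell containing at least $n/(k^2-1)$ points — but that alone does not boost density by the right factor, since a single cell has area $1$ out of $k^2$, giving density $\ge \delta k^2/(k^2-1)$, which is even stronger than claimed; the subtlety is that the points in that cell need not span the whole cell, so their minimum bounding box might be much smaller, which only helps. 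Alternatively, and more robustly, one takes $Q$ to be the points in the smallest axis-aligned sub-square (with integer side, per the definition of $B(\cdot)$) that captures a $1/(k^2-1)$-fraction of the points; the empty cell $U$ guarantees that the points cannot be spread uniformly, forcing such a sub-box to have area at most $\frac{k^2-1}{k^2}\cdot(k^2) \cdot \frac{1}{?}$ — I would compute this carefully so that $B(Q) \le \frac{k^2-1}{k^2}\cdot\frac{|Q|}{|P|}\cdot B(P)$, which rearranges exactly to $\phi(Q) = |Q|/B(Q) \ge \frac{k^2}{k^2-1}\cdot\frac{|P|}{B(P)} = (1+\frac{1}{k^2-1})\delta$, together with $|Q|\ge \frac{1}{k^2-1}|P|$.

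\textbf{Main obstacle.} The genuinely delicate point is making the bounding-box bookkeeping tight: the minimum bounding box is required to be a \emph{square with integer side length}, not an arbitrary rectangle, so after discarding the empty cell $U$ one has to argue that the remaining points fit into a square whose integer side length is strictly smaller (or that some natural sub-collection does), and simultaneously retains enough points. The cleanest route is probably an averaging/charging argument over \emph{all} cells: if every cell were ``full enough'' then $P$ would fill in the grid; so at least one cell is deficient, and one sub-square picking up the slack must be correspondingly denser. I would set $Q$ to be $P$ restricted to a well-chosen $(k-1)\times(k-1)$ block of cells (avoiding $U$), check $|Q|\ge \frac{k^2-1}{k^2}n \ge \frac{1}{k^2-1}n$ trivially, note $B(Q)\le (k-1)^2 \le \frac{(k-1)^2}{k^2}B(P)$, and verify $(k-1)^2/k^2 \le \frac{k^2-1}{k^2}\cdot\frac{|Q|}{|P|}$ reduces to an elementary inequality — this is the calculation I would not grind through here but expect to close routinely.
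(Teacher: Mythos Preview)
Your step (3) already \emph{is} the proof, and it is exactly what the paper does: partition the minimum bounding box $B$ into $k^2$ congruent sub-squares; since $P$ does not fill in the $k\times k$ grid, at least one sub-square has empty interior, so the $n$ points are distributed among at most $k^2-1$ sub-squares; by pigeonhole some sub-square $S$ satisfies $|Q|:=|P\cap S|\ge n/(k^2-1)$; and since $S$ has area $B(P)/k^2$ one gets
\[
\phi(Q)\ \ge\ \frac{n/(k^2-1)}{B(P)/k^2}\ =\ \frac{k^2}{k^2-1}\,\delta\ =\ \Bigl(1+\tfrac{1}{k^2-1}\Bigr)\delta.
\]
You compute this yourself and even observe it is ``even stronger than claimed'', so you should stop there.

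The alternatives you then pursue are unnecessary, and the final $(k-1)\times(k-1)$-block route you settle on is actually broken on two counts. First, if the empty cell $U$ lies in an interior row and an interior column, no contiguous $(k-1)\times(k-1)$ block of cells avoids it (the only such blocks are anchored at one of the four corners, and all of them contain every interior cell). Second, even if you could avoid $U$, nothing forces the $2k-1$ cells outside your block to be sparse; they could contain almost all of $P$, so the claimed bound $|Q|\ge\frac{k^2-1}{k^2}n$ simply fails. Your worry about the integer-side-length requirement on $B(Q)$ is a fair technical quibble, but the paper does not address it either: it treats the sub-square area $B(P)/k^2$ as the bound on $B(Q)$ without comment, and for the application (iterating until the density exceeds a fixed constant) the rounding error is harmless.
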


\begin{proof}
Assume that $P$ does not fill in a $k\times k$ grid. Let $B$ be the minimum bounding box of $P$. Then, if we divide $B$ into $k^2$ subsquares of area $\frac{B(P)}{k^2}$, by the pigeonhole principle we can find a square $S$ containing at least $\frac{n}{k^2-1}$ points of $P$. Let $Q=P\cap S$. Then we have
\[ \phi(Q) \ge \frac{\frac{n}{k^2-1}}{\frac{B(P)}{k^2}}= \frac{k^2}{k^2-1}\cdot\frac{n}{B(P)}=(1+\frac{1}{k^2-1})\delta.\]

This finishes the proof. \qed
\end{proof}

\begin{lemma} \label{lem:denseonecrossing}
    Let $P$ be a generic $\alpha$-dense set of $n> k(2)(101^2-1)^{\log_{1+\frac{1}{101^2-1}}(2\alpha^2)}$ points in the plane, for some $\alpha>1$ and $k(2)$ as in Corollary~\ref{cor:densitybounded}. Then, there is a coloring of $P$ such that $\crossAB(R,B)\ge 1$.
\end{lemma}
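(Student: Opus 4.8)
The plan is to iterate Lemma~\ref{lem:densityimproving} with $k = 101$ and argue that the process cannot run forever, so at some stage we must instead be in the "$P$ fills in a $101\times 101$ grid" case, at which point Lemma~\ref{lem:fullchessboard} produces the desired crossing. First I would set $\delta_0 = \phi(P) \ge \frac{1}{\lceil\alpha\rceil^2}$ (Lemma~\ref{lem:densesetsaredense}; note we only need $\phi(P)>0$, which is automatic). Starting from $P_0 = P$, as long as $P_i$ does not fill in a $101\times 101$ grid, Lemma~\ref{lem:densityimproving} gives a subset $P_{i+1}\subseteq P_i$ with $\phi(P_{i+1}) \ge \bigl(1+\tfrac{1}{101^2-1}\bigr)\phi(P_i)$ and $|P_{i+1}| \ge \tfrac{1}{101^2-1}|P_i|$. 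Hence after $t$ steps (if we never hit the grid-filling case) we have $\phi(P_t) \ge \bigl(1+\tfrac{1}{101^2-1}\bigr)^t \phi(P_0)$ and $|P_t| \ge (101^2-1)^{-t}\,n$.

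The key point is a bound on how many iterations are possible. By Corollary~\ref{cor:densitybounded}, once $\phi(P_t) \ge 3$ the set $P_t$ has at most $k(3) \le k(2)$ points (using monotonicity of the $k(\cdot)$ from the corollary, or just invoking it at the value of density actually reached; I would take $\delta = 2\alpha^2 \ge 3$ since $\alpha>1$, which is exactly the exponent appearing in the hypothesis). The density reaches $2\alpha^2$ after at most $T := \log_{1+\frac{1}{101^2-1}}\!\bigl(2\alpha^2 / \phi(P_0)\bigr) \le \log_{1+\frac{1}{101^2-1}}(2\alpha^2)$ steps, using $\phi(P_0)\ge 1$ (which we may assume after noting $\phi(P)\ge 1$ is not guaranteed — so instead I would more carefully observe that $\phi(P_0)=\delta_0$ and the number of steps to exceed $2\alpha^2$ is at most $\log_{1+1/(101^2-1)}(2\alpha^2/\delta_0)$; since the hypothesis on $n$ is stated with the cruder bound $\log_{1+1/(101^2-1)}(2\alpha^2)$, I should check whether $\delta_0\ge 1$; in fact for $n$ large the $\alpha$-dense set has $B(P)\le n$, giving $\phi(P)\ge 1$, so this is fine). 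Thus if the iteration ran for more than $T$ steps we would obtain a set $P_T$ with $\phi(P_T)\ge 2\alpha^2 \ge 3$ and hence $|P_T|\le k(2)$; but $|P_T| \ge (101^2-1)^{-T}n \ge (101^2-1)^{-\log_{1+1/(101^2-1)}(2\alpha^2)}\, n > k(2)$ by the hypothesis on $n$, a contradiction. Therefore the iteration must terminate within $T$ steps because some $P_i$ fills in a $101\times 101$ grid.

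Once some $P_i\subseteq P$ fills in a $101\times 101$ grid, Lemma~\ref{lem:fullchessboard} (applied with $P' = P_i$) provides a coloring of $P_i$ so that $\redmst$ and $\bluemst$ cross at least once regardless of the colors of the remaining points of $P$; coloring the rest of $P$ arbitrarily we get $\crossAB(R,B)\ge 1$, as required.

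The main obstacle I anticipate is bookkeeping the constants so that the chain of inequalities $|P_T| > k(2)$ is genuinely implied by the stated bound on $n$ — in particular making sure the base density $\phi(P_0)$ is at least $1$ (so that the cruder exponent $\log_{1+1/(101^2-1)}(2\alpha^2)$ in the hypothesis suffices rather than $\log_{1+1/(101^2-1)}(2\alpha^2/\phi(P_0))$), and confirming that $2\alpha^2\ge 3$ so that Corollary~\ref{cor:densitybounded} applies. These are routine but must be stated carefully; everything else is a clean iteration of the two lemmas.
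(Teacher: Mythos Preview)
Your approach is exactly the paper's: iterate Lemma~\ref{lem:densityimproving} with $k=101$, argue via Corollary~\ref{cor:densitybounded} that the process must terminate in the grid-filling branch within a bounded number of steps, and then apply Lemma~\ref{lem:fullchessboard} to produce the crossing (coloring the remaining points arbitrarily).

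The one concrete slip is your claim that $\phi(P_0)\ge 1$: Lemma~\ref{lem:densesetsaredense} only gives $\phi(P)\ge 1/\lceil\alpha\rceil^2$, and there is no reason an $\alpha$-dense set with $\alpha>1$ should satisfy $B(P)\le n$ (the bounding box may have side length up to $\lceil\alpha\sqrt{n}\rceil$). Consequently the number of iterations needed to push the density past a fixed threshold $c$ is $\log_{1+1/(101^2-1)}\bigl(c\,\lceil\alpha\rceil^2\bigr)$ rather than $\log_{1+1/(101^2-1)}(c)$. The paper's own proof is equally loose on this point --- it switches to $3\alpha^2$ and $k(3)$ where the statement has $2\alpha^2$ and $k(2)$, and even then the density reached, $3\alpha^2/\lceil\alpha\rceil^2$, need not exceed $3$ --- so this is a shared constant-bookkeeping wobble, not a defect in the strategy; the fix in either case is simply to adjust the exponent in the hypothesis on $n$.
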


\begin{proof}
    Since $P$ is $\alpha$-dense, we know by Lemma~\ref{lem:densesetsaredense} that $\phi(P)\ge \frac{1}{\lceil\alpha \rceil^2}$.
    We repeatedly apply Lemma~\ref{lem:densityimproving} to $P$ with $k=101$. If, in some iteration, we obtain a point set that fills in a $101 \times 101$ grid, we apply Lemma~\ref{lem:fullchessboard} on this set and obtain a crossing. Note that in the coloring from Lemma~\ref{lem:fullchessboard}, all points inside the set except for two are colored blue.  We color the rest of $P$ blue as well.

    It remains to argue that it is not possible
to apply Lemma~\ref{lem:densityimproving} to $P$, with $k=101$, $\log_{1 + \frac{1}{101^2-1}}(3\alpha^2)$ times and never obtain a point set that fills in a $101 \times 101$ grid. Suppose, for the sake of contradiction, that it is possible. We would obtain  a point set $Q$ with $\phi(Q)>3$ and $|Q|>cn$, for $c = (101^2-1)^{-\log_{1 + \frac{1}{101^2-1}}(3\alpha^2)}$. However, this is  impossible if $n>k(3)/c$, where $k(3)$ is the value from Corollary~\ref{cor:densitybounded}. \qed
\end{proof}

We are now ready to prove Theorem~\ref{thm-dense}. 

\dense*
\begin{proof}
    Since $P$ is $\alpha$-dense, it is contained in a square $Q$ of side-length $\alpha\sqrt{n}$. We define $r = k(2) \cdot (101^2-1)^{\log_{1 + \frac{1}{101^2-1}}(2\alpha^2)}$  and $x=\sqrt{r} \alpha^2$. 
    We next split $Q$ into $\frac{n\alpha^2}{x^2}= \frac{n}{r\alpha^2}$ cells of side-length $x$. By a standard argument for dense point sets (see Lemmas 12 and 13 in \cite{DumitrescuPach}, for example), we get that, if $n$ is sufficiently large, there are $\Omega(n)$ cells with at least $r$ points inside each of them; we call these cells \emph{rich}. We choose a subset of rich cells in such a way that if a particular cell is chosen, none of the 24 cells closest to it will be chosen. In this way, we can again choose $\Omega(n)$ rich cells. Finally, for each rich cell  $C$, $|P\cap C| \ge r$ and the diameter of $|P\cap C|$ is at most $\sqrt{2r}\alpha^2$.  Thus, $P\cap C$ is $\sqrt{2}\alpha^2$-dense and we can apply Lemma~\ref{lem:denseonecrossing} to obtain a crossing inside each of the chosen rich cells. Note that, in the obtained coloring, in each such cell exactly two points are red. Finally, we color the rest of the set $P$ blue. This finishes the proof.  \qed
\end{proof}

\section{Concluding remarks}

To the best of our knowledge, we were the first to study the bicolored MST crossing number of a set of points (under that or any other name) since Kano, Merino and Uruttia \cite{KanoMU05}. There are many questions still left to explore about this parameter, we mention some of the most interesting ones here. 

The first and most important problem, in our opinion, is improving the lower bound for generic point sets. Currently, the best lower bound is $1$, given by Theorem~\ref{thm-genlowertwocols}. This is very far from the linear upper bound given by Theorem~\ref{thm-genupperbound}. In fact, we believe that Theorem~\ref{thm-genupperbound} is asymptotically tight.  

\begin{conjecture} \label{conj: general}
    Let $P$ be a generic set of $n$ points in the plane. Then $\cross(P) \in \Omega(n)$. 
\end{conjecture}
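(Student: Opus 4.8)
\medskip
\noindent\textbf{A proof strategy.} The most promising route I see is to mimic the mechanism behind Theorems~\ref{thm-dense} and~\ref{thm-random}: locate many pairwise ``well-separated'' subsets of $P$, color all of $P$ blue except for two red points inside each such subset, and show that every subset then forces a crossing confined to it, so the crossings are all distinct. Call a subset $Q\subseteq P$ \emph{separated} if $\operatorname{diam}(Q) < d(Q,P\setminus Q)$. As in the proof of Lemma~\ref{lem:fullchessboard}, separation guarantees that $T_Q$ is a subtree of the MST of whichever color class contains all of $Q$, so the local combinatorial structure of $Q$ survives inside the two global MSTs; hence, if in addition $Q$ is ``rich'' enough to apply Lemma~\ref{lem:fullchessboard} (equivalently, if $Q$ is $\alpha$-dense with sufficiently many points so that Lemma~\ref{lem:denseonecrossing} applies), then activating $Q$ by recoloring two of its points red produces a crossing that no other activated subset can reuse. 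So first I would aim to prove the purely geometric statement: \emph{every generic $n$-point set contains $\Omega(n)$ pairwise disjoint separated rich subsets.}

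To manufacture separated subsets I would use the MST $T_P$ of the whole set as a scaffold. For a length threshold $t$, deleting from $T_P$ all edges longer than $t$ breaks $P$ into components that are pairwise at distance more than $t$ (a short argument via the cycle property of $T_P$), while each component is spanned by edges of length at most $t$. I would run this over a geometric ladder of thresholds --- or perform an explicit hierarchical clustering --- and at each scale harvest those components that are simultaneously compact (diameter much smaller than their pairwise distances, which is what actually delivers the strong separation $\operatorname{diam}(Q) < d(Q,P\setminus Q)$, at the cost of insisting on a polynomial gap between the intra- and inter-cluster scales) and rich, handing the points that stay ``lonely'' at the current scale down to the next, finer scale. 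If at some scale a constant fraction of the surviving points lands in separated rich components we obtain a linear bound and are done; otherwise $P$ is ``spread out'' at every scale, and I would try to win directly from structure: a set that is everywhere spread out ought to contain large subconfigurations that are convex and nearly collinear, to which Theorem~\ref{thm-convex}, Lemma~\ref{lem-convex-flat} and Theorem~\ref{thm-grids} apply and already yield a linear number of crossings.

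The hard part --- and the reason this is still only a conjecture --- is fusing these two regimes without leaking a logarithmic factor, precisely the loss that caps Theorem~\ref{thm-genlowerthreecols} at $\Omega(\log^2 n)$: the clustering there discards all but a $1/100$ fraction of the points at each of $\Theta(\log n)$ levels, and each level contributes only $\Theta(\log n)$ crossings because Erd\H os--Szekeres returns only a logarithmic convex subset. A linear bound would require either (i) a decomposition in which almost all of $P$, rather than a $1/\mathrm{poly}$ fraction, lies in separated rich pieces, or (ii) a way to extract a \emph{linear} number of crossings out of a ``non-clusterable'' configuration, presumably by proving that such a configuration contains a linear-sized subset that is flat-convex or $2\times\frac n2$-grid-like in the sense of Lemma~\ref{lem-convex-flat} and Theorem~\ref{thm-grids}. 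I do not see how to do either in full generality, and I suspect an eventual proof will need a genuinely new structural dichotomy for generic planar point sets --- local density versus global ``thinness'' --- perhaps routed through the Delaunay triangulation, of which every MST is a subgraph, so as to turn the combinatorial complexity exploited in Theorem~\ref{thm-genlowertwocols} into a linear supply of mutually non-interfering crossings.
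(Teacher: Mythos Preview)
This statement is a \emph{conjecture} in the paper, not a theorem: the paper offers no proof and explicitly lists it as the ``first and most important problem'' left open. So there is nothing to compare your proposal against. You yourself acknowledge this --- you call it ``the reason this is still only a conjecture'' and write ``I do not see how to do either in full generality'' --- so your submission is really a discussion of possible approaches, not a proof attempt.

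That said, a brief remark on content. Your outline correctly isolates the obstacle: the hierarchical-clustering scheme behind Theorem~\ref{thm-genlowerthreecols} loses a constant fraction of points per level over $\Theta(\log n)$ levels, and each level contributes only logarithmically many crossings via Erd\H{o}s--Szekeres, which is why that theorem tops out at $\Omega(\log^2 n)$. Your proposed dichotomy (either harvest linearly many separated rich clusters, or the set is ``spread out at every scale'' and hence contains a linear flat-convex subconfiguration) is a reasonable heuristic, but neither branch is currently a theorem: there is no known structural result guaranteeing that a non-clusterable generic set contains a linear-size flat or convex subset, and indeed Erd\H{o}s--Szekeres is tight at $\Theta(\log n)$ for convex subsets in general position. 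The paper's own remarks point in a somewhat different direction --- they note that the natural \emph{bipartite coloring} of $T_P$ fails (via the triangular chain example of~\cite{JMS24}) and suggest as an intermediate goal proving a linear bound for $\max_{Q\subseteq P}\cross(Q)$, i.e., with deletions allowed --- but they do not propose a concrete mechanism either. In short: your strategy is plausible but has a genuine gap at exactly the point you identify, and the paper does not fill it.
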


We remark that proving any reasonable lower bounds for $\cross(P)$ seems to require some deeper insights into the behavior of the MST of $P$ (and subsets of $P$); any significant improvement in this direction would be very interesting. A natural approach to consider is the concept of \textit{bipartite coloring}, as introduced in~\cite{JMS24}. This involves considering the MST of the point set and coloring the points with two colors such that every edge of MST connects points of different colors. Intuitively, bipartite colorings tend to produce well-distributed colorings across the point set. However, this method does not guarantee an interesting lower bound on the number of crossings between the two resulting MSTs. In fact, the \textit{triangular chain} example from \cite{JMS24} shows that the two MSTs in a bipartite coloring can have no crossings at all.
An easier problem in the same direction would be to prove a linear lower bound when we are allowed to discard points first, that is, to prove a lower bound for $\max\{\cross(Q):Q\subseteq P\}$.

In the other direction, it would be interesting to improve the constant from Lemma~\ref{lem-constantlymanycrossings} and determine its exact value. We remark that this constant is at least $6$, as is witnessed by the configuration in Figure~\ref{fig:5edgescrossingshort}.  

\begin{figure}[h]
    \centering
    \includegraphics[scale = 0.7,page=1]{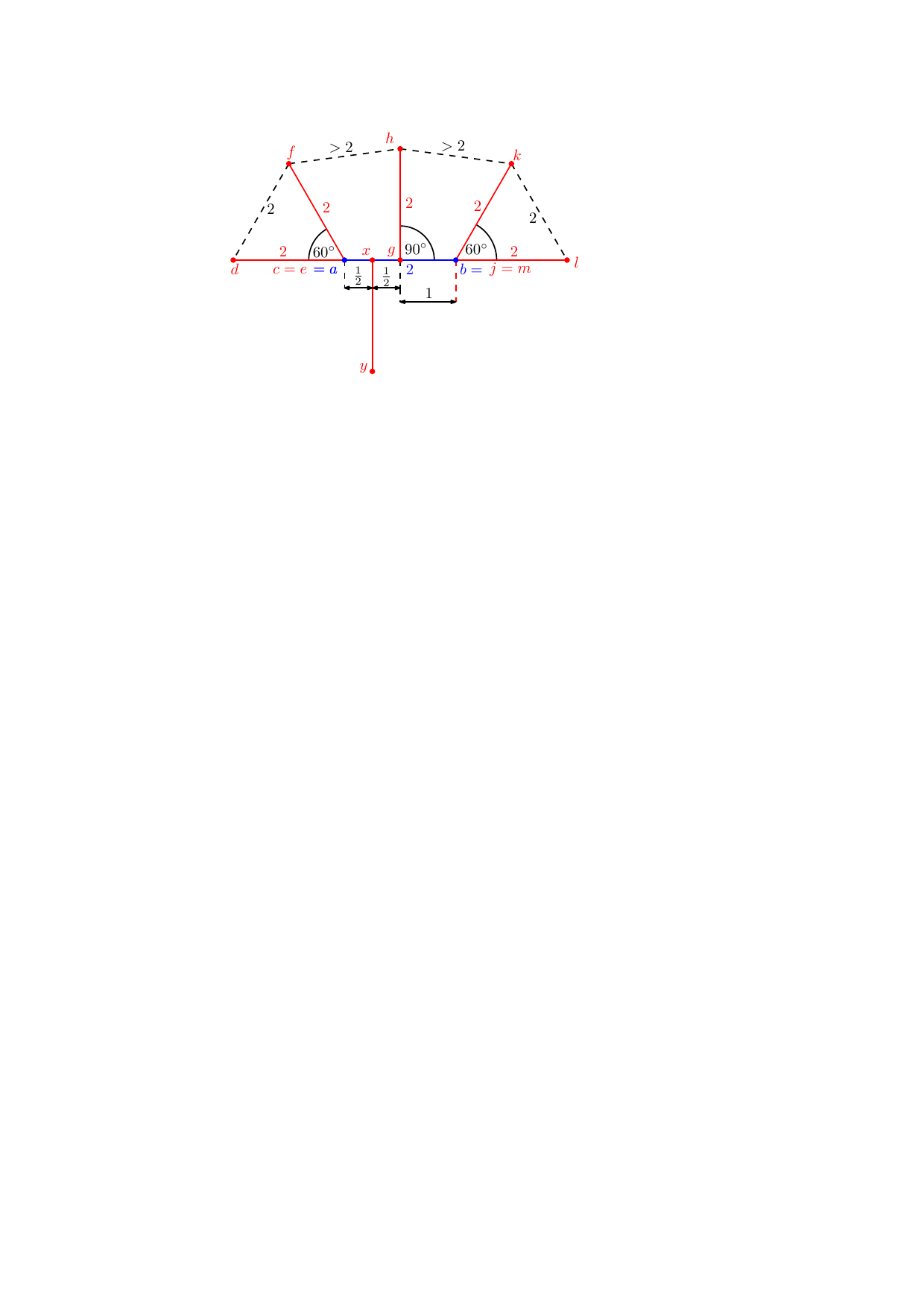}    \caption{Configuration in which $5$ red edges cross a shorter blue edge. To realize this as a generic set we need to perturb slightly. After perturbing, $ab$ should be slightly shorter than all red segments; $xy$ should cross $ab$ with the longer part of $xy$ lying below $ab$, while the other five red segments will cross $ab$ and have their longer parts lying above $ab$. }
    \label{fig:5edgescrossingshort}
\end{figure}

Another interesting open problem is to determine the exact minimum value of $\cross(P)$ when $P$ is generic and in convex position. Currently, Theorem~\ref{thm-convex} and Lemma~\ref{lem-n2-grid-tight-2} tell us that this number lies somewhere in between $\frac{n}{2}$ and $\frac{5n}{8}$. We are unsure which (if any) of these bounds is tight. However, we do remark that we believe that the lower bound of $\frac{n}{2}$ would be the easier one to improve. 

For the sake of clarity, in most of our theorems, we assumed that the point set is generic. However, most of them also hold if we drop this assumption (but still require general position); it is only needed to handle some technical details of the proofs with more care. The only exceptions are Theorems~\ref{thm-grids} (which does not hold in the non-generic setting, see Lemma \ref{lem-nonUniqueMST}), and Theorem~\ref{thm-dense}.
Our proof of Theorem~\ref{thm-dense} heavily relies on the assumption that the minimum spanning trees of the point set and all of its subsets are unique. The main bottleneck is Lemma~\ref{lem:fullchessboard}, the proof of which does not hold without this assumption. It is natural to ask if Theorem~\ref{thm-dense} can be generalized to non-generic point sets using a different approach. 

From a complexity-theoretic point of view, it would be interesting to determine whether the bicolored MST crossing number of a given point set can be computed in polynomial time. We conjecture that this is not possible.
\begin{conjecture}
    Finding the bicolored MST crossing number of a point set in the plane is NP-hard.
\end{conjecture}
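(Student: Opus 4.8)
The plan is to first observe that the decision version of the problem—given $P$ and an integer $k$, decide whether $\cross(P)\ge k$—lies in NP, since a witnessing coloring $P=R\cup B$ can be verified in polynomial time by computing $T_R$ and $T_B$ and counting their crossings. It therefore suffices to exhibit a polynomial-time reduction from a known NP-hard problem, and I would reduce from \textsc{Max-Cut} (or from the symmetric \textsc{Not-All-Equal 3-Sat}, whose global sign-flip symmetry matches the $R\leftrightarrow B$ symmetry of the coloring).

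The central tool would be Lemma~\ref{lem-islandsdonotinteract}: by placing points in tiny clusters (\emph{islands}) that are pairwise far apart and confined to narrow wedges, each color class's MST is forced to connect every island to the rest of the configuration by exactly one edge. This lets the global MST—which is otherwise hopeless to control directly—decompose into a within-island part plus a sparse \emph{backbone} of inter-island connecting edges, so that $\crossAB(R,B)$ becomes the sum of internal crossings contributed by each island plus a bounded, controllable number of backbone crossings. This is the same separation principle already exploited in the proof of Theorem~\ref{thm-genlowerthreecols}.

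With this localization in hand, I would design two gadgets. A \emph{variable gadget} is a constant-size convex island engineered so that its internal crossing count is maximized by exactly two internal colorings, which are color-swaps of each other and which I interpret as the truth values of a Boolean variable; every other internal coloring strictly loses internal crossings, forcing each variable gadget into a legal state in any optimal coloring. A \emph{clause/edge gadget} then links variable gadgets, routing the backbone so that the connecting edges of the two colors cross precisely when the incident variables disagree (in the \textsc{Max-Cut} framing). Choosing a planar source instance would let me route these links without spurious wire-crossings, and the plane gives ample freedom to place islands in convex position so that the backbone of each color is itself a path whose crossing pattern can be read off directly. Finally I would calibrate the gadget sizes so that the total crossing number equals a fixed baseline plus exactly the objective value of the source instance, giving $\cross(P)\ge k$ if and only if the source instance has value at least some $k'$.

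The step I expect to be the main obstacle is controlling the backbone. Because each color admits only a single global MST, the inter-island connecting edges are forced upon us, and their crossings must be predicted and charged precisely; a misbehaving gadget could in principle cheat by trading a lost internal crossing for an unintended backbone crossing elsewhere. Correctness therefore hinges on a robustness (gap) argument showing that the internal optimum of each variable gadget strictly dominates any crossing a deviating coloring could recover along the backbone, so that the local gadget signal cannot be swamped by global MST rearrangements. Establishing such a gap rigorously, while keeping all distances generic and all islands well-separated as Lemma~\ref{lem-islandsdonotinteract} requires, is where the real difficulty lies.
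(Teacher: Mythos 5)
The statement you were asked to prove is not proved in the paper at all: it appears only as a conjecture in the concluding remarks, so there is no proof of the authors' to compare against, and your text must be judged as a standalone argument. As such, it is a research plan rather than a proof. The two load-bearing components --- the variable and clause gadgets, and the robustness (gap) argument --- are described only by the properties you would like them to have; no construction is exhibited, and you yourself identify the gap argument as the unresolved obstacle. An NP-hardness proof is exactly these missing pieces, so the proposal does not establish the statement.

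Beyond incompleteness, there is a concrete reason the plan as written would fail. Lemma~\ref{lem-islandsdonotinteract} controls the MST of one \emph{fixed} point set in which all far points lie in a single $3.6^\circ$ wedge seen from the island; in the reduction you must control $T_R$ and $T_B$ for \emph{every} coloring $P=R\cup B$, since $\cross(P)$ is a maximum over colorings and the ``optimizer'' is adversarial. In particular, nothing prevents a deviating coloring that makes one color class sparse and spread out, so that its MST consists of long inter-island edges; a single such edge can cross $\Theta(n)$ edges of the other tree (compare the alternating coloring in Theorem~\ref{thm-convex}), so your assertion that the backbone contributes ``a bounded, controllable number'' of crossings is unsupported, and the constant-size signal of a variable gadget can indeed be swamped --- the very cheating you flag, but with no mechanism to exclude it. Note also that the wedge hypothesis of Lemma~\ref{lem-islandsdonotinteract} is geometrically incompatible with a two-dimensional layout of many islands needed to route a planar \textsc{Max-Cut} instance: in the paper's own use of the lemma (Theorem~\ref{thm-genlowerthreecols}) the islands are nested along a sequence of narrow wedges, and all points violating the wedge condition are \emph{discarded} (colored green), an option that does not exist here because $\cross(P)$ requires coloring all of $P$. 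So the localization principle you borrow does not transfer to this setting without a substantially new idea, and the statement remains, as in the paper, open.
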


\paragraph*{Acknowledgments.} 
T. {Anti\'c}, M. Saumell, F. Schr\"oder and P. Valtr were supported by grant no. 23-04949X of the Czech Science Foundation (GA\v{C}R). T.{Anti\'c} was additionally supported by  GAUK
grant SVV–2025–26082  J. Tkadlec was supported by GAČR project 25-17221S and Charles Univ.\ projects UNCE 24/SCI/008, PRIMUS 24/SCI/012.
%TO DO IN FINAL VERSION: add acknowledgments

\bibliography{bibliography}

@article{KanoMU05,
  author       = {Mikio Kano and
                  Criel Merino and
                  Jorge Urrutia},
  title        = {On plane spanning trees and cycles of multicolored point sets with
                  few intersections},
  journal      = {Inf. Process. Lett.},
  volume       = {93},
  number       = {6},
  pages        = {301--306},
  year         = {2005},
  doi          = {10.1016/J.IPL.2004.12.003},
  timestamp    = {Thu, 14 Oct 2021 09:41:03 +0200},
  biburl       = {https://dblp.org/rec/journals/ipl/KanoMU05.bib},
  bibsource    = {dblp computer science bibliography, https://dblp.org}
}

@article{Valtr1992,
  title = {Convex independent sets and 7-holes in restricted planar point sets},
  volume = {7},
  ISSN = {1432-0444},
  url = {},
  DOI = {10.1007/bf02187831},
  number = {2},
  journal = {Discrete \& Computational Geometry},
  publisher = {Springer Science and Business Media LLC},
  author = {Valtr,  Pavel},
  year = {1992},
  month = feb,
  pages = {135–152}
}

@article{Binnewies18,
    author = {Mikhail Binnewies et. al.},
    title = {Understanding the tumor immune microenvironment (TIME) for effective therapy},
    journal = {Nature Medicine Journal},
    year = 2018,
    pages = {541-550}
}

@article{Hei12,
    author = {J.L. Maître and H. Berthoumieux and S.F. Krens and G. Salbreux and F. Jülicher and E. Paluch and C.P. Heisenberg},
    title = {Adhesion functions in cell sorting by mechanically coupling the cortices of adhering cells},
    journal = {Science},
    year = {2012},
    pages = {253–256}
}

@article{DumitrescuPach,
author = {Dumitrescu, Adrian and Pach, J\'{a}nos and T\'{o}th, G\'{e}za},
title = {Two Trees Are Better than One},
journal = {SIAM Journal on Discrete Mathematics},
volume = {39},
number = {3},
pages = {1883-1893},
year = {2025},
doi = {10.1137/24M1685365},

URL = { 
    
        https://doi.org/10.1137/24M1685365
    
    

},
eprint = { 
    
        https://doi.org/10.1137/24M1685365
    
    

}
}

@article{PFRT22,
    author = {G.\ Palla and D. S.\ Fischer and A.\ Regev and F. J.\ Theis.},
    title = {Spatial components of molecular tissue biology},
    journal = {Nature Biotechnology},
    year = 2022,
    pages = {308-318}
}

@InProceedings{CDES24,
  author =	{Cultrera di Montesano, Sebastiano and Draganov, Ond\v{r}ej and Edelsbrunner, Herbert and Saghafian, Morteza},
  title =	{{The Euclidean MST-Ratio for Bi-Colored Lattices}},
  booktitle =	{ GD 2024},
  pages =	{3:1--3:23},
  series =	{LIPIcs},
  ISBN =	{978-3-95977-343-0},
  ISSN =	{1868-8969},
  year =	{2024},
  volume =	{320},
  publisher =	{Schloss Dagstuhl -- Leibniz-Zentrum f{\"u}r Informatik},
  address =	{Dagstuhl, Germany},
  URL =		{},
  URN =		{urn:nbn:de:0030-drops-212878},
  doi =		{10.4230/LIPIcs.GD.2024.3},
  annote =	{Keywords: Minimum spanning Trees, Steiner Ratio, Lattices, Partitions}
}

@misc{JMS24,
      title={{On the MST-ratio: Theoretical Bounds and Complexity of Finding the Maximum}}, 
      author={Afrouz Jabal Ameli and Faezeh Motiei and Morteza Saghafian},
      year={2025},
      eprint={2409.11079},
      archivePrefix={arXiv},
      primaryClass={cs.CG},
      url={}, 
}

@inbook{Erds2009,
  title = {A Combinatorial Problem in Geometry},
  ISBN = {9780817648428},
  url = {},
  DOI = {10.1007/978-0-8176-4842-8_3},
  booktitle = {Classic Papers in Combinatorics},
  publisher = {Birkh\"{a}user Boston},
  author = {Erd\"{o}s,  P. and Szckeres,  G.},
  year = {2009},
  pages = {49–56}
}

@misc{FriedmanURL,
  title= {Circles Covering Circles},
author={Erich Friedman},
  url = {https://erich-friedman.github.io/packing/circovcir/},
year={2019}
}

@article{kano2021discrete,
  title={Discrete geometry on colored point sets in the plane—a survey},
  author={Kano, Mikio and Urrutia, Jorge},
  journal={Graphs and Combinatorics},
  volume={37},
  number={1},
  pages={1--53},
  year={2021},
  publisher={Springer}
}

@article{Edelsbrunner1997,
  title = {Cutting dense point sets in half},
  volume = {17},
  ISSN = {1432-0444},
  url = {http://dx.doi.org/10.1007/PL00009291},
  DOI = {10.1007/pl00009291},
  number = {3},
  journal = {Discrete \& Computational Geometry},
  publisher = {Springer Science and Business Media LLC},
  author = {Edelsbrunner,  Herbert and Valtr,  Pavel and Welzl,  Emo},
  year = {1997},
  month = apr,
  pages = {243–255}
}

@article{Alon1989,
  title = {The maximum size of a convex polygon in a restricted set of points in the plane},
  volume = {4},
  ISSN = {1432-0444},
 
  DOI = {10.1007/bf02187725},
  number = {3},
  journal = {Discrete \& Computational Geometry},
  publisher = {Springer Science and Business Media LLC},
  author = {Alon,  Noga and Katchalski,  Meir and Pulleyblank,   William R.},
  year = {1989},
  month = jun,
  pages = {245–251}
}

@article{Bukh2025,
  title = {Convex polytopes in restricted point sets in $\mathbb{R}^d$},
  ISSN = {2517-5599},
  
  DOI = {10.19086/aic.2025.1},
  journal = {Advances in Combinatorics},
  publisher = {Alliance of Diamond Open Access Journals},
  author = {Bukh,  Boris and Zichao,  Dong},
  year = {2025},
  month = jan 
}

@misc{DumiToth2022,
  doi = {10.48550/ARXIV.2205.03437},
  
  author = {Dumitrescu,  Adrian and T{\'{o}}th,  Csaba D.},
  keywords = {Combinatorics (math.CO),  Computational Geometry (cs.CG),  FOS: Mathematics,  FOS: Mathematics,  FOS: Computer and information sciences,  FOS: Computer and information sciences},
  title = {Finding Points in Convex Position in Density-Restricted Sets},
  publisher = {arXiv},
  year = {2022},
  copyright = {Creative Commons Attribution 4.0 International}
}

@inproceedings{DumiPachGD,
  doi = {10.4230/LIPICS.GD.2024.9},
  author = {Dumitrescu,  Adrian and Pach,  J{\'{a}}nos},
  keywords = {Convexity,  complete geometric Graph,  crossing Family,  plane Subgraph,  Mathematics of computing → Discrete mathematics,  Theory of computation → Randomness,  geometry and discrete structures},
  language = {en},
  title = {Partitioning Complete Geometric Graphs on Dense Point Sets into Plane Subgraphs},
  publisher = {Schloss Dagstuhl – Leibniz-Zentrum f\"{u}r Informatik},
  year = {2024},
  copyright = {Creative Commons Attribution 4.0 International license},
  booktitle =	{GD 2024},
  pages        = {9:1--9:10},
  series =	{LIPIcs},
   year =	{2024},
  volume =	{320},
}

@article{Kovcs2019,
  title = {Dense Point Sets with Many Halving Lines},
  volume = {64},
  ISSN = {1432-0444},

  DOI = {10.1007/s00454-019-00080-3},
  number = {3},
  journal = {Discrete \& Computational Geometry},
  publisher = {Springer Science and Business Media LLC},
  author = {Kov{\'{a}}cs,  Istv{\'{a}}n and T{\'{o}}th,  Géza},
  year = {2019},
  month = mar,
  pages = {965–984}
}

@article{Valtr1996,
  title = {Lines,  line-point incidences and crossing families in dense sets},
  volume = {16},
  ISSN = {1439-6912},
  url = {http://dx.doi.org/10.1007/BF01844852},
  DOI = {10.1007/bf01844852},
  number = {2},
  journal = {Combinatorica},
  publisher = {Springer Science and Business Media LLC},
  author = {Valtr,  Pavel},
  year = {1996},
  month = jun,
  pages = {269–294}
}

\end{document}